\definecolor{myurlcolor}{rgb}{0,0,0.7}
 \definecolor{BLACK}{gray}{0}
 \definecolor{WHITE}{gray}{1}
 \definecolor{RED}{rgb}{1,0,0}
 \definecolor{GREEN}{rgb}{0,1,0}
 \definecolor{BLUE}{rgb}{0,0,1}
 \definecolor{CYAN}{cmyk}{1,0,0,0}
 \definecolor{MAGENTA}{cmyk}{0,1,0,0}
 \definecolor{YELLOW}{cmyk}{0,0,1,0}
\newcolumntype{K}[1]{>{\centering\arraybackslash}m{#1}}
\theoremstyle{plain}
\newtheorem{thm}{\protect\theoremname}
\theoremstyle{plain}
\newenvironment{proof}[1][\protect\proofname]{\par
\normalfont\topsep6\p@\@plus6\p@\relax
\trivlist
\itemindent\parindent
\item[\hskip\labelsep
\scshape
#1]\ignorespaces
}{%
\endtrivlist\@endpefalse
}
\providecommand{\proofname}{Proof}
\theoremstyle{plain}
\newtheorem{lem}[thm]{\protect\lemmaname}
\newtheorem{theorem}[thm]{\protect\theoremname}
\newtheorem{defi}[thm]{\protect\definitionname}
\providecommand{\lemmaname}{Lemma}
\providecommand{\theoremname}{Theorem}
\providecommand{\definitionname}{Definition}
\providecommand{\propositionname}{Proposition}
\renewcommand\qedsymbol{$\blacksquare$}
\newenvironment{proof-of}[1][{\hspace{-\blank}}]{{\medskip\noindent\textit{Proof~{#1}.\ }}}{\hfill\qedsymbol}
\newcommand{\tr}{{\operatorname{Tr\,}}}
\newcommand{\Tr}{{\operatorname{Tr\,}}}
\newcommand{\id}{{\mathbb{I}}}
\newcommand{\e}{{\operatorname{e}}}
\newcommand{\dd}{{\operatorname{d}}}
\newcommand{\nc}{\newcommand}
\nc{\rnc}{\renewcommand}
\nc{\cT}{{\cal T}}
\nc{\ox}{\otimes}
\nc{\cH}{{\cal H}}
\nc{\cM}{{\cal M}}
\DeclareMathOperator*{\argmin}{arg\,min}
\def\ket#1{| #1 \rangle}
\def\bra#1{\langle  #1 |}
\def\proj#1{| #1 \rangle\!\langle #1 |}
\nc{\ketbra}[2]{|#1\rangle\!\langle#2|}
\nc{\avg}[1]{\langle#1\rangle}
\newcommand{\norm}[1]{\left\| #1 \right\|}
\newcommand{\ar}[1]{{#1}}
\newcommand{\aw}[1]{{#1}}
\newcommand{\EP}{entropy preserving }
\newcommand{\CP}{completely passive }
\begin{document}

\title{{Thermodynamics as a Consequence of \phantom{===========} Information Conservation}}

\author{Manabendra Nath Bera}
\orcid{0000-0002-8329-2656}
\email{mnbera@gmail.com}
\affiliation{ICFO -- Institut de Ciencies Fotoniques, The Barcelona Institute of Science and Technology, ES-08860 Castelldefels, Spain}
\affiliation{Max-Planck-Institut f\"ur Quantenoptik, D-85748 Garching, Germany}

\author{Arnau Riera}
\orcid{0000-0002-3271-7802}
\affiliation{ICFO -- Institut de Ciencies Fotoniques, The Barcelona Institute of Science and Technology, ES-08860 Castelldefels, Spain}
\affiliation{Max-Planck-Institut f\"ur Quantenoptik, D-85748 Garching, Germany}

\author{Maciej Lewenstein} 
\orcid{0000-0002-0210-7800}
\affiliation{ICFO -- Institut de Ciencies Fotoniques, The Barcelona Institute of Science and Technology, ES-08860 Castelldefels, Spain}
\affiliation{ICREA, Pg.~Lluis Companys 23, ES-08010 Barcelona, Spain} 

\author{Zahra Baghali Khanian}
\orcid{0000-0002-0892-7519}
\affiliation{ICFO -- Institut de Ciencies Fotoniques, The Barcelona Institute of Science and Technology, ES-08860 Castelldefels, Spain}
\affiliation{Departament de F\'isica: Grup d'Informaci\'o Qu\`antica, Universitat Aut\`onoma de Barcelona, ES-08193 Bellaterra (Barcelona), Spain}

\author{Andreas Winter}
\orcid{0000-0001-6344-4870}
\email{andreas.winter@uab.cat}
\affiliation{ICREA, Pg.~Lluis Companys 23, ES-08010 Barcelona, Spain}
\affiliation{Departament de F\'isica: Grup d'Informaci\'o Qu\`antica, Universitat Aut\`onoma de Barcelona, ES-08193 Bellaterra (Barcelona), Spain}

\begin{abstract}
Thermodynamics and information have intricate \aw{interrelations}. Often thermodynamics is considered to be the logical premise to justify that {\it information is physical} -- through Landauer's principle --, thereby also \aw{linking} information and thermodynamics. This approach towards information has been instrumental to understand thermodynamics of logical and physical processes, both in \aw{the classical and quantum domain}. In \aw{the present} work, we formulate thermodynamics as an exclusive consequence of information conservation. The framework can be applied to \aw{the} most general situations, beyond the traditional assumptions in thermodynamics: \aw{we allow} systems and \aw{thermal baths} to be quantum, of arbitrary sizes and even possessing inter-system correlations. 

Here, systems and baths are not treated differently, rather both are considered on \aw{an} equal footing. This leads us to introduce a ``temperature''-independent formulation of thermodynamics. We rely on the fact that, for a \aw{fixed} amount of information, measured by the von Neumann entropy, any system can be transformed to a state \aw{with the same entropy} that possesses minimal energy. This state, known as a {\it completely passive} state, acquires Boltzmann--Gibbs canonical form with an {\it intrinsic temperature}. We introduce the notions of bound and free energy and use them to quantify heat and work, respectively. 
Guided by \aw{the} principle of information conservation, we develop universal notions of equilibrium, heat and work, Landauer's principle and universal fundamental laws of thermodynamics. We demonstrate that the maximum efficiency of a quantum engine with a finite bath is in general lower than that of an ideal \aw{Carnot} engine. We introduce a resource theoretic framework for our {\it intrinsic temperature} based thermodynamics, within which we address the problem of work extraction and state transformations. 
Finally, the framework is extended to multiple conserved quantities.  
\end{abstract}


\maketitle

\section{Introduction}
Thermodynamics constitutes one of the basic foundations of modern science. It not only plays an important role in modern technologies, but offers \aw{a} basic understanding of \aw{a} vast range of natural phenomena. Initially, thermodynamics was developed \aw{phenomenologically}, to address the question \aw{of} how, and to what extent, heat can be converted into work. 
\aw{Later, with the developments of statistical mechanics, quantum mechanics and relativity, thermodynamics along with its fundamental laws was put on a formal and mathematically rigorous footing \cite{Gemmer09}. Beyond its original domain of conception, it has been applied in a wide range of contexts. It has been used to describe relativistic phenomena in astrophysics and cosmology, quantum effects in microscopic systems, or very complex systems in biology and chemistry, and even in music \cite{FS:heat-and-work}.}

The inter-relation between information and thermodynamics \cite{Parrondo15} is intricate, and has been studied in the context of Maxwell's demon \cite{Maxwell08, Leff90, Leff02, Maruyama09}, Szilard's engine \cite{Szilard29}, and Landauer's principle \cite{Landauer61, Bennett82, Plenio01, Rio11, Reeb14}. 
In recent years, classical and quantum information-theoretic approaches helped to understand thermodynamics in the domain of small classical and quantum systems \cite{Shannon48, Nielsen00, Cover05}.  
That stimulated a whole new perspective to tackle and extend thermodynamics beyond the standard classical domain. 
In fact, information theory has recently played an important role in understanding thermodynamics in the presence of inter-system and system-bath correlations \cite{Alicki13, Marti15, Bera16}, equilibration processes \cite{Short11, Goold16, Rio16, Gogolin16}, \aw{and the} foundations of statistical mechanics \cite{Popescu06}.
One of the most paradigmatic examples of this success is the formulation of quantum thermodynamics within the \aw{so-called} resource theoretic framework \cite{Brandao13}, which allows to reproduce standard thermodynamics in the asymptotic limit, when one processes infinitely many copies of the system under consideration. In the \aw{finite-copy, or even one-shot limit}, the resource theory reveals that the laws of thermodynamics have to be modified to \aw{capture finite-size, as well as quantum effects in thermodynamics} 
\cite{Dahlsten11,Aberg13,Horodecki13,Skrzypczyk14,Brandao15,Cwiklinski15,Lostaglio15,Egloff15,Lostaglio15a}. 

In \aw{the present} work, elaborating in the \aw{interrelations} between information and thermodynamics, we \aw{perform} an axiomatic construction of thermodynamics and identify the ``information conservation'' as the crucial underlying property of any theory that respects it. 
The \aw{cornerstone} of our construction is the notion of \emph{bound energy}, which we introduce as the amount of energy locked in a system that cannot be accessed (extracted) \aw{by using the} set of allowed operations. 
The bound energy obviously depends on the set of allowed operations: 
the larger this set is, the smaller the bound energy. 
We prove that, by taking ({\it i}) global \emph{entropy preserving operations} as the set of allowed operations and ({\it ii}) infinitely large thermal baths initially uncorrelated \aw{with} the system, our formalism reproduces standard thermodynamics \aw{in its resource theoretic form}.

In information theory, the \aw{von} Neumann entropy is the quantity that measures the amount of information in a system.
In this sense, entropy preserving operations are transformations that keep this information constant.
All fundamental physical theories, such as classical and quantum mechanics, share the property of conserving information \aw{at the basic level}. \aw{Namely, they have dynamics that is} deterministic and bijectively maps the set of possible configurations between any two instants of time. Thus, non-determinism can only appear when some degrees of freedom are ignored, leading to apparent information loss. In classical physics, this information loss is due to deterministic chaos and mixing in nonlinear dynamics. In quantum mechanics, this loss is intrinsic, and occurs due to measurement processes and non-local correlations \cite{BeraPhilo16}. \aw{Note} that the set of entropy preserving operations is larger than the dynamically reversible operations (unitaries) in the sense that
they conserve entropy but, unlike unitaries, not the individual probabilities. In other words, we demand only \emph{coarse-grained}, not microscopic information conservation. 
In the limit of many copies both coarse-grained and fine-grained information conservation become equivalent (see Sec.~\ref{sec:EPoperations}). While it is known that linear operations that are entropy preserving for all states are unitary \cite{Hulpke2006}, it is an open question to what extent coarse-grained
information conserving operations can be implemented in the single-copy \aw{setting}. 

The resource theory of thermodynamics can then be seen as one where condition ({\it i}) is sharpened, i.e. an extension of thermodynamics from coarse-grained to fine-grained information conservation, where the operations are global unitaries but still constrained to infinitely large thermal baths with a well defined temperature. 
Fluctuation theorems can also be \aw{considered} from this perspective.
There, the second law is obtained as a consequence of reversible transformations on initially thermal states or states with a well defined temperature \cite{Jarzynski2000, Esposito09, Esposito10, Strasberg17}.
In contrast, the aim of our work is instead to relax condition ({\it ii}), that is,
to generalize thermodynamics to be
valid for arbitrary environments, irrespective of being thermal, or much larger than the system.
This idea is illustrated in the table below.

\begin{center}
{\small
\begin{tabular}{K{1.6cm}K{2.7cm}K{2.7cm}}
& {\bf \phantom{Unitaries} Unitaries \phantom{Unitaries} (fine-grained information conservation)}  & {\bf Entropy preserving operations (coarse-grained information conservation)} \\
\hline
{\bf Large thermal bath}  & Resource theory of thermodynamics & Standard thermodynamics \\
\hline
{\bf Arbitrary environment} & ? & \aw{\emph{Present paper}} \\
\hline
\end{tabular}}
\end{center}
\vspace{-.25cm}\hspace{.3cm}
\vspace{.3cm}

The main obstruction against this generalization \aw{lies in} the fact that, when allowing for arbitrary states as environment, large amounts of resources can be pumped into the system leading to trivial ``resource'' theories. 
We are able to circumvent this problem thanks to the notion of bound energy, which intrinsically distinguishes accessible and 
\aw{non-accessible} energy.
Our formalism results in a theory in which systems and environments are treated \aw{on an} equal footing,
or in other words, in a ``temperature''-independent formulation of thermodynamics. 

Our work is complementary to other general approaches, where thermodynamics \aw{is} obtained after inserting some form of thermal state(s)
in general mathematical expressions, e.g. \cite{Sagawa2012}. 
While in these works the mathematical expressions for arbitrary states have no \emph{a priori} thermodynamic meaning, our construction is \aw{built} on a physically motivated quantity, the bound energy. 

\aw{The present temperature}-independent thermodynamics is essential in contexts in which the state of the bath can be affected by the system after exchange of heat (see \cite{Vulpiani2017} for a review on the notion of temperature). This can be due to the fact of \aw{either} having a relatively small environment compared to the system, or an environment simply not being thermal. In fact, in current experiments, environments do not have to be necessarily thermal, but can possess quantum coherence or correlations.

The entropy preserving operations make all states
with equal energies and entropies thermodynamically equivalent.
This allows for representing all the states and thermodynamic processes
in a simple energy-entropy diagram.
We exploit this geometric approach and give a diagrammatic representation for heat, work and other thermodynamic quantities. 
In this way we are able to reproduce several results of the literature, e.g. the resource theory of thermodynamics applicable for arbitrary quantum systems and environments \cite{Sparaciari16}.
Our formalism \aw{extends naturally} to scenarios with multiple conserved quantities.

\medskip
\aw{The structure of the rest of the paper is as follows: In the following
Section \ref{sec:EPoperations}, we discuss the class of \EP operations, on which
our theory is built. Then, in Section \ref{sec:bound+free}, we introduce
the fundamental notions of bound and free energy, and in Section \ref{sec:diagram}
define the energy-entropy diagram of a system. After that, we develop thermodynamics,
from the zeroth law (Section \ref{sec:zeroth}) and a discussion of the max-entropy
vs. min-energy principle (Section \ref{sec:athermality}), to the first law (Section \ref{sec:first})
and the second law of thermodynamics (Section \ref{sec:second}).
In Section \ref{sec:third-law} we discuss if, and why not, a third law can be 
formulated in our setting. 
Then, in Section \ref{sec:resource}, we move to the development of a resource theory
based on the previous formalism, which allows us to consider transformation rates 
between states and visualize the results in the energy-entropy diagram.
In Section \ref{sec:charges}, we outline how the theory so far developed
generalizes to the case of multiple conserved quantities.
Finally, we summarize our findings in Section \ref{sec:discuss}, and discuss the main obstructions
towards an extension of thermodynamics that would be valid both in
the single-shot scenario and for non-thermal environments.}

\section{Entropy preserving operations, entropic equivalence class and intrinsic temperature}
\label{sec:EPoperations}
The set of operations that we consider in this framework is the set of, so called, \emph{entropy preserving operations}.
Given a system initially in a state $\rho$, the set of entropy preserving operations are all the operations
that arbitrarily change the state, but keep its entropy constant
\begin{equation}
\rho \to \sigma \text{ s.t. } S(\rho)=S(\sigma) ,
\end{equation}
where $S(\rho)\coloneqq - \tr \rho \log \rho$ is the von Neumann entropy.
Importantly, an operation that acts on $\rho$ and produces a state with the same entropy, 
not necessarily preserves entropy when acting on other states.
In fact, such entropy preserving operations are in general not linear, since
they have to be \aw{constrained} to some input state. 
It was shown in Ref. \cite{Hulpke2006} that a quantum channel $\Lambda(\cdot)$ that
preserves entropy and, at the same time, respects
linearity, i.~e.\ $\Lambda(p \rho_1+ (1-p)\rho_2)=p\Lambda(\rho_1)+ (1-p)\Lambda(\rho_2)$,
has to be a unitary.

However, the entropy preserving operations can be microscopically described by global unitaries
in the limit of many copies \cite{Sparaciari16}.
Given any two states $\rho$ and $\sigma$ with equal entropies
$S(\rho)=S(\sigma)$, there exists an additional system 
of $O(\sqrt{n\log n})$ ancillary qubits and a global unitary $U$ such that,
\aw{as $n\rightarrow\infty$,}
\begin{equation}\label{eq:entropy-pres-micro}
  \left\|\tr_{\textrm{anc}}\left(U \rho^{\otimes n}\otimes\eta U^{\dagger}\right)-\sigma^{\otimes n}\right\|_1 \rightarrow 0,
\end{equation}
where the partial trace is performed on the ancillary qubits. 
Here, $\|X\|_1\coloneqq \tr \sqrt{X^\dagger X}$ is the one-norm. 
As shown in \cite[Thm.~4]{Sparaciari16}, the reverse statement is 
also true. In other words, if two states are related as in Eq.~\eqref{eq:entropy-pres-micro}, 
then they also have equal entropies. 

It is important to restrict entropy preserving operations that are also energy preserving, as we shall see later. The energy and entropy preserving operations can also be implemented using a global energy preserving unitary in the many-copy limit. More explicitly, in \cite[Thm.~1]{Sparaciari16}, it is shown that having two states $\rho$ and $\sigma$ with equal entropies and energies, i.e. ($S(\rho)=S(\sigma)$ and $E(\rho)=E(\sigma)$), is equivalent to the existence of some energy preserving $U$ and an additional system $A$ with $O(\sqrt{n\log n})$ of ancillary qubits with Hamiltonian $\norm{H_A}\leqslant O(n^{2/3})$ in some state $\eta$, for which \eqref{eq:entropy-pres-micro} is fulfilled.
The operator norm $\norm{X}$ of a Hermitian operator $X$ \aw{is} the largest of its eigenvalues in absolute value.
Note that the amount of energy and entropy of the ancillary system per copy vanishes in the large $n$ limit.

We expect entropy preserving operations to be also implemented in other ways than taking the
limit of many copies.
For instance, in Refs.~\cite{Wilming2017,Mueller2017},
thermal operations are extended to a class of operations, in which a
catalyst is allowed to build up correlations with the system.
For these operations,
the standard Helmholtz free energy singles out as the monotone that 
establishes the possible transitions between states, in contrast to the case of strict thermal operations, in which all the R\'enyi $\alpha$-free energies are required.
This suggests that entropy preserving operations could also be implemented
with a single copy by means of a catalyst that can become correlated with the system. 
Further investigation in this direction is needed.

\aw{Now that} the entropy preserving operations have been motivated and introduced, 
let us classify the set of states of a system in different equivalence classes depending 
on their entropy. Thereby, we establish a hierarchy of states according to their information content. 

\begin{defi}[Entropic equivalence class]
Two states $\rho$ and $\sigma$ on any quantum system of dimension $d$ are equivalent 
and belong to the same entropic equivalence class if and only if
both have the same von Neumann entropy,
\begin{equation} 
\rho \sim \sigma \ \text{ iff } \ S(\rho)=S(\sigma) .
\end{equation}
\end{defi}
\aw{We call such two states, $\rho$ and $\sigma$ \emph{iso-entropic}.} 
Assuming that the system has some fixed Hamiltonian $H$, one can take as a representative element of every class the state that \emph{minimizes} the energy within it, i.e.,
\begin{equation}
\gamma(\rho)\coloneqq\argmin_{\sigma \, : \, S(\sigma)=S(\rho)} E(\sigma) ,
\end{equation}
where $E(\sigma)\coloneqq \tr H \sigma$ is the energy of the state $\sigma$.

The maximum-entropy principle \cite{Jaynes57a, Jaynes57b} identifies the thermal state as the state that 
maximizes the entropy for a given energy. Conversely, one can show that, for a given entropy, 
the thermal state also minimizes the energy. We refer to this complementary property as 
\emph{min-energy principle} \cite{Pusz78, Lenard78, Alicki13}. \aw{To be precise, this duality holds for all finite inverse temperatures, $\beta < \infty$, corresponding to energies
strictly above the ground state energy and entropies strictly larger then the ground state
entropy.}
Thus, the min-energy principle identifies thermal states as the representative elements of every class, 
which is 
\begin{equation}\label{eq:thermal-state}
\gamma(\rho) =  \frac{\e^{-\beta(\rho) H}}{\tr \left(\e^{-\beta(\rho) H}\right)} .
\end{equation}
The inverse temperature $\beta(\rho)$ is the parameter that labels the equivalence class, 
to which the state $\rho$ belongs. We denote $\beta(\rho)$ as the intrinsic inverse temperature associated to $\rho$. 

The state $\gamma(\rho)$ is often termed the completely passive (CP) state \cite{Alicki13}. Note, in this article, we denote the CP states, synonymously, with the $\gamma(H, \beta)$, $\gamma(\rho)$,  $\gamma(\beta)$ and $\gamma(T)$, where $H$ is the Hamiltonian of system in the state $\rho$, and $\beta=1/T$ is the inverse of the temperature $T$. The $\gamma(H, \beta)$ is exclusively used in the cases where the Hamiltonian $H$ of the system is fixed. Moreover, the baths, in thermal equilibrium at certain temperature, are also CP states, and we sometime denote these thermal baths with $\gamma(T)$ and $\gamma(\beta)$.

The \CP state of the form $\gamma(H_S,\beta_S)$ has the following interesting properties \cite{Pusz78, Lenard78}:
\emph{
\begin{itemize}
\item[(P1)]{For a given entropy, it minimizes the energy.}
\item[(P2)]{\aw{Energy (entropy) monotonically increases (decrease) with the decrease (increase) in $\beta_S$.}}
\item[(P3)]{For non-interacting Hamiltonians \aw{and identical $\beta_T$}, $H_T=\sum_{X=1}^N \mathbb{I}^{\otimes X-1} \otimes H_X \otimes \mathbb{I}^{\otimes N-X}$, the joint complete passive state is \aw{the} tensor product of \aw{the} individual ones, i.e., $\gamma(H_T, \beta_T)=\otimes_{X=1}^N \gamma(H_X, \beta_T)$. \cite{Pusz78, Lenard78}.}
\end{itemize}
}


\section{Bound and free energies}
\label{sec:bound+free}
Let us now identify two relevant forms of internal energy: the free and the bound energy. 
The bound energy is \aw{motivated as} the amount of internal energy that cannot be accessed 
in the form of work. 
\aw{We emphasize once more} that it is a notion that \aw{depends} on the set of allowed operations. 
For the set of \EP operations, \aw{from} which the entropic classes and \CP states emerge, 
it is quantified \aw{as follows.}

\begin{defi}[Bound energy]
\label{defi:bound-energy}
For a state $\rho$ with the system Hamiltonian $H$, the \emph{bound energy} in it is
\begin{equation}\label{eq:bound-energy-def}
  B(\rho)\coloneqq  \min_{\sigma \, : \, S(\sigma)=S(\rho)} E(\sigma).
\end{equation}
By the previous discussion, $B(\rho)=E(\gamma(\rho))$,
where $\gamma(\rho)$ is the \CP state, with minimum energy, 
within the equivalence class to which $\rho$ belongs.
\end{defi}

Indeed, $B(\rho)$ is the amount of energy that cannot be extracted further, 
by exploiting any entropy preserving operations, as guaranteed by the min-energy principle. 
The above definition of bound energy also has strong connection with information 
content in the state. It can be easily seen that one could have access to this energy 
(in the form of work) only, if it allows an outflow of information from the system and vice versa.

In contrast to bound energy, free energy is the part of the internal energy 
that can be accessed with entropy preserving operations.

\begin{defi}[Free energy]
\label{defi:free-energy}
For a system $\rho$ with system Hamiltonian $H$, the \emph{free energy} stored in the system 
is given by
\begin{equation}\label{eq:free-energy-def}
  F(\rho) \coloneqq E(\rho)-B(\rho),
\end{equation}
where $B(\rho)$ is \aw{the} bound energy in $\rho$.
\end{defi}

Note that the free energy as defined in Eq.~\eqref{eq:free-energy-def} does not have a 
preferred temperature, unlike the standard \aw{out-of-equilibrium} Helmholtz free energy 
$F_T(\rho)\coloneqq E(\rho)-T\, S(\rho)$, where the temperature $T$ is decided beforehand 
with \aw{the} choice of a thermal bath. 
Nevertheless, our definition of free energy can be written in terms of both the relative 
entropy and the \aw{out-of-equilibrium} free energy as
\begin{equation}
  F(\rho) = T (\rho) D(\rho \| \gamma(\rho))
          = F_{T(\rho)}(\rho)-F_{T(\rho)}\left(\gamma(\rho)\right) ,
\end{equation}
for the intrinsic temperature $T(\rho)\coloneqq \beta(\rho)^{-1}$ that labels the equivalence class that contains $\rho$. 
Here the relative entropy is defined as  $D(\rho \| \sigma)=\tr \left( \rho \log \rho - \rho \log \sigma \right)$. 
Let us mention that the standard \aw{out-of-equilibrium} free energy is also denoted by 
$F_{\beta}(\rho)\coloneqq E(\rho)-\beta^{-1}\, S(\rho)$ in the rest of the manuscript, wherever we find it more convenient.  

The notions of bound and free energy, as we consider above, can be extended beyond 
single systems. If fact, in \aw{composite} systems they exhibit several interesting properties. 
For example, they can capture the presence and/or absence of inter-party correlations. 
To highlight these features, we consider a bipartite system below. 

\begin{lem}[Bound and free energy properties]
Given a bipartite system with non-interacting Hamiltonian 
$H_A\otimes \id + \id \otimes H_B$ in an arbitrary state $\rho_{AB}$ 
with marginals $\rho_{A/B}\coloneqq \tr_{B/A}(\rho_{AB})$. 
Then, the bound and the free energy \aw{satisfy} the following properties:
\begin{enumerate}
\item[(P4)] Bound energy and correlations:
\begin{equation}\label{eq:bound-energy-corr}
 B(\rho_{AB})\le B (\rho_A\otimes \rho_B) .
\end{equation}
\item[(P5)] Bound energy of composite systems:
\begin{equation}\label{eq:bound-energy-composite}
B (\rho_A\otimes \rho_B)\le B (\rho_A)+B (\rho_B) .
\end{equation}
\item[(P6)] Free energy and correlations:
\begin{equation}\label{eq:free-energy-corr}
F(\rho_A\otimes \rho_B)\le F(\rho_{AB}) .
\end{equation}
\item[(P7)] Free energy of composite systems:
\begin{equation}\label{eq:free-energy-composite}
F(\rho_A)+F(\rho_B)\le F(\rho_A\otimes \rho_B) .
\end{equation}
\end{enumerate}
\aw{Ineqs.}~\eqref{eq:bound-energy-corr} and \eqref{eq:free-energy-corr} \aw{are saturated with equality}
if and only if $A$ and $B$ are uncorrelated, i.e.~$\rho_{AB}=\rho_A\otimes\rho_B$. 
\aw{Ineqs.}~\eqref{eq:bound-energy-composite} and \eqref{eq:free-energy-composite} 
become equalities if and only if $\beta (\rho_A)=\beta(\rho_B)$.
\end{lem}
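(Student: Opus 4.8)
The plan is to reduce all four inequalities, together with their equality cases, to two elementary ingredients. First, since the Hamiltonian is non-interacting, energy is additive: $E(\rho_{AB}) = \tr H_A\rho_A + \tr H_B\rho_B = E(\rho_A)+E(\rho_B) = E(\rho_A\otimes\rho_B)$ for any joint state with the given marginals. Second, for a fixed Hamiltonian the bound energy $B(\rho)=E(\gamma(\rho))$ depends on $\rho$ only through $S(\rho)$, and by property (P2) it is a non-decreasing function of the entropy, strictly increasing on the open range between the ground-state entropy and $\log d$. Granting these, (P6) and (P7) are purely algebraic: using additivity of energy, $F(\rho_{AB})-F(\rho_A\otimes\rho_B) = B(\rho_A\otimes\rho_B)-B(\rho_{AB})$ and $F(\rho_A\otimes\rho_B)-F(\rho_A)-F(\rho_B) = B(\rho_A)+B(\rho_B)-B(\rho_A\otimes\rho_B)$, so (P6) is equivalent to (P4) and (P7) to (P5), with the equality conditions transferring verbatim. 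Hence it suffices to prove (P4) and (P5).

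For (P4): subadditivity of the von Neumann entropy gives $S(\rho_{AB}) \le S(\rho_A)+S(\rho_B) = S(\rho_A\otimes\rho_B)$, and monotonicity of $B$ in its entropy argument yields $B(\rho_{AB}) \le B(\rho_A\otimes\rho_B)$. For (P5): the state $\gamma(\rho_A)\otimes\gamma(\rho_B)$ is iso-entropic to $\rho_A\otimes\rho_B$, since $S(\gamma(\rho_A)\otimes\gamma(\rho_B)) = S(\rho_A)+S(\rho_B) = S(\rho_A\otimes\rho_B)$, so it is an admissible competitor in the variational problem defining $B(\rho_A\otimes\rho_B)$; by additivity of energy, $B(\rho_A\otimes\rho_B) \le E(\gamma(\rho_A)\otimes\gamma(\rho_B)) = E(\gamma(\rho_A))+E(\gamma(\rho_B)) = B(\rho_A)+B(\rho_B)$.

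For the equality cases: in (P4) (hence (P6)), strict monotonicity of $B$ in the entropy gives $B(\rho_{AB})=B(\rho_A\otimes\rho_B)$ iff $S(\rho_{AB})=S(\rho_A\otimes\rho_B)$, i.e.\ iff the mutual information $I(A\!:\!B)$ vanishes, which is equivalent to $\rho_{AB}=\rho_A\otimes\rho_B$. In (P5) (hence (P7)), equality forces $\gamma(\rho_A)\otimes\gamma(\rho_B)$ to be an energy-minimiser at its entropy level; since that minimiser is unique and equal to the completely passive state $\gamma(H_{AB},\beta)$ for the corresponding intrinsic $\beta$, property (P3) gives $\gamma(\rho_A)\otimes\gamma(\rho_B)=\gamma(H_A,\beta)\otimes\gamma(H_B,\beta)$, and matching marginals forces $\beta(\rho_A)=\beta=\beta(\rho_B)$. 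Conversely, if $\beta(\rho_A)=\beta(\rho_B)=:\beta$, then by (P3) $\gamma(\rho_A)\otimes\gamma(\rho_B)=\gamma(H_A,\beta)\otimes\gamma(H_B,\beta)=\gamma(H_{AB},\beta)$ is itself completely passive and iso-entropic to $\rho_A\otimes\rho_B$, hence equals $\gamma(\rho_A\otimes\rho_B)$, so (P5) and (P7) hold with equality.

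The inequalities themselves are essentially bookkeeping with additivity of energy and subadditivity of entropy; the main obstacle I expect is making the equality statements literally correct. This requires invoking that the energy-minimiser at fixed entropy is unique and equals the Gibbs state (the dual of the Jaynes max-entropy principle), and handling the degenerate boundary situations --- $\beta\to 0$ or $\beta\to\infty$, or a local Hamiltonian proportional to the identity --- where the intrinsic temperature may be ill-defined or the energy-entropy curve is flat; in all such cases the marginals are forced to be product or the CP states coincide trivially, so the stated "if and only if" still holds.
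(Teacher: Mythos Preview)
Your proof is correct and follows essentially the same route as the paper: monotonicity of $B$ in the entropy plus subadditivity of the von Neumann entropy for (P4), the tensor product $\gamma(\rho_A)\otimes\gamma(\rho_B)$ as an admissible iso-entropic competitor for (P5), and additivity of the energy for non-interacting Hamiltonians to reduce (P6) and (P7) to (P4) and (P5). The one minor difference is in the equality case of (P5): the paper forward-references its later zeroth-law material (Definition~\ref{def:equilibrium} and Lemma~\ref{lm:betaAB}), whereas you argue directly from (P3) together with uniqueness of the Gibbs minimiser at fixed entropy, which is more self-contained and arguably cleaner; your explicit flagging of the degenerate boundary situations is also more careful than the paper's own treatment.
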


\begin{proof}
To prove {\it (P4)}, we have used the fact that, for a fixed Hamiltonian, the bound energy 
is monotonically increasing with \aw{the} entropy, i.e.
\begin{equation}
B(\rho) < B(\sigma) \text{ iff } S(\rho)< S(\sigma)\, ,
\end{equation}
together with the \aw{subadditivity} property $S(\rho_{AB})\le S (\rho_A\otimes \rho_B)$
\aw{of the von Neumann entropy}. 

The proof of {\it (P5)} relies on the definition of the bound energy, where one exploits entropy preserving 
operations to minimize \aw{the} energy. \aw{Namely, the iso-entropic equivalence relation is
stable under tensor product: if $\rho_A\sim\sigma_A$ and $\rho_B\sim\sigma_B$, then
$\rho_A\otimes\rho_B \sim \sigma_A\otimes\sigma_B$, by the additivity of the entropy.}
This immediately implies Eq.~\eqref{eq:bound-energy-composite}. 
To see that the inequality \eqref{eq:bound-energy-composite} is saturated iff the states \aw{have} 
identical intrinsic temperatures $\beta (\rho_A)=\beta(\rho_B)$,
we appeal to Definition \ref{def:equilibrium} and Lemma \ref{lm:betaAB}.

The other properties can be easily proven by noting that the total internal energy is 
sum of \aw{local} ones: $E(\rho_{AB})=E(\rho_A)+E(\rho_B)$, irrespective of inter-system correlations. 
\aw{Then, with the definition of free energy $F(\rho_X)=E(\rho_X)-B(\rho_X)$, and properties 
{\it (P4)} and {\it (P5)}, we easily arrive at {\it (P6)} and {\it (P7)}.}
\end{proof}

The above properties allow us to give an additional operational meaning to the free energy $F(\rho)$.
Let us \aw{recall} first that the work that can be extracted from a system in a state $\rho$ 
when having at our disposal an infinitely large bath at inverse temperature $\beta$ 
and global entropy preserving operations, is given by
\begin{equation}\label{eq:work-extracted-infinite-bath}
  W = F_\beta (\rho) - F_\beta (\gamma(\beta)) ,
\end{equation}
where $F_\beta (\rho)=E(\rho)-\beta^{-1}S(\rho)$ is the standard free energy with respect to 
the inverse temperature $\beta$ and 
$\gamma(\beta)$ is the thermal state.

\begin{lem}[Free energy vs. $\beta$-free energy]
\label{lem:free-energy-beta}
The free energy $F(\rho)$ corresponds to work extracted by using a bath at
the ``worst'' possible temperature:
\begin{equation}
  F(\rho)=\min_\beta \left(F_\beta (\rho) - F_\beta (\gamma(\beta))\right) .
\end{equation}
\aw{In words, the free energy of $\rho$ is the minimum extractable work
with respect to arbitrary temperature baths.}
The inverse temperature that achieves the minimum is the inverse
intrinsic temperature $\beta(\rho)$.
\end{lem}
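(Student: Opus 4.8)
The plan is to turn the claim into a one–dimensional minimization over $\beta$ and dispatch it by elementary calculus. Write $Z_\beta\coloneqq\tr\e^{-\beta H}$, so that the equilibrium free energy of the thermal state is $F_\beta(\gamma(\beta))=-\beta^{-1}\log Z_\beta$. A one–line computation then gives the identity
\[
  g(\beta)\coloneqq F_\beta(\rho)-F_\beta(\gamma(\beta))
   =E(\rho)-\beta^{-1}S(\rho)+\beta^{-1}\log Z_\beta
   =\beta^{-1}D\!\left(\rho\,\|\,\gamma(\beta)\right).
\]
Since $\gamma(\rho)=\gamma(\beta(\rho))$ and, by the identity recorded just before the lemma, $F(\rho)=T(\rho)\,D(\rho\|\gamma(\rho))$, the value of $g$ at the intrinsic inverse temperature is exactly $g(\beta(\rho))=F(\rho)$. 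Hence it suffices to prove that $\beta(\rho)$ is the global minimizer of $g$ over the admissible range of inverse temperatures.

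First I would differentiate $g$. Using $\partial_\beta\log Z_\beta=-E(\gamma(\beta))$ and the standard identity $\log Z_\beta=S(\gamma(\beta))-\beta E(\gamma(\beta))$ --- both immediate from $\gamma(\beta)=\e^{-\beta H}/Z_\beta$ --- all the energy terms cancel and one is left with
\[
  g'(\beta)=\frac{1}{\beta^{2}}\bigl(S(\rho)-S(\gamma(\beta))\bigr).
\]
Now invoke property \emph{(P2)}: $\beta\mapsto S(\gamma(\beta))$ is strictly decreasing on the range of finite inverse temperatures, and $S(\gamma(\beta(\rho)))=S(\rho)$ by definition of the entropic equivalence class of $\rho$. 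Therefore $g'(\beta)<0$ for $\beta<\beta(\rho)$ and $g'(\beta)>0$ for $\beta>\beta(\rho)$, so $g$ is strictly unimodal with its unique minimum at $\beta=\beta(\rho)$, where $g(\beta(\rho))=F(\rho)$. This establishes both $F(\rho)=\min_\beta\bigl(F_\beta(\rho)-F_\beta(\gamma(\beta))\bigr)$ and that the minimum is attained at $\beta(\rho)$. Geometrically, in the energy--entropy diagram $g(\beta)$ is the vertical gap between the line of slope $T=1/\beta$ through $\bigl(S(\rho),E(\rho)\bigr)$ and the parallel line tangent to the convex \CP curve at $\gamma(\beta)$; this gap is minimal precisely when the point of tangency has entropy $S(\rho)$, i.e.\ at $\gamma(\rho)$.

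The computation itself is routine; the only points requiring care are at the boundary of the $\beta$–range. One must restrict to inverse temperatures corresponding to energies strictly above the ground state and entropies strictly between the ground–state entropy and $\log d$ (equivalently $0<\beta<\infty$), as already flagged in the text, so that $S(\gamma(\cdot))$ is a genuine strictly monotone bijection and the minimizer $\beta(\rho)$ is an interior point where the minimum is actually attained; the degenerate cases $S(\rho)=\log d$ and $S(\rho)$ equal to the ground–state entropy are handled separately as limits. For unbounded $H$ one would additionally need $Z_\beta<\infty$, which is automatic in the finite–dimensional setting considered here.
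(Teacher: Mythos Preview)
Your proof is correct and takes a genuinely different route from the paper. The paper's one-line proof invokes Lemma~\ref{lem:work-extraction} (work extraction with an infinite bath) together with property \emph{(P7)}: the idea is that $F(\rho)\le F(\rho\otimes\gamma_B(\beta))$ for any thermal ancilla by \emph{(P7)}, and in the infinite-bath limit the right-hand side equals $F_\beta(\rho)-F_\beta(\gamma(\beta))$, with equality precisely when $\beta=\beta(\rho)$. Your argument instead writes $g(\beta)=F_\beta(\rho)-F_\beta(\gamma(\beta))$ explicitly, differentiates, and uses the entropy monotonicity \emph{(P2)} to locate the unique minimum at $\beta(\rho)$.

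Your approach is more elementary and fully self-contained: it needs only the closed-form expressions for $F_\beta$ and $\log Z_\beta$ and the monotonicity of $\beta\mapsto S(\gamma(\beta))$. It also avoids a logical delicacy in the paper, namely the forward reference to Lemma~\ref{lem:work-extraction}, whose own proof in turn invokes Lemma~\ref{lem:free-energy-beta}. The paper's route, on the other hand, makes the operational content of the lemma transparent---$F(\rho)$ really is the work yield when the bath temperature is chosen ``worst''---and gets the equality case for free from the equality condition in \emph{(P7)}. Your careful restriction to $0<\beta<\infty$ is appropriate and matches the regime in which the min-energy principle (and hence $\beta(\rho)$) is well defined; one minor remark is that your geometric description uses the $(S,E)$ orientation rather than the paper's $(E,S)$ convention, so the ``vertical'' gap you describe is the horizontal gap in Fig.~\ref{fig:beta-free-energy-and-athermality}.
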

\begin{proof}
\aw{It follows from} Lemma \ref{lem:work-extraction} together with (P7).
\end{proof}

\section{Energy-entropy diagram}
\label{sec:diagram}
Let us describe here the \emph{energy-entropy diagram} which appears often in \aw{the}
thermodynamics literature and in particular has recently been exploited in Ref.~\cite{Sparaciari16}.
Given a system described by a Hamiltonian $H$, a state $\rho$ is represented in the energy-entropy diagram by a point with coordinates $x_\rho\coloneqq(E(\rho), S(\rho))$, as shown in Fig.~\ref{fig:energy-entropy-diagram}. 
All physical states reside in a region that is lower bounded by the horizontal axis (i.e.~$S=0$), 
corresponding to the pure states, and upper bounded by the convex curve $(E(\beta),S(\beta))$, 
which represents the thermal states of both positive and negative temperatures. 
Let us denote such a curve as the thermal boundary. The inverse temperature associated to 
one point of the thermal boundary is given by the slope of the tangent line in such a point,
\begin{equation}
  \beta = \frac{\dd S(\beta)}{\dd E(\beta)} .
\end{equation}

\begin{figure}[h]
\includegraphics[width=\columnwidth]{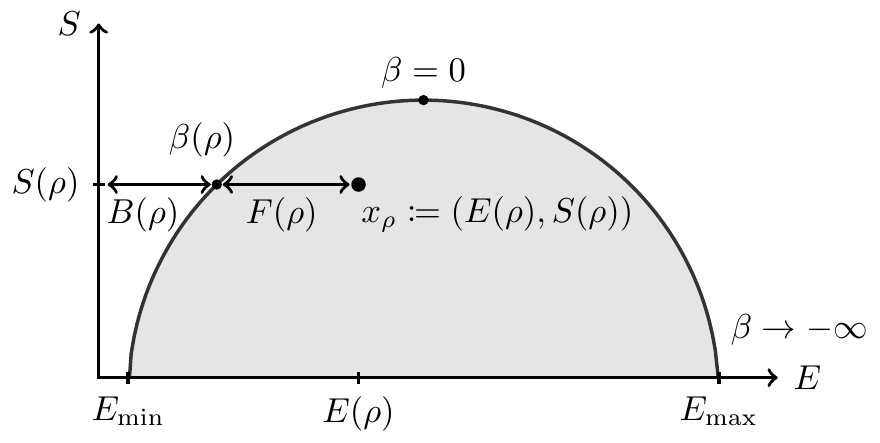}
\caption{Energy-entropy diagram. Any quantum state $\rho$ is represented in the diagram
as a point with coordinates $x_\rho\coloneqq(E(\rho), S(\rho))$. 
The free energy $F(\rho)$ is the distance in the horizontal direction from the thermal boundary. The bound energy $B(\rho)$ is the distance in the horizontal direction between the thermal boundary and the energy reference.}
\label{fig:energy-entropy-diagram}
\end{figure}

In Fig.~\ref{fig:energy-entropy-diagram}, the free energy and the bound energy are plotted given a state $\rho$. 
Its free energy $F(\rho)$ can be seen from the diagram as the horizontal distance from the thermal boundary. 
This is the part of \aw{the} internal energy which can be extracted without altering system's entropy. 
The slope of the tangent line of the thermal boundary in that point is the \aw{inverse} intrinsic temperature
$\beta(\rho)$, of the state $\rho$. The bound energy $B(\rho)$ is the distance in the horizontal direction between the thermal boundary and the energy reference and it can \aw{in} no way be extracted with entropy preserving operations.

Note that in general a point in the energy-entropy diagram represents multiple states, 
\aw{in fact, an entire equivalence class,} 
since different quantum states can have identical entropy and energy. 
As it is pointed out in \cite{Sparaciari16}, 
the energy-entropy diagram establishes a link between the microscopic and the macroscopic 
thermodynamics, in the sense that in the macroscopic limit of many copies all the thermodynamic
quantities only rely on the energy and entropy per particle.
More precisely, all the states with same entropy and energy are thermodynamically 
equivalent: \aw{as was shown in \cite{Sparaciari16}, they can be transformed 
into each other using energy-conserving unitaries 
in the limit of many copies $n\to\infty$, and using an ancillary system,  
which is of \aw{sublinear} size $O(\sqrt{n \log n})$ and with a Hamiltonian 
with its Hamiltonian $O(n^{2/3})$ sublinearly bounded}.

\section{Equilibrium and zeroth law}
\label{sec:zeroth}
The zeroth law of thermodynamics establishes \aw{the temperature as an intensive quantity, and its} the absolute scaling, in terms of mutual thermal equilibrium. It says that if a system $A$ is in thermal equilibrium with $B$ and again, $B$ is in thermal equilibrium with $C$, then $A$ will also be in thermal equilibrium with $C$. All the systems that are in mutual thermal equilibrium can be classified 
\aw{in a thermodynamical equivalence class}, and each class can be labelled with a unique parameter called temperature. In other words, at thermal equilibrium, temperatures of the individual systems will be equal to each other and also to their arbitrary collections, as there would be no spontaneous heat or energy flow in between. 
When a non-thermal state is brought in contact with a large thermal bath, then the system tends to acquire a thermally equilibrium state with the temperature of the bath. In this equilibration process, the system \aw{can} exchange both energy and entropy with \aw{the} bath and thereby \aw{minimize} its Helmholtz free energy. 

However, in the new setup, where a system cannot have access to \aw{asymptotically} large thermal baths, or in the complete absence of a bath, the equilibration process is expected to be considerably different. In the following, we introduce a formal definition of \aw{equilibrium}, based on information preservation and intrinsic temperature. 
\begin{defi}[Equilibrium and zeroth law]
\label{def:equilibrium}
Given a collection of systems $A_1, \ldots, A_n$ with non-interacting Hamiltonians $H_1, \ldots, H_n$ 
in a joint state $\rho_{A_1 \ldots A_n}$, \aw{we say that they are} mutually at equilibrium if and only if 
they jointly minimize the free energy as defined in \eqref{eq:free-energy-def}:
\begin{equation}
  F(\rho_{A_1 \ldots A_n})=0 .
\end{equation}
\end{defi}

Let us consider \aw{two systems A and B} in the states $\rho_A$ and $\rho_B$, with corresponding Hamiltonians 
$H_A$ and $H_B$, respectively. The equilibrium is achieved when they jointly attain an \aw{iso-entropic} state with minimum energy. 
The corresponding equilibrium state is then a completely passive state $\gamma(H_{AB},\beta_{AB})$ 
with the joint Hamiltonian $H_{AB}=H_A\otimes \mathbb{I} + \mathbb{I} \otimes H_B$. 
Further, the joint \CP state, following {\it (P3)},
is 
\begin{align}
  \gamma(H_{AB},\beta_{AB}) = \gamma(H_A, \beta_{AB}) \otimes \gamma(H_B, \beta_{AB}),
\end{align}
where the local systems assume \aw{their respective} completely passive states, \aw{importantly} with \aw{the} same $\beta_{AB}$. 
Recall that $\gamma(H_X, \beta_Y)=e^{-\beta_Y H_X}/\tr(e^{-\beta_Y H_X})$. Note, for interaction systems, i.e.  $H_{AB}\neq H_A\otimes \mathbb{I} + \mathbb{I} \otimes H_B$, one could minimize global energy using global entropy preserving operation too. This minimum energy state will give rise to a global equilibrium (\CP) state with intrinsic global temperature. However, the reduced state, after tracing out one sub-system, may not be a \CP state.

\aw{Another consequence of the definition is} that if two arbitrary \CP states $\gamma(H_A, \beta_A)$ and 
$\gamma(H_B, \beta_B)$ have $\beta_A \neq \beta_B$, then their combined equilibrium state 
\aw{$\gamma(H_{AB},\beta_{AB}) = \gamma(H_A, \beta_{AB}) \otimes \gamma(H_B, \beta_{AB})$ 
can still have a smaller bound energy without altering the total information content.} 
Therefore, the \aw{associated} joint \CP state acquires a unique $\beta_{AB}$ \aw{such that}
\begin{align}
\label{eq:eqlbAB}
  E(\gamma(H_A, \beta_A)) + E(\gamma(H_B, \beta_B)) > E(\gamma(H_{AB}, \beta_{AB})), 
\end{align}
which is \aw{follows immediately} from \aw{the} min-energy principle and {\it (P5)}. 
Moreover, if $\beta_A \geq \beta_B$, then Eq.~\eqref{eq:eqlbAB} dictates a bound on $\beta_{AB}$, 
as expressed in \aw{the following lemma.}

\begin{lem}
\label{lm:betaAB}
Any \aw{iso-entropic} equilibration process between $\gamma(H_A, \beta_A)$ and $\gamma(H_B, \beta_B)$, 
with $\beta_A \geq \beta_B$,
leads to a state $\gamma(H_{AB}, \beta_{AB})$ \aw{of mutual equilibrium}, where $\beta_{AB}$ satisfies
\begin{align}
 \beta_A \geq \beta_{AB} \geq \beta_B.
\end{align}
\end{lem}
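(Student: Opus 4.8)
The plan is to reduce the claim to a comparison of thermal entropies as a function of $\beta$, using only the entropy‑conservation constraint of the iso‑entropic equilibration together with property~(P2). Set $S_X(\beta)\coloneqq S\bigl(\gamma(H_X,\beta)\bigr)$ for $X=A,B$. By Definition~\ref{def:equilibrium} and the min‑energy principle, the end state of the equilibration is the minimum‑energy iso‑entropic state of the composite, hence a thermal state $\gamma(H_{AB},\beta_{AB})$, which by~(P3) factorises as $\gamma(H_A,\beta_{AB})\otimes\gamma(H_B,\beta_{AB})$; by additivity of the von Neumann entropy the total entropy of a joint thermal state at inverse temperature $\beta$ is therefore $f(\beta)\coloneqq S_A(\beta)+S_B(\beta)$. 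Since the equilibration preserves entropy and starts from the product $\gamma(H_A,\beta_A)\otimes\gamma(H_B,\beta_B)$,
\begin{equation}\label{eq:plan-entropy-cons}
  f(\beta_{AB}) = S_A(\beta_{AB})+S_B(\beta_{AB}) = S_A(\beta_A)+S_B(\beta_B) .
\end{equation}
Because $\gamma(\rho)$ is the \emph{minimum}-energy representative of its class, all intrinsic inverse temperatures are non-negative, so $\beta_A,\beta_B,\beta_{AB}\ge 0$ and it suffices to work on $[0,\infty]$.

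Next I would invoke~(P2): each $S_X$ is non-increasing on $[0,\infty]$ and strictly decreasing wherever $H_X$ is not proportional to the identity, so $f$ is strictly decreasing there (hence injective, which also pins down $\beta_{AB}$ uniquely). Given $\beta_A\ge\beta_B$, monotonicity of $S_B$ gives $S_B(\beta_A)\le S_B(\beta_B)$, whence from~\eqref{eq:plan-entropy-cons}
\begin{equation}
  f(\beta_A)=S_A(\beta_A)+S_B(\beta_A)\le S_A(\beta_A)+S_B(\beta_B)=f(\beta_{AB}) ,
\end{equation}
while monotonicity of $S_A$ gives $S_A(\beta_B)\ge S_A(\beta_A)$, whence
\begin{equation}
  f(\beta_B)=S_A(\beta_B)+S_B(\beta_B)\ge S_A(\beta_A)+S_B(\beta_B)=f(\beta_{AB}) .
\end{equation}
Applying the decreasing map $f^{-1}$ to $f(\beta_A)\le f(\beta_{AB})\le f(\beta_B)$ gives $\beta_A\ge\beta_{AB}\ge\beta_B$.

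The only delicate point is the strict monotonicity, and hence invertibility, of $f$: one must treat the boundary values $\beta=0$ (both marginals maximally mixed, where $f'$ vanishes but $f$ is still strictly decreasing) and $\beta=\infty$ (ground states), and exclude the degenerate case of a Hamiltonian proportional to the identity on one factor, for which ``temperature'' is undefined and the statement is vacuous. Away from these edge cases the strict monotonicity of $S_A+S_B$ in $\beta$ is precisely the content of~(P2), and the remainder is the one-line comparison above.
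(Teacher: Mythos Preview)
Your proof is correct and follows essentially the same approach as the paper: both arguments rest on entropy conservation together with the monotonicity of the thermal entropy in $\beta$ from~(P2), with the factorisation~(P3) in the background. The only cosmetic difference is that the paper argues by contradiction (assuming $\beta_{AB}\ge\beta_A$ or $\beta_{AB}\le\beta_B$ forces the total entropy to strictly decrease or increase), whereas you package the same monotonicity into the decreasing function $f(\beta)=S_A(\beta)+S_B(\beta)$ and invert it directly.
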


\begin{proof}
For $\beta_A = \beta_B$,  {\it (P3)} and {\it (P5)} immediately lead to $\beta_A = \beta_{AB} = \beta_B$. 
What we \aw{want} to show is that, for $\beta_A > \beta_B$, the equilibration leads to $\beta_A > \beta_{AB} > \beta_B$. 
\aw{This} can be seen from \aw{the} information preservation condition on the equilibration process. 
\aw{Assume the contrary, say that after the equilibration the final inverse temperature is 
$\beta_{AB} \geq \beta_A > \beta_B$. But according to {\it (P2)}, the entropy is monotonically
decreasing with $\beta$, hence we would conclude that the total entropy of the system has decreased,
contradiction to the assumption of an iso-entropic process.
Likewise, $\beta_A > \beta_B \geq \beta_{AB}$ leads to the contradiction that the total entropy
of the system would have increased.
Therefore, the only possibility remaining is $\beta_A > \beta_{AB} > \beta_B$.}
\end{proof}

%

Now\aw{, with a} clear notion of equilibration and equilibrium state, 
which has minimum internal energy for a fixed information content, 
we \aw{can} restate \aw{the} \emph{zeroth law} in terms of \aw{the} intrinsic temperature.
By definition, the global \CP state has minimum internal energy and one cannot extract free energy by 
using global \EP operations. Further, a global \CP state not only assures that the individual states 
are also \CP but also \aw{enforces} that they share the same intrinsic temperature $\beta$ and 
\aw{has vanishing} inter-system correlations. 

We \aw{can} recover the traditional notion of thermal equilibrium as well as zeroth law. 
In traditional thermodynamics, the thermal bath is considerably \aw{much larger than the system under consideration}, 
and \aw{is initially in a} \CP state with a predefined temperature. 
\aw{With respect to the bath} Hamiltonian $H_B$, it can be expressed as 
$\gamma_B=e^{-\beta_B H_B}/\tr(e^{-\beta_B H_B})$. Now, if the state $\rho_S$ of the system $S$ (with $|S|\ll |B|$)
with Hamiltonian $H_S$ is brought in contact with \aw{the} thermal bath, then the global thermal equilibrium state 
will be a \CP state, i.e.~$\gamma_B \otimes \rho_S \xrightarrow{\Lambda^{ep}} \gamma_B^\prime \otimes \gamma_S^\prime$, 
with a global inverse equilibrium temperature $\beta_e$. With $|B| \gg |S|$, one \aw{now easily sees} that 
\aw{$\beta_e \approx \beta_B$ and $\gamma_S^\prime \approx \gamma(H_S, \beta_B) = e^{-\beta_B H_S}/\tr(e^{-\beta_B H_S})$.}

\section{Max-entropy principle \protect\\ vs.~min-energy principle}
\label{sec:athermality}
Let us mention that our framework, which is based on \ar{the principle of information conservation, is 
suited for the work extraction problem}. 
It assumes that the system has mechanisms to release energy to some battery or classical field, but it cannot interchange entropy with it. This contrasts with other situations in spontaneous equilibration, in which the system is assumed to evolve keeping constant the conserved quantities. In such scenario, the equilibrium state is described by the principle of maximum entropy, that is, the state the maximizes the entropy given the conserved quantities.

In the context of the maximum entropy principle, let us introduce the absolute athermality as
\aw{(cf.~\cite{Sparaciari16})}:
\begin{defi}[Athermality]
\label{defi:athermality}
The \emph{athermality} of a system in a state $\rho$ and Hamiltonian $H$ is 
the amount of entropy (information) that the system can still accommodate 
without increasing its energy, i.e.
\begin{equation}\label{eq:athermality-def}
  A(\rho) \coloneqq \max_{\sigma \, : \, E(\sigma)=E(\rho)} S(\sigma)-S(\rho),
\end{equation}
where the maximization is over all states $\sigma$ with energy $E(\rho)$.
\end{defi}
The state $\sigma$ that maximizes \eqref{eq:athermality-def} is obviously a thermal state.
In particular, it is the equilibrium state of an equilibration process guided by the maximum entropy principle,
\aw{where the energy of the system remains constant}. 
Let us call the temperature of such thermal state the \emph{spontaneous equilibration temperature},
and denote it by $\tilde\beta(\rho)$.
The athermality corresponds to the amount of entropy produced during such an equilibration
process.

Note that the spontaneous equilibration temperature $\tilde\beta(\rho)$ always differs
from the intrinsic temperature $\beta(\rho)$ unless $\rho$ is thermal.
In other words, the maximum entropy and minimum energy principles 
lead to different equilibrium temperatures.
As both entropy and energy are monotonically increasing functions of temperature, 
the equilibrium temperature determined by the minimum energy principle is always smaller than 
the one determined by the maximum entropy principle. 
This is represented in the energy entropy diagram of Fig.~\ref{fig:free-energy-athermality}, 
where the athermality can be seen as 
the vertical distance from a state $\rho$ to the thermal boundary. 
The point in which the thermal boundary intersects with the vertical 
line $E=E(\rho)$ represents the equilibrium thermal state given by the max-entropy principle. 
Its temperature $\tilde\beta(\rho)$
corresponds to the slope of the tangent line of the thermal boundary at that point.

\begin{figure}[h]
\includegraphics[width=.9\columnwidth]{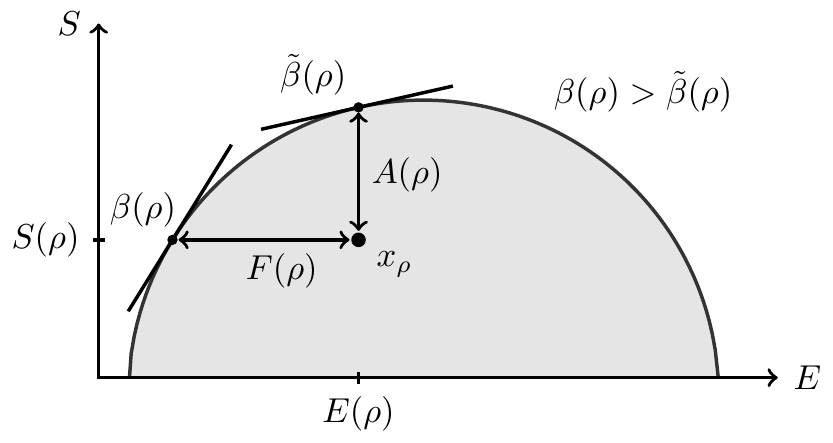}
\caption{Representation of the free energy $F(\rho)$ and the athermality $A(\rho)$
of a state $\rho$ in the energy-entropy diagram.
The intrinsic temperature $\beta(\rho)$ and the spontaneous equilibration
temperature $\tilde \beta(\rho)$ \aw{satisfy} $\beta(\rho)>\tilde \beta(\rho)$
due to the concavity of the thermal boundary.}
\label{fig:free-energy-athermality}
\end{figure}

We have chosen the name ``athermality'' for the quantity defined in \eqref{defi:athermality}
in agreement with Ref.~\cite{Sparaciari16}.
There, a quantity called $\beta$-athermality \aw{is} defined by
\begin{equation}\label{eq:beta-athermality-def}
  A_\beta(\rho) \coloneqq \beta E(\rho) - S(\rho) + \log Z_\beta ,
\end{equation}
with $Z_\beta=\tr(\e^{-\beta H})$ the partition function.
\aw{It is} introduced to characterise the energy-entropy diagram as the set of all 
points with positive entropy $S(\rho)\geq 0$ and positive $\beta$-athermality $A_\beta(\rho)\geq 0$ 
for all $\beta \in \mathbb{R}$.

\begin{lem}[Athermality vs. $\beta$-athermality]
\label{lem:athermality}
The athermality of a state $\rho$ can be written in terms of 
the $\beta$-athermalities as
\begin{equation}\label{eq:athermality-min-beta}
  A(\rho)= \min_{\beta} A_\beta (\rho) ,
\end{equation}
where the minimum is attained by the spontaneous equilibration temperature $\tilde\beta(\rho)$.
\end{lem}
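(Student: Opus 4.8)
The plan is to reformulate the $\beta$-athermality as a relative entropy and then minimise over the one-parameter thermal family by elementary calculus. First I would note that, since $\log\gamma(\beta)=-\beta H-\log Z_\beta$, Definition \eqref{eq:beta-athermality-def} can be rewritten as
\begin{equation}\label{eq:Abeta-as-D}
  A_\beta(\rho)=\beta E(\rho)-S(\rho)+\log Z_\beta = D(\rho \| \gamma(\beta))\geq 0 ,
\end{equation}
so that $\min_\beta A_\beta(\rho)$ is the ``information projection'' of $\rho$ onto the thermal manifold $\{\gamma(\beta):\beta\in\mathbb{R}\}$. Writing $f(\beta)\coloneqq\beta E(\rho)+\log Z_\beta$, we have $A_\beta(\rho)=f(\beta)-S(\rho)$, and it suffices to minimise $f$.

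Next I would locate the minimiser. From $\frac{\dd}{\dd\beta}\log Z_\beta=-E(\gamma(\beta))$ and $\frac{\dd^2}{\dd\beta^2}\log Z_\beta=\operatorname{Var}_{\gamma(\beta)}(H)\geq 0$ it follows that $f$ is convex with $f'(\beta)=E(\rho)-E(\gamma(\beta))$. Hence $f'(\beta)=0$ exactly when $E(\gamma(\beta))=E(\rho)$, which is precisely the equation that defines the spontaneous equilibration temperature $\tilde\beta(\rho)$; by convexity this stationary point is the global minimum. To evaluate $f$ there, I would use the thermal-state identity $S(\gamma(\beta))=\beta E(\gamma(\beta))+\log Z_\beta$ (a direct consequence of $\log\gamma(\beta)=-\beta H-\log Z_\beta$), which at $\beta=\tilde\beta(\rho)$, where $E(\gamma(\tilde\beta(\rho)))=E(\rho)$, yields
\begin{equation}
  \min_\beta A_\beta(\rho)=f(\tilde\beta(\rho))-S(\rho)=S\bigl(\gamma(\tilde\beta(\rho))\bigr)-S(\rho) .
\end{equation}
Finally, invoking the max-entropy principle, the maximiser in \eqref{eq:athermality-def} is the thermal state of energy $E(\rho)$, i.e.\ $\gamma(\tilde\beta(\rho))$, so the right-hand side equals $A(\rho)$ by Definition \ref{defi:athermality}, completing the proof.

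The calculational content is entirely routine once \eqref{eq:Abeta-as-D} is in place; the only genuine subtlety is the convexity/uniqueness step --- guaranteeing that the stationary point is the \emph{global} minimum, which is handled by convexity of $\beta\mapsto\log Z_\beta$ --- together with a short verification that $\tilde\beta(\rho)$ is well defined. The latter holds because $\beta\mapsto E(\gamma(\beta))$ decreases continuously over the open interval between the smallest and largest eigenvalue of $H$ as $\beta$ ranges over $\mathbb{R}$, and $E(\rho)$ lies in that interval unless $\rho$ is a ground- or top-energy eigenstate (the only case where the minimum is attained in a limiting sense); the degenerate case $H\propto\id$, where $A_\beta(\rho)=\log d-S(\rho)$ is independent of $\beta$, is trivially consistent with the claim.
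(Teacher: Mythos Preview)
Your proof is correct. It proceeds analytically---rewriting $A_\beta(\rho)=D(\rho\|\gamma(\beta))$, then minimising $f(\beta)=\beta E(\rho)+\log Z_\beta$ by direct calculus, using convexity of $\log Z_\beta$ to guarantee the stationary point $E(\gamma(\beta))=E(\rho)$ is the global minimum---whereas the paper's argument is geometric: it shows that $A_\beta(\rho)$ equals the vertical distance in the energy-entropy diagram from $x_\rho$ to the tangent line of slope $\beta$ to the thermal boundary, so that $\min_\beta A_\beta(\rho)$ is the vertical distance to the concave boundary itself, which is $A(\rho)$ by definition. Both arguments ultimately rest on the same convexity (the thermal boundary is the lower envelope of its tangent lines, equivalently $\log Z_\beta$ is convex), but your route is self-contained and makes the information-projection interpretation explicit, while the paper's route is more visual and integrates directly with the diagrammatic framework used throughout; it also yields, as a by-product, the geometric meaning of $A_\beta(\rho)$ for \emph{every} $\beta$ (Fig.~\ref{fig:beta-free-energy-and-athermality}), not just the minimiser. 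Your added remarks on existence of $\tilde\beta(\rho)$ and the boundary/degenerate cases are a welcome bit of rigour the paper leaves implicit.
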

\begin{proof}
We prove the lemma by giving a geometric interpretation to the $\beta$-athermality in the energy-entropy diagram.
In Fig.~\ref{fig:beta-free-energy-and-athermality}, 
let us consider the tangent line to the thermal boundary with slope $\beta$.
Such \aw{a} straight line is formed \aw{by} the set of all points \aw{satisfying}
\begin{equation}
  S-S_\beta =  \beta (E-E_\beta) ,
\end{equation}
where $S_\beta$ and $E_\beta$ are respectively the entropy and the energy of
the point \aw{on} the thermal boundary that \aw{lies on} the line.
The vertical distance (difference in entropy) of \aw{this} line from a point with coordinates
$(E(\rho),S(\rho))$ reads
\begin{equation}
\begin{split}
  S_*-S(\rho) &=\beta (E(\rho)-E_\beta) - S(\rho) \\
              &= \beta E(\rho) -S(\rho)-\beta (E_\beta - T S_\beta),
\end{split}
\end{equation}
where $S_*$ is the entropy of the line at $E=E(\rho)$ and $T=\beta^{-1}$.
By noticing that $\log Z_\beta= \beta (E_\beta - T S_\beta)$, 
we get that the $\beta$-athermality \aw{$A_{\beta}(\rho)$ of $\rho$} is the vertical distance of $x_\rho$ from
the line with slope $\beta$ tangent to the thermal boundary.
Eq.~\eqref{eq:athermality-min-beta} trivially follows from this.
\end{proof}

\begin{figure}[h]
\includegraphics[width=.9\columnwidth]{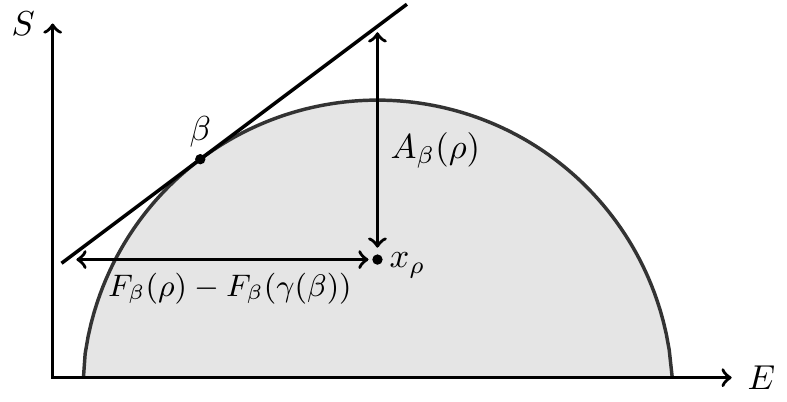}
\caption{Representation of the $\beta$-free energy difference $F_\beta(\rho)-F_\beta (\gamma(\beta))$ and the $\beta$-athermality $A_\beta(\rho)$ for a state $\rho$ with coordinates $x_\rho$  in the energy-entropy diagram.}
\label{fig:beta-free-energy-and-athermality}
\end{figure}

By identifying the $\beta$-free energy in Eq.~\eqref{eq:beta-athermality-def}, the $\beta$-athermality can be written as
\begin{equation}
  \label{eq:beta-free-energy-and-athermality}
  A_\beta(\rho)=\beta\left( F_\beta(\rho)-F_\beta (\gamma(\beta))\right) ,
\end{equation} 
that is, it is $\beta$ times the work that can be potentially
extracted from a state $\rho$ when having an infinite bath uncorrelated from the system 
at temperature $\beta$ (see Eq.~\eqref{eq:work-extracted-infinite-bath}).
From Eq.~\eqref{eq:beta-free-energy-and-athermality}, we see that the difference in 
$\beta$-free energies $F_\beta(\rho)-F_\beta (\gamma(\beta))$ can be represented 
in the energy-entropy diagram as the horizontal distance from the point $x_\rho$ 
to the tangent line with slope $\beta$. 
This is represented in Fig.~\ref{fig:beta-free-energy-and-athermality}.
Hence, given an infinitely large bath at some inverse
temperature $\beta$, all the states with the same work potential
lie on a line with slope $\beta$ in the energy-entropy diagram. 
The work extracted from a state $\rho$ can be decomposed into 
its free energy $F(\rho)$, the part of energy that could have been extracted without bath, 
and the rest, \aw{i.e.~the part of the energy} that has been accessed thanks to the bath. 
This latter part is closely related to heat. The definitions of work and heat are discussed in the next section.

Furthermore, thanks to Lemmas \ref{lem:free-energy-beta} and \ref{lem:athermality}, 
both the free energy and the athermality can be expressed in terms of a minimization over standard $\beta$-free energies
\begin{equation}
\begin{split}
  F(\rho)&=\min_\beta \left(F_\beta (\rho) - F_\beta (\gamma(\beta))\right), \\
  A(\rho)&=\min_\beta \left( \beta \left( F_\beta(\rho)-F_\beta (\gamma(\beta))\right)\right),
\end{split}
\end{equation}
where the minimum is attained by the intrinsic temperature $\beta(\rho)$ 
and the spontaneous equilibration temperature $\tilde \beta(\rho)$, \aw{respectively}.

Note that, while for positive temperatures both the athermality and the free energy are a 
measure of out of equilibrium, for negative temperatures, states with small athermality are 
highly active and have huge free energies.

Let us finally show that in the equilibration of a hot body in contact with a cold one, 
the intrinsic temperature and the spontaneous equilibration one are not so much different. 
In Fig.~\ref{fig:fig:equilibrium-temperature}, we \aw{sketch the} energy-entropy diagram for the particular case
of the equilibration of a system composed of two identical subsystems initially at different temperatures 
$\beta_A$ and $\beta_B$. As expected and due to the concavity of the thermal boundary, 
the equilibrium temperature $\beta_E^{-1}$ given by the constant energy \aw{constraint} (max-entropy principle) is larger
that the one given by the constant entropy constrain (min-energy principle), $\beta_S^{-1}$,
i.e.~$\beta_E < \beta_S$.
The difference in bound energies of these two thermal states corresponds precisely to the work
extracted in the constant entropy scenario.

\begin{figure}
\includegraphics[width=0.9\columnwidth]{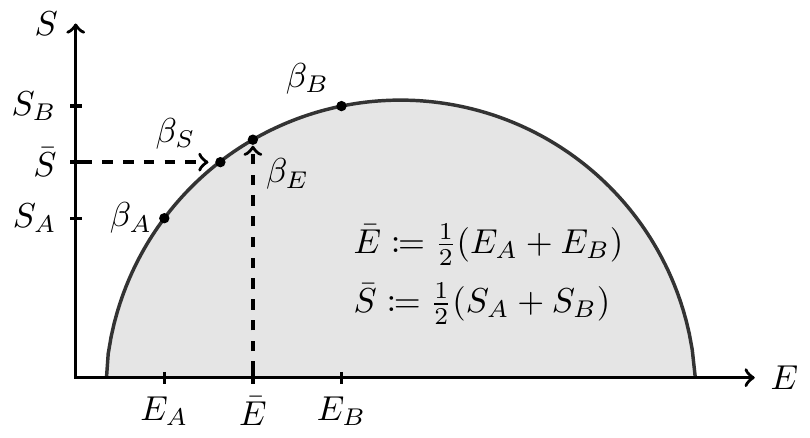}
\caption{Energy-entropy diagram of a system with Hamiltonian $H$.
Two systems with the same Hamiltonian $H$ and initially at different temperatures $\beta_A$ and $\beta_B$ equilibrate to different temperature depending on the approach taken: minimum energy principle vs. maximum entropy principle. The equilibrium temperature when entropy is conserved $\beta_S^{-1}$ is always smaller than the equilibrium temperature when energy is conserved $\beta_E^{-1}$, i.~e. $\beta_S > \beta_E$.}
\label{fig:fig:equilibrium-temperature}
\end{figure}

\section{Work, heat and the first law}
\label{sec:first}
In thermodynamics, the first law deals with the conservation of energy. 
In addition, it dictates the distribution of energy over work and heat, 
that are the two forms of thermodynamically relevant energy. 

Let us define a thermodynamic process involving a system $A$ and a bath $B$ as 
a transformation $\rho_{AB}\to \rho_{AB}'$ that conserves the global entropy $S(\rho_{AB})=S(\rho_{AB}')$. 
In standard thermodynamics, the bath is assumed to be initially in a thermal state and completely uncorrelated 
from the system. In such \aw{a} scenario, the heat dissipated in \aw{the} process is usually defined as 
the change in the internal energy of the bath,
$\Delta Q=E(\rho_B')-E(\rho_B)$, where $\rho_B^{(\prime)}=\tr_A \rho_{AB}^{(\prime)}$ 
is the reduced state of the bath. \aw{Here, and in the following, primed quantities refer to the
state after the process, whereas their unprimed versions refer to the state before the process.}
This definition, however, may have some limitations and alternative definitions have been discussed 
recently \cite{Bera16}. In particular \aw{the information theoretic approach from \cite{Bera16}} 
suggests \aw{that heat be} defined as $\Delta Q=T_B \Delta S_B$ with $T_B$ being the temperature 
and $\Delta S_B= S(\rho_B')-S(\rho_B)$ the change in von Neumann entropy of the bath. 
Note that $\Delta S_B=-\Delta S(A|B)$ is also the conditional entropy change in system $A$, 
conditioned on the bath $B$, where $S(A|B)=S(\rho_{AB}) - S(\rho_B)$, 
which can be understood as the information flow from the system to the bath in the presence of correlations.

In the present approach, we go beyond the restriction that the environment has to have a 
definite predefined temperature, and be in \aw{a} state of the Boltzmann-Gibbs form. 
For an arbitrary environment, which could even be athermal, \aw{we propose to quantify the heat transfer}
in terms of bound energies, \aw{as follows}.
\begin{defi}[Heat]
\label{def:heat}
Given a system $A$ and its environment $B$, the heat dissipated by the system $A$ in the 
process $\rho_{AB} \xrightarrow{\Lambda^{ep}} \rho_{AB}'$ is defined as the change in the 
bound energy of the environment $B$, i.e.
\begin{equation}\label{eq:heat-def}
  \Delta Q \coloneqq B(\rho_B')-B(\rho_B),
\end{equation}
\aw{where $B(\rho_B)$ and $B(\rho_B^{(\prime)})$ are the initial and final bound energy of the bath $B$,
respectively.}
\end{defi}

Note that heat is a \aw{somewhat} process dependent quantity from the point of view of the work system $A$, 
in the sense that there \aw{can be different} processes with the same initial and final marginal state for $A$, 
but different marginals for $B$.
Because the global process is entropy preserving, processes that lead to the same marginal for $A$, 
but different marginals with different entropies for $B$, necessarily imply a different amount of correlations 
between $A$ and $B$, as measured by the mutual information $I(A:B)\coloneqq S_A+S_B-S_{AB}$. 
\aw{Namely,} $\Delta S_A + \Delta S_B = \Delta I (A:B)$.

\begin{figure}
\includegraphics[width=0.8\columnwidth]{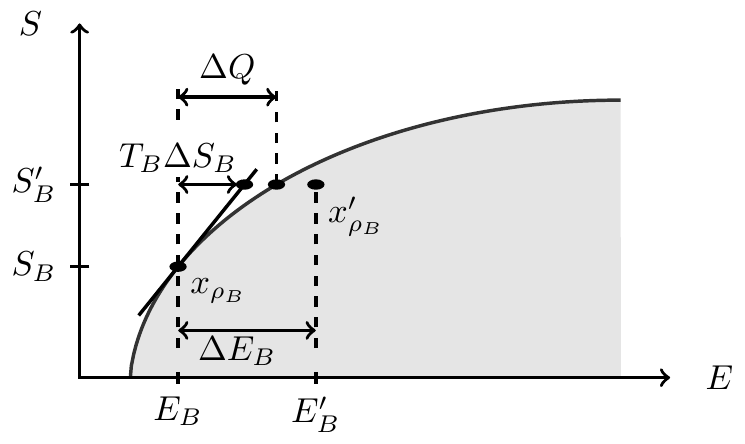}
\caption{Representation in the energy-entropy diagram of the different notions of heat: $\Delta Q$ as change of the bound energy of the bath, $\Delta E_B$ as the change in internal energy, and $T_B \Delta S_B$. In this example, the bath is initially thermal but this does not have to be necessarily the case.}
\label{fig:heat-definitions}
\end{figure}

\begin{lem}[Connections among \aw{definitions of heat}]
\label{lem:connections-heat}
Given a system $A$ and its environment $B$, the heat \aw{disspated} by $A$ 
in a globally entropy preserving process $\rho_{AB} \xrightarrow{\Lambda^{ep}} \rho_{AB}'$ can be written as
\begin{equation}\label{eq:heat-integral}
\Delta Q=\int_{S_B}^{S_B'}T(s) d s,
\end{equation}
where $S_B^{(\prime)}=S(\rho_B^{(\prime)})$ is the initial (final) entropy of $B$, 
and $T(S)$ is the intrinsic temperature of the thermal state of $B$ with entropy $S$. 
Because of the continuity of $T(S)$, there exists \aw{hence an} 
$s_*\in [S_B,S_B']$ such that
\begin{equation}\label{eq:mean-value}
  \Delta Q = T(s_*) \Delta S_B\, .
\end{equation}
Thus, \aw{Definition} \ref{def:heat} becomes $\Delta Q = T \Delta S_B$ in the limit of small entropy changes
\aw{along the process, in which case $T(S_B)\approx T(S_B') \approx T(s_*)$}, 
while it differs from the common definition in general, and $\Delta Q\ne \Delta E_B$.

In the case that the initial state of the environment is thermal 
$\rho_B \aw{= \gamma(H_B,T^{-1})} = \e^{-H_B/T}/\tr(\e^{-H_B/T})$ 
with Hamiltonian $H_B$ at temperature $T$, the heat is lower and upper bounded by
\begin{equation}\label{eq:heat-defs-bounds}
T \Delta S_B\le \Delta Q \le \Delta E_B ,
\end{equation}
where the three quantities have been represented in the energy-entropy diagram
in Fig.~\ref{fig:heat-definitions}.
If, \aw{in addition}, the process perturbs the environment \aw{only slightly}, $\rho_B'=\rho_B+\delta \rho_B$, 
then the three definitions of heat coincide, in the sense that
\begin{equation}\label{eq:heat-dif-def}
 T \Delta S_B +O(\delta \rho_B^2)=\Delta Q=\Delta E_B-O(\delta\rho_B^2) .
\end{equation}
That is, in the limit of large thermal baths, the three definitions become equivalent.
\end{lem}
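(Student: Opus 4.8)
The plan is to reduce everything to Definition~\ref{def:heat}, $\Delta Q = B(\rho_B')-B(\rho_B)$, together with the fact (Definition~\ref{defi:bound-energy}) that the bound energy of a state depends only on its entropy: $B(\rho)=E(\gamma(\rho))$. Write $E_{\mathrm{th}}(S)$ for the energy of the thermal state of $B$ with entropy $S$, i.e.\ the thermal boundary of $B$ in the energy-entropy diagram of Section~\ref{sec:diagram}. Then $B(\rho_B)=E_{\mathrm{th}}(S_B)$, $B(\rho_B')=E_{\mathrm{th}}(S_B')$, so $\Delta Q=E_{\mathrm{th}}(S_B')-E_{\mathrm{th}}(S_B)$. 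The identity \eqref{eq:heat-integral} then follows by integrating the thermodynamic relation $\dd E_{\mathrm{th}}/\dd S=T(S)$, which is precisely the relation $\beta=\dd S(\beta)/\dd E(\beta)$ of Section~\ref{sec:diagram} read the other way (a one-line computation from $S(\beta)=\beta E(\beta)+\log Z_\beta$ and $\dd\log Z_\beta/\dd\beta=-E(\beta)$). Continuity of $T(\cdot)$ and the mean value theorem for integrals then give $s_*\in[S_B,S_B']$ with $\Delta Q=T(s_*)\Delta S_B$, and when the entropy of $B$ changes only slightly, $T(S_B)$, $T(S_B')$ and $T(s_*)$ agree to leading order, giving $\Delta Q\approx T\Delta S_B$; meanwhile $\Delta Q=E_{\mathrm{th}}(S_B')$ is strictly below $E(\rho_B')$ unless $\rho_B'$ is itself thermal, which is why $\Delta Q\neq\Delta E_B$ in general.

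For the two-sided bound when $\rho_B$ is initially thermal at temperature $T$: the upper bound $\Delta Q\le\Delta E_B$ is obtained by subtracting $B(\rho_B)=E(\rho_B)$ (a thermal state coincides with its minimum-energy iso-entropic representative) from $B(\rho_B')\le E(\rho_B')$ (true for every state). The lower bound $T\Delta S_B\le\Delta Q$ uses monotonicity: $T(S)$ is nondecreasing along the thermal boundary --- equivalently $\dd T/\dd S\ge0$, i.e.\ positivity of the heat capacity $\dd E_{\mathrm{th}}/\dd T\ge0$ --- together with $T(S_B)=T$ (the intrinsic temperature of the thermal state with entropy $S_B$). A short case split on the sign of $\Delta S_B$ then shows $\int_{S_B}^{S_B'}T(s)\,\dd s\ge T\,(S_B'-S_B)$ in both directions, which is the claim. (I take the relevant arc of the thermal boundary to lie on the positive-temperature branch, so that $T(S)$ is a single-valued monotone function --- the case of an ordinary bath.)

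Finally, for the slightly-perturbed bath $\rho_B'=\rho_B+\delta\rho_B$ I would Taylor-expand to second order around the full-rank Gibbs state $\rho_B=\e^{-\beta H_B}/Z_\beta$. The energy change is exactly linear, $\Delta E_B=\tr(H_B\,\delta\rho_B)$; for the entropy, $S(\rho_B+\delta\rho_B)=S(\rho_B)-\tr(\delta\rho_B\log\rho_B)+O(\delta\rho_B^2)$, and inserting $\log\rho_B=-\beta H_B-(\log Z_\beta)\,\id$ with $\tr\delta\rho_B=0$ gives $\Delta S_B=\beta\,\Delta E_B+O(\delta\rho_B^2)$, that is $T\Delta S_B=\Delta E_B+O(\delta\rho_B^2)$. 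Since $\Delta S_B=O(\delta\rho_B)$ and $T(\cdot)$ is smooth, $\Delta Q=\int_{S_B}^{S_B'}T(s)\,\dd s=T\Delta S_B+O(\Delta S_B^2)=T\Delta S_B+O(\delta\rho_B^2)$; combining with the already-proved inequalities $T\Delta S_B\le\Delta Q\le\Delta E_B$ fixes the signs and yields \eqref{eq:heat-dif-def}. The ``large bath'' corollary is then just the observation that a fixed system coupled to a bath of diverging size perturbs it by $\delta\rho_B\to0$.

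The step needing the most care --- the main obstacle --- is making $T(S)$ a bona fide $C^1$, monotone function of the entropy: one must restrict to a single (positive-temperature) branch of the thermal boundary, verify that $E_{\mathrm{th}}$ is differentiable there with $\dd E_{\mathrm{th}}/\dd S=T$, and justify the second-order expansion of the von Neumann entropy around a full-rank Gibbs state. Once those are in place, the rest is bookkeeping with $B(\rho)=E(\gamma(\rho))$, $B(\rho')\le E(\rho')$, and the mean value theorem.
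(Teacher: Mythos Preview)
Your proposal is correct and follows essentially the same route as the paper: the integral formula from $B(\rho)=E(\gamma(\rho))$ and $\dd E_{\mathrm{th}}/\dd S=T$, the mean value theorem for \eqref{eq:mean-value}, positivity of the free energy for the upper bound, and a second-order expansion around the Gibbs state for \eqref{eq:heat-dif-def}. The only minor stylistic difference is in the lower bound $T\Delta S_B\le\Delta Q$: the paper invokes concavity of the thermal boundary (tangent line lies above the curve), while you use the equivalent statement that $T(S)$ is monotone and integrate --- both encode the same convexity fact, and your case split on $\mathrm{sign}(\Delta S_B)$ is just the integral version of the paper's Taylor-remainder argument.
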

\begin{proof}
Eq.~\eqref{eq:heat-integral} is a consequence of the definitions of heat \eqref{eq:heat-def} 
and bound energy \eqref{eq:bound-energy-def}, together with Eq.~\eqref{eq:thermal-state}. 
\aw{Next, Eq.~\eqref{eq:mean-value}} is proven by using the mean value theorem for the function $T(s)$.
The lower bound in \eqref{eq:heat-defs-bounds} is due to the concavity of the thermal boundary together
with the reminder theorem of a first order Taylor expansion of the thermal boundary (see Fig.~\ref{fig:heat-definitions}). 
The upper bound in \eqref{eq:heat-defs-bounds}
is a consequence of the positivity of the free-energy.
Finally, Eq.~\eqref{eq:heat-dif-def} follows from the fact that the thermal state $\rho_B$ is a minimum of the 
standard free energy $F_{T}(\rho)=E(\rho)-T S(\rho)$, which implies $\Delta F_{T,B}=O(\delta \rho_B^2)$,
and $\Delta E_B=T \Delta S_B+O(\delta\rho_B^2)$. 
\end{proof}

\aw{Now that heat has been unambiguously defined, we can pass to the definition of work (cf.~\cite{FS:heat-and-work}).}

\begin{defi}[Work]
\label{defi:work} 
For an arbitrary entropy preserving transformation involving a system $A$ and its environment $B$, 
$\rho_{AB} \rightarrow \rho_{AB}'$, with fixed non-interacting Hamiltonians $H_A$ and $H_B$, 
the worked performed on the system $A$ is defined as
\begin{equation}
  \Delta W_A \coloneqq W - \Delta F_B,
\end{equation}
where $W=\Delta E_A + \Delta E_B$ is the work cost of implementing the global transformation 
and $\Delta F_B= F(\rho_B')-F(\rho_B)$ \aw{is the change in free energy of the bath}.
\end{defi}

\aw{Equipped} with the notions of heat and work, the first law 
takes the form of a mathematical identity.

\begin{lem}[First law]
\label{lm:1stLaw} 
For an arbitrary entropy preserving transformation involving a system $A$ and its environment $B$, 
$\rho_{AB} \rightarrow \rho_{AB}'$, with fixed non-interacting Hamiltonians $H_A$ and $H_B$, 
the change in energy for $A$ is distributed as
\begin{equation}
  \Delta E_A = \Delta W_A - \Delta Q .
\end{equation}
\end{lem}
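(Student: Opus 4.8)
The plan is to show that the asserted identity is purely an algebraic consequence of the definitions of work (Definition~\ref{defi:work}), heat (Definition~\ref{def:heat}) and free energy (Definition~\ref{defi:free-energy}), once one notes that for the non-interacting Hamiltonian $H_A\otimes\id+\id\otimes H_B$ the internal energy decomposes additively. In other words, the substantive content was already invested in \emph{choosing} these definitions so that the energy ledger closes; the first law then becomes a mathematical identity, and the ``proof'' is just the verification that the books balance.

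First I would record the global energy balance. Since $H_{AB}=H_A\otimes\id+\id\otimes H_B$, one has $E(\rho_{AB})=E(\rho_A)+E(\rho_B)$ for every state, hence $\Delta E_{AB}=\Delta E_A+\Delta E_B$. Because the transformation $\rho_{AB}\to\rho_{AB}'$ is entropy preserving and the Hamiltonians are held fixed, the work source exchanges energy but no entropy with $AB$, so the entire change $\Delta E_{AB}$ must be accounted for by the external work cost; this is precisely the quantity $W=\Delta E_A+\Delta E_B$ appearing in Definition~\ref{defi:work}. No independent argument is needed here since this is how $W$ is defined in the entropy-preserving framework of Section~\ref{sec:EPoperations}.

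Next I would unpack $\Delta F_B$ using $F(\rho_B)=E(\rho_B)-B(\rho_B)$ from Definition~\ref{defi:free-energy}, obtaining $\Delta F_B = \Delta E_B - \bigl(B(\rho_B')-B(\rho_B)\bigr) = \Delta E_B - \Delta Q$, where the second equality is exactly Definition~\ref{def:heat}. Substituting into Definition~\ref{defi:work} gives $\Delta W_A = W - \Delta F_B = (\Delta E_A+\Delta E_B) - (\Delta E_B-\Delta Q) = \Delta E_A + \Delta Q$, which rearranges to $\Delta E_A = \Delta W_A - \Delta Q$, as claimed.

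There is no real obstacle in the computation; the only point deserving a remark is the justification that $W=\Delta E_A+\Delta E_B$ correctly models the work cost, i.e.\ that the battery is an ideal energy reservoir that does not absorb or supply entropy — but this is part of the setup and of Definition~\ref{defi:work}, so nothing further is required. I would also note explicitly that the identity holds for an arbitrary environment $B$ (thermal or not) and for arbitrary correlations between $A$ and $B$, since neither the additivity of energy nor the definitions of $F$, $B$ and $\Delta Q$ used above invoked any such assumption.
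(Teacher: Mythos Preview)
Your proposal is correct and follows exactly the same approach as the paper, which simply states that ``the proof follows from the definitions of work and heat.'' You have merely spelled out explicitly the algebraic manipulation that the paper leaves implicit: expanding $\Delta W_A = W - \Delta F_B$ via $W=\Delta E_A+\Delta E_B$ and $\Delta F_B=\Delta E_B-\Delta Q$, then rearranging.
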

\begin{proof}
The proof follows from the definitions of work and heat. 
\end{proof}

\aw{The interpretation of this identity is that under a given process,} 
$-\Delta W_A$ is the amount of energy that can be transferred \aw{to} and stored in a battery. 
While the heat $\Delta Q_A$ is the change in energy due to flow of information from/into the system.

\section{Second law}
\label{sec:second}
The second law of thermodynamics is formulated in many different forms: \aw{as} an upper bound on the 
extracted work, the impossibility of converting heat into work completely, etc. 
In this section we show how all these formulations are a consequence of the principle of 
\aw{information} conservation. \aw{Note that the formulation of the second law following from global 
entropy conservation is not new (e.g. in \cite{Esposito10}), but it was thought to require 
a bath with well-defined temperature. 
What is new in the present approach is that we do not need even a bath, let alone that it is initially 
in a thermal state with pre-defined temperature.} 

\aw{One of the most important questions in thermodynamics, both historically and 
conceptually, is to which extent the energy in a heat bath can be converted into work. 
The second law puts a limit on the amount of work extracted.
Over the years, research has led to various formulations of 
second law in standard thermodynamics, like the Clausius statement, 
the Kelvin-Planck statement and the Carnot statement, to mention a few. 

Similar questions can be posed in the framework considered here, in terms of bound energy. 
In the following, we present analogous forms of second laws that consider these questions 
qualitatively as well as quantitatively.}

\subsection{Work extraction}

\ar{One of the major concerns in thermodynamics} is the \aw{conversion of any} 
form of energy into work, which can be used for any application with certainty. 

\begin{lem}[\aw{Second law: work extraction}]\label{lem:work-extraction}
For an arbitrary composite system $\rho$, the extractable work by 
any entropy preserving process $\rho \to \rho'$,
$W=E(\rho)-E(\rho')$ is upper-bounded by the free energy
\begin{equation}
  W \le F(\rho),
\end{equation}
\aw{which is saturated with equality} if and only if $\rho'=\gamma(\rho)$.

If the system \aw{initially is in the particular state} 
$\rho=\rho_A\otimes \gamma_B(T_B)$ with $\gamma_B(T_B)$ being thermal at temperature $T_B$, then
\begin{equation}\label{eq:recover-standard-free-energy}
  W \le F_{T_B}(\rho_A) - F_{T_B}\left(\gamma_A(T_B)\right)
\end{equation}
where $F_T(\cdot)$ is the standard \aw{out-of-equilibrium} free energy, and the \aw{inequality is saturated}
in the limit of an infinitely large bath (infinite heat capacity).
\end{lem}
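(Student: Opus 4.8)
The plan is to prove the general bound $W \le F(\rho)$ directly from the definition of free energy and the min-energy principle, and then specialize to the case of an initially thermal bath. For the first part, let $\rho \to \rho'$ be an arbitrary entropy preserving process, so $S(\rho) = S(\rho')$. By definition, $W = E(\rho) - E(\rho')$. Since $\rho'$ lies in the same entropic equivalence class as $\rho$, the bound energy definition \eqref{eq:bound-energy-def} gives $E(\rho') \ge B(\rho) = E(\gamma(\rho))$. Hence $W = E(\rho) - E(\rho') \le E(\rho) - B(\rho) = F(\rho)$, using Definition~\ref{defi:free-energy}. Equality holds precisely when $E(\rho') = B(\rho)$, i.e.\ when $\rho'$ achieves the minimum energy in its class; by the min-energy principle this forces $\rho' = \gamma(\rho)$, the completely passive state (unique for finite inverse temperature). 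This part is essentially bookkeeping with the definitions already in place.

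For the second part, I would take $\rho = \rho_A \otimes \gamma_B(T_B)$ with $\gamma_B(T_B)$ thermal at temperature $T_B$, and apply the general bound together with the relation $F(\rho) = \min_\beta \bigl(F_\beta(\rho) - F_\beta(\gamma(\beta))\bigr)$ from Lemma~\ref{lem:free-energy-beta}. In particular, choosing the (generally suboptimal) value $\beta = 1/T_B$ gives
\begin{equation}
  W \le F(\rho) \le F_{T_B}(\rho) - F_{T_B}\bigl(\gamma_{AB}(T_B)\bigr).
\end{equation}
Now I expand using additivity of energy and entropy over the tensor product $A \otimes B$: $F_{T_B}(\rho_A \otimes \gamma_B(T_B)) = F_{T_B}(\rho_A) + F_{T_B}(\gamma_B(T_B))$, and since $\gamma_{AB}(T_B) = \gamma_A(T_B) \otimes \gamma_B(T_B)$ by property (P3), also $F_{T_B}(\gamma_{AB}(T_B)) = F_{T_B}(\gamma_A(T_B)) + F_{T_B}(\gamma_B(T_B))$. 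The bath contributions $F_{T_B}(\gamma_B(T_B))$ cancel, leaving $W \le F_{T_B}(\rho_A) - F_{T_B}(\gamma_A(T_B))$, which is \eqref{eq:recover-standard-free-energy}.

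The delicate step is the equality claim: that \eqref{eq:recover-standard-free-energy} is saturated in the infinite-bath limit. Here the point is that the bound $W \le F(\rho)$ is tight only when one can reach $\gamma_{AB}(\rho)$, whose intrinsic temperature $\beta_{AB}$ sits between $\beta(\rho_A)$ and $\beta_B = 1/T_B$ by Lemma~\ref{lm:betaAB}; for a finite bath $\beta_{AB} \ne \beta_B$ strictly, so the intermediate inequality $F(\rho) \le F_{T_B}(\rho) - F_{T_B}(\gamma(\beta_B))$ is strict. As the bath size (heat capacity) grows, the addition of $\rho_A$ perturbs the bath's temperature by a vanishing amount, so $\beta_{AB} \to \beta_B$ and the gap closes; quantitatively I would estimate the temperature shift as $O(1/|B|)$ and control the resulting free-energy difference by a Taylor expansion around $\beta_B$, exactly as in the small-perturbation analysis in Lemma~\ref{lem:connections-heat} (Eq.~\eqref{eq:heat-dif-def}). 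Making this limiting argument rigorous — showing that the optimal $\beta$ in Lemma~\ref{lem:free-energy-beta} converges to $\beta_B$ and that the optimal final state converges to $\gamma_A(T_B) \otimes \gamma_B'(T_B)$ — is the main obstacle, and I expect the cleanest route is the energy-entropy diagram: in the limit, the thermal boundary of the composite is locally a straight line of slope $\beta_B$ through $x_{\gamma_B}$, so the horizontal distance from $x_{\rho_A \otimes \gamma_B}$ to it is exactly $F_{T_B}(\rho_A) - F_{T_B}(\gamma_A(T_B))$.
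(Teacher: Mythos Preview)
Your proposal is correct and follows essentially the same route as the paper: the bound $W\le F(\rho)$ from $E(\rho')\ge B(\rho')=B(\rho)$, then the specialization via Lemma~\ref{lem:free-energy-beta} and additivity of $F_{T_B}$ over the tensor product with the $\gamma_B(T_B)$ contributions cancelling. For the saturation in the infinite-bath limit, the paper makes your convergence $\beta_{AB}\to\beta_B$ concrete by replacing $B$ with $B^{\otimes n}$ and using $|\Delta S_B|\le \log\dim A$ to bound $S(\gamma_B(T_{AB}))-S(\gamma_B(T_B))\le (\log\dim A)/n$, which together with continuity of $T\mapsto S(\gamma(T))$ gives the limit directly---this is a bit cleaner than your Taylor/diagram sketch, but the idea is the same.
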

\begin{proof}
By using $E(\rho')\geq B(\rho')$, one gets
\begin{equation}
W=E(\rho)-E(\rho')\le E(\rho)-B(\rho')=F(\rho),
\end{equation}
where note that $B(\rho')=B(\rho)$.

When the \aw{initial state has the} composite structure 
$\rho=\rho_A \otimes \gamma_B(T_B)$, we get
\begin{equation}
\label{eq:work-dif-free-energies}
\begin{split}
  W &\leq F(\rho_A\otimes \gamma_B(T_B)) \\
    &\leq F_T(\rho_A\otimes \gamma_B(T_B)) - F_T(\gamma_A(T_B) \otimes \gamma_B(T_B)) \\
    &=    F_T(\rho_A) - F_T(\gamma_A(T_B)),
\end{split}
\end{equation}
we have used Lemma \ref{lem:free-energy-beta}.

Finally, in the limit of an infinitely large bath, the
intrinsic temperature $T_{AB}$ tends to the bath temperature
$T_B$ and $\Delta Q = T_B \Delta S_B$, which saturates the bound in 
Eq.~\eqref{eq:work-dif-free-energies}.

To see that the intrinsic temperature $T_{AB}$ tends to $T_B$ in the
limit of large baths,
let us increase the bath size by adding several copies of it. 
The entropy change of such bath reads
\begin{equation}
\begin{split}
\Delta S_B&=S(\gamma_B^{\otimes n}(T_{AB}))-S(\gamma_B^{\otimes n}(T_{B}))\\
&=n \left(S(\gamma_B(T_{AB}))-S(\gamma_B(T_{B}))\right) .
\end{split}
\end{equation}
By considering the bound $\Delta S_B = -\Delta S_A\le |A|$, we get
\begin{equation}
S(\gamma_B(T_{AB}))-S(\gamma_B(T_{B})) \leq \frac{|A|}{n} ,
\end{equation}
which, together with the continuity of $S(\gamma(T))$, proves $\lim_{n\to \infty}T_{AB}=T_B$.
\end{proof}

\subsection{Clausius statement}
\ar{The second law of thermodynamics not only dictates the direction of state transformations, 
but also establishes a fundamental bound on the amount of work that can be extracted by them}. 
Here we first concentrate on the bounds on extractable work and introduce analogous versions 
of \aw{the} second law in our setup. 

\begin{lem}[\aw{Second law: Clausius statement}]
Any \aw{iso-entropic} process involving two bodies $A$ and $B$
in an arbitrary state, with intrinsic temperatures $T_A$ and $T_B$, respectively
\aw{satisfies} the following inequality:
\begin{equation}\label{eq:generalized_Clausius}
  (T_B -T_A) \Delta S_A \geqslant \Delta F_A + \Delta F_B + T_B \Delta I(A:B) - W ,
\end{equation}
where $\Delta F_{A/B}$ is the change in the free energy of the body $A/B$,
$\Delta I (A:B)$ is the change of mutual information, 
and $W=\Delta E_A + \Delta E_B$ is the amount of external work performed on the \aw{total system}.

In \aw{the} absence of initial correlations between \aw{the} two bodies $A$ and $B$,
the states being initially thermal, and no external work performed, this implies
\begin{equation}
  (T_B -T_A) \Delta S_A \geqslant 0\, ,
\end{equation}
meaning that no \aw{iso-entropic} equilibration process is possible whose {\bf sole} 
result is the transfer of heat from a cooler to a hotter body.
\end{lem}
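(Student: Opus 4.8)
The plan is to prove the general inequality \eqref{eq:generalized_Clausius} first, and then obtain the Clausius statement as a specialization. The starting point is energy conservation: the external work performed satisfies $W = \Delta E_A + \Delta E_B$, while by definition $\Delta E_X = \Delta F_X + \Delta B_X$ for each body $X \in \{A,B\}$, where $\Delta B_X$ is the change in bound energy. So $W - \Delta F_A - \Delta F_B = \Delta B_A + \Delta B_B$, and the task reduces to lower-bounding $-(\Delta B_A + \Delta B_B)$ by $(T_B - T_A)\Delta S_A + T_B \Delta I(A:B)$, equivalently, to upper-bounding $\Delta B_A + \Delta B_B$ by $(T_A - T_B)\Delta S_A - T_B \Delta I(A:B)$.

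The key step is to control each bound-energy change by a linear-in-entropy estimate using the concavity of the thermal boundary. For a single body $X$, $B(\rho_X) = E(\gamma(\rho_X))$ is the value of the thermal-boundary curve at entropy $S_X$, and by Lemma~\ref{lem:connections-heat} (or directly from the mean value theorem applied to the curve, whose slope at entropy $S$ is $T(S) = \beta(S)^{-1}$) one has $\Delta B_X = \int_{S_X}^{S_X'} T(s)\,ds$. Concavity of the thermal boundary means $T(S)$ is monotonically increasing in $S$, hence for each body I would bound this integral: using the tangent line at the appropriate endpoint gives $\Delta B_X \le T_X' \Delta S_X$ when $\Delta S_X \ge 0$ and $\Delta B_X \le T_X \Delta S_X$ when $\Delta S_X \le 0$, where $T_X$, $T_X'$ are the initial and final intrinsic temperatures. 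To get a clean bound with the \emph{initial} temperatures appearing, I expect to use the bound $\Delta B_A \le T_A \Delta S_A$ on the cold body and $\Delta B_B \le T_B \Delta S_B$ on the hot body — but the direction of these estimates depends on the sign of $\Delta S_A$ and $\Delta S_B$, so some care is needed; the honest route is to first handle the physically relevant regime (heat flowing $B\to A$, so $\Delta S_B \le 0 \le \Delta S_A$) where concavity gives exactly $\Delta B_A \le T_A \Delta S_A$ and $\Delta B_B \le T_B \Delta S_B$ on the nose. Then $\Delta B_A + \Delta B_B \le T_A \Delta S_A + T_B \Delta S_B$. Now substitute $\Delta S_B = -\Delta S_A + \Delta I(A:B)$, which is exactly the entropy-balance identity $\Delta S_A + \Delta S_B = \Delta I(A:B)$ from the text: this yields $\Delta B_A + \Delta B_B \le (T_A - T_B)\Delta S_A + T_B\,\Delta I(A:B)$. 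Rearranging gives precisely \eqref{eq:generalized_Clausius}.

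For the specialization, set $W = 0$ (no external work), $\Delta I(A:B) = 0$ (the process starts uncorrelated — here one uses that $I(A:B) \ge 0$ always, so $\Delta I \ge 0$ when the initial mutual information is zero; and in the idealized quasi-static statement one takes it to vanish), and use that the bodies are \emph{initially thermal}, so that the initial free energies vanish, $F(\rho_A) = F(\rho_B) = 0$, whence $\Delta F_A = F(\rho_A') \ge 0$ and $\Delta F_B = F(\rho_B') \ge 0$ by positivity of the free energy (Lemma~\ref{lem:work-extraction}). Then the right-hand side of \eqref{eq:generalized_Clausius} is $\ge 0$, giving $(T_B - T_A)\Delta S_A \ge 0$. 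If $T_B > T_A$ (so $B$ is hotter) this forces $\Delta S_A \ge 0$, i.e.\ entropy — and with it heat, $\Delta Q_A = \int_{S_A}^{S_A'} T(s)\,ds \ge 0$ — cannot flow out of the colder body into the hotter one; a net transfer of heat from cooler to hotter as the sole result is impossible.

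The main obstacle I anticipate is the sign bookkeeping in the concavity estimate: making sure the bound $\Delta B_X \le T_X \Delta S_X$ with the \emph{initial} intrinsic temperature is the correct inequality in the relevant regime, and being transparent that the general inequality \eqref{eq:generalized_Clausius} as stated is the version that holds when $\Delta S_A$ and $\Delta S_B$ have opposite signs (the generic equilibration regime). Away from that regime one would instead get the analogous bound with final temperatures, or a mixture, which does not change the qualitative Clausius conclusion but complicates the clean two-temperature statement. The other small point to get right is that $\Delta I(A:B) \ge 0$ when the initial state is a product, which is what licenses dropping that term in the specialization.
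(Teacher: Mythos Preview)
Your overall strategy matches the paper's: decompose $W=\Delta F_A+\Delta F_B+\Delta B_A+\Delta B_B$, use the convexity of the thermal boundary to compare $\Delta B_X$ with $T_X\,\Delta S_X$, and then substitute the entropy balance $\Delta S_A+\Delta S_B=\Delta I(A{:}B)$. The gap is that you have the convexity inequality pointing the wrong way. Since $B(S)$ is convex (equivalently, $T(S)$ is increasing), the tangent at the \emph{initial} point lies \emph{below} the curve, so for every body and every sign of $\Delta S_X$ one has
\[
\Delta B_X \;=\;\int_{S_X}^{S_X'}T(s)\,ds\;\ge\;T_X\,\Delta S_X,
\]
with the \emph{final} temperature giving the upper bound, $\Delta B_X\le T_X'\,\Delta S_X$. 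Your claimed bounds ``$\Delta B_A\le T_A\,\Delta S_A$ and $\Delta B_B\le T_B\,\Delta S_B$'' are therefore reversed; in fact they hold with $\ge$, unconditionally and without any sign-regime caveats. Tracing this through, what your argument actually yields after substituting $\Delta S_B=\Delta I-\Delta S_A$ is
\[
\Delta B_A+\Delta B_B\;\ge\;T_A\,\Delta S_A+T_B\,\Delta S_B\;=\;(T_A-T_B)\,\Delta S_A+T_B\,\Delta I(A{:}B),
\]
which rearranges to the claimed \eqref{eq:generalized_Clausius} with the correct $\ge$; your version with $\le$ would give the opposite inequality, not the Clausius statement. (There is also a sign slip in your stated ``equivalently'' target, where $-T_B\,\Delta I$ should be $+T_B\,\Delta I$.)

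Once the inequality is put in the right direction, your specialization goes through: initially uncorrelated states give $\Delta I(A{:}B)\ge 0$, initially thermal marginals give $\Delta F_{A},\Delta F_{B}\ge 0$, and $W=0$ forces the right-hand side of \eqref{eq:generalized_Clausius} to be nonnegative, hence $(T_B-T_A)\,\Delta S_A\ge 0$. Note also that with the correct bound $\Delta B_X\ge T_X\,\Delta S_X$ there is no need to split into sign regimes for $\Delta S_A$ and $\Delta S_B$; the inequality holds uniformly, which is exactly how the paper proceeds.
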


\begin{proof}
The definition of free and bound energy implies that
\begin{equation}\label{eq:energy_balance}
W=\Delta F_A + \Delta F_B + \Delta Q_A + \Delta Q_B\, ,
\end{equation}
where have used the definition of heat as the change of bound energy of the environment.
From Eq.~\eqref{eq:heat-defs-bounds}, one gets,
\begin{equation}
\Delta Q_A +\Delta Q_B \geqslant T_B \Delta S_B + T_A \Delta S_A\, ,
\end{equation}
where $T_{A/B}$ is the initial intrinsic temperature of the body $A/B$.
Due to the conservation of the total entropy, the change in mutual information 
can be written as $\Delta I (A:B)=\Delta S_A+\Delta S_B$.
Putting everything together and noting that $\textrm{sign}(\Delta B)=\textrm{sign}(\Delta S)$ completes the proof.
\end{proof}

The terms \aw{on} the right hand side of Eq.~\eqref{eq:generalized_Clausius}
show the three reasons for which the standard Clausius statement can be violated. 
Either because of the process not being spontaneous (external work is performed, $W>0$), 
or due to initial states having free energy which is consumed, or the presence of 
correlations \cite{Bera16}. 

An alternative formulation of \aw{the} Clausius statement, 
for initial and final equilibrium states, is considered in Appendix \ref{App:C}.

\subsection{Kelvin-Planck statement}
While the Clausius statement tells us that heat cannot flow  spontaneously from a hotter to a colder body, 
the Kelvin-Plank formulation of second law states that, when heat going from a hotter to a colder body, 
it cannot be completely transformed into work.

\begin{lem}[\aw{Second law: Kelvin-Planck statement}]
Any \aw{iso-entropic} process involving two bodies $A$ and $B$
in an arbitrary state 
satisfies the following energy balance
\begin{equation}\label{eq:generalized_Kelvin-Planck}
  \Delta Q_B+\Delta Q_A = -(\Delta F_A + \Delta F_B) + W ,
\end{equation}
where $\Delta F_{A/B}$ is the change in the free energy of the body $A/B$, $\Delta Q_{A/B}$ 
the heat dissipated by the body $A/B$, and $W=\Delta E_A + \Delta E_B$ is the amount of external 
work performed on the \aw{total system}.

In the case of the reduced states being thermal, 
and for a work extracting process $W<0$, the above equality becomes
\begin{equation}\label{eq:Kelvin-Planck}
  \Delta Q_B+\Delta Q_A \leqslant  W < 0 .
\end{equation}
Finally, in \aw{the} absence of initial correlations, Eq.~\eqref{eq:Kelvin-Planck} implies that 
no \aw{iso-entropic} equilibration process is possible whose {\bf sole} result is the absorption 
of bound energy (heat) from an equilibrium state and its complete conversion into work.
\end{lem}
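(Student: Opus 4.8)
The plan is to prove the three assertions in turn; each is a short consequence of the definitions already set up, so the proof is mostly bookkeeping, and I will flag where the single genuinely conceptual point lies. First, for the energy-balance identity \eqref{eq:generalized_Kelvin-Planck} I would re-derive \eqref{eq:energy_balance}: writing the external work as $W=\Delta E_A+\Delta E_B$ and splitting each internal energy via $E(\rho_X)=F(\rho_X)+B(\rho_X)$ gives $\Delta E_A=\Delta F_A+\Delta B_A$ and $\Delta E_B=\Delta F_B+\Delta B_B$; by Definition~\ref{def:heat} the heat dissipated by $A$ is the change in the bound energy of its environment $B$, i.e.\ $\Delta Q_A=\Delta B_B$, and symmetrically $\Delta Q_B=\Delta B_A$, so substituting and rearranging yields $\Delta Q_A+\Delta Q_B=-(\Delta F_A+\Delta F_B)+W$.

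Second, to obtain \eqref{eq:Kelvin-Planck} I would use that initially thermal reduced states $\rho_A,\rho_B$ are \CP states, so $F(\rho_A)=F(\rho_B)=0$; hence by non-negativity of the free energy (equivalently, by the upper bound $\Delta Q\le\Delta E_B$ of \eqref{eq:heat-defs-bounds}) we have $\Delta F_A=F(\rho_A')\ge 0$ and $\Delta F_B=F(\rho_B')\ge 0$. Inserting this into the balance identity gives $\Delta Q_A+\Delta Q_B=-(\Delta F_A+\Delta F_B)+W\le W$, and for a work-extracting process $W<0$ this is precisely $\Delta Q_A+\Delta Q_B\le W<0$.

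Finally, for the impossibility I would first make precise what ``a process whose \textbf{sole} result is the absorption of bound energy (heat) from an equilibrium state and its complete conversion into work'' means: the initial state is $\rho_{AB}=\rho_A\otimes\gamma_B$ with $\gamma_B$ thermal, the final state has no $A$--$B$ correlations either, the working body $A$ is returned to its initial state (so the \emph{only} net change is heat turning into work), and $W<0$. Then additivity of the von Neumann entropy across the uncorrelated cut together with iso-entropicity gives $S(\rho_A')+S(\rho_B')=S(\rho_A)+S(\rho_B)$, and since $S(\rho_A')=S(\rho_A)$ we get $\Delta S_B=0$. But \eqref{eq:Kelvin-Planck} gives $\Delta Q_A+\Delta Q_B\le W<0$, while $A$ being cyclic forces $\Delta Q_B=\Delta B_A=0$, so $\Delta Q_A=\Delta B_B\le W<0$; since $\textrm{sign}(\Delta B)=\textrm{sign}(\Delta S)$ this would require $\Delta S_B<0$, contradicting $\Delta S_B=0$. (Equivalently, $\Delta S_B=0$ with $\gamma_B$ thermal and the min-energy principle forces $E(\rho_B')\ge E(\rho_B)$, hence $W=\Delta E_B\ge0$, the same contradiction.)

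The only step that is not pure bookkeeping is this last translation --- pinning down exactly what ``sole result'' and ``complete conversion into work'' amount to (cyclicity of $A$, no residual correlations, thermality of the bath) and then checking that, under those hypotheses, the iso-entropic constraint forces $\Delta S_B=0$ and hence clashes with \eqref{eq:Kelvin-Planck}. Everything else reuses the identities $E=F+B$ and $\Delta Q=\Delta B_{\text{env}}$, the non-negativity of $F$, and the sign relation $\textrm{sign}(\Delta B)=\textrm{sign}(\Delta S)$ already exploited in the proof of the Clausius statement.
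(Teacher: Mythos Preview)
Your arguments for \eqref{eq:generalized_Kelvin-Planck} and \eqref{eq:Kelvin-Planck} match the paper's exactly: the first is the energy balance $W=\Delta F_A+\Delta F_B+\Delta Q_A+\Delta Q_B$ obtained from $E=F+B$ and $\Delta Q_X=\Delta B_{\bar X}$, and the second follows because initially thermal marginals have zero free energy, hence $\Delta F_{A/B}\ge 0$.

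Where you diverge is the final impossibility clause. You read ``sole result'' as imposing \emph{additional} hypotheses --- cyclicity of $A$, absence of final correlations --- and then derive a contradiction from $\Delta S_B=0$. The paper does not add these; it works only with the stated hypothesis (no \emph{initial} correlations) and argues as follows: iso-entropy plus initial product structure gives $\Delta S_A+\Delta S_B=\Delta I(A{:}B)\ge 0$, while \eqref{eq:Kelvin-Planck} gives $\Delta B_A+\Delta B_B<0$; since $\mathrm{sign}(\Delta B_X)=\mathrm{sign}(\Delta S_X)$, these two constraints are incompatible with $\Delta B_A,\Delta B_B$ having the same sign, so $\mathrm{sign}(\Delta Q_A)=-\mathrm{sign}(\Delta Q_B)$. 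In words, one body must \emph{receive} bound energy whenever the other loses it, so the absorbed heat can never be converted entirely into work --- part of it is necessarily deposited as bound energy in the other body.

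Your route is not wrong under your stated reading, but it proves a narrower statement (it needs the working body to be cyclic and the final state uncorrelated), whereas the paper's two-line argument covers the general case with only the one extra hypothesis actually present in the lemma. The trade-off: your version delivers the sharper conclusion ``no work at all can be extracted'' in the cyclic setting; the paper's version delivers the engine-style conclusion ``heat must be dumped somewhere'' without any cyclicity assumption.
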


\begin{proof}
Eq.~\eqref{eq:generalized_Kelvin-Planck} is a consequence of the energy balance \eqref{eq:energy_balance}.
Eq.~\eqref{eq:Kelvin-Planck} follows from \eqref{eq:generalized_Kelvin-Planck} by considering reduced states that are initially thermal and thereby $\Delta F_{A/B}\geqslant 0$.
To prove the final statement is sufficient to notice that entropy preserving processes on initially uncorrelated systems fulfill $\Delta S_A + \Delta S_B \geqslant 0$, which together with \eqref{eq:Kelvin-Planck}
implies that $\textrm{sign}(\Delta Q_A) =-\textrm{sign}(\Delta Q_B)$.
\end{proof}

An alternative formulation of Kelvin-Planck statement, for initial and final equilibrium states, is considered
in Appendix \ref{App:KP}.

\subsection{Carnot statement}
A \aw{traditional} heat engine extracts work from a situation in which two \aw{thermal} baths 
$A$ and $B$ have different temperatures $T_A$ and $T_B$. The work extraction is usually implemented 
in practice by means of a working body $S$ which cyclically interacts with $A$ and $B$. Here we 
\aw{are not concerned with how precisely} the working medium interacts with the baths $A$ and $B$, 
but just assume that at the end of every cycle the working body is left in its initial state and uncorrelated 
with the two bath. In other words, the working body only has the role to move energy from one bath to the other. 
From this perspective, the \aw{heat engine can be analyzed} by studying
the changes of the environments $A$ and $B$.
In contrast to the standard situation, here we will not assume that \aw{the} 
baths are infinitely large, but that \aw{they} have a similar size as the system, 
and so the loss or gain of energy changes their (intrinsic) temperature. 

\aw{Without loss of generality we assume $T_A<T_B$ and $\rho_{AB}=\gamma_A(T_A)\otimes \gamma_B(T_B)$. 
After operating the machine for one or several complete cycles, the baths experience a change $\rho_{AB}\to \rho_{AB}'$.}

We define the efficiency of work extraction in a heat engine as the fraction of energy
that is taken from the hot bath which is transformed into work:
\begin{equation}
  \eta\coloneqq \frac{W}{-\Delta E_B},
\end{equation}
where $-\Delta E_B=E_B-E_B'>0$ is the energy taken from the hot bath, and $W$ is the work extracted.
In the following, we upper-bound the efficiency of any heat engine.

\begin{lem}[\aw{Second law: Carnot statement}]
For an engine working with two initially uncorrelated environments,
$\rho_{AB}=\gamma_A(T_A)\otimes \gamma_B(T_B)$, 
each in a local equilibrium state with intrinsic temperatures $T_B > T_A$, 
the efficiency of work extraction is bounded by
\begin{equation}
  \eta \leqslant 1 - \frac{\Delta B_A}{-\Delta B_B},
\end{equation}
where $\Delta B_A$ and $\Delta B_B$ are the \aw{changes} in bound energies of the systems $A$ and $B$, respectively.

In the limit of large baths and under global entropy preserving operations,
the Carnot efficiency is recovered,
\begin{equation}\label{eq:Carnot}
  \eta \leqslant 1 - \frac{T_A}{T_B} .
\end{equation}
\end{lem}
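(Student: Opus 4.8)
The plan is to establish the general bound $\eta \leqslant 1 - \frac{\Delta B_A}{-\Delta B_B}$ first, and then take the infinite-bath limit. For the general bound, I would start from the energy balance of an iso-entropic cycle in which the working body returns to its initial state uncorrelated from the baths, so that only $A$ and $B$ change: $W = -\Delta E_A - \Delta E_B$. Next I would decompose each bath's energy change into free and bound parts, $\Delta E_X = \Delta F_X + \Delta Q_X$ with $\Delta Q_X = \Delta B_X$ by Definition \ref{def:heat}, and use that the baths start in completely passive states, so by Lemma \ref{lem:work-extraction} (positivity of the free energy) and the fact that free energy vanishes on thermal states we have $\Delta F_A, \Delta F_B \geqslant 0$. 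Discarding these nonnegative terms gives $W \leqslant -\Delta B_A - \Delta B_B$. The key remaining input is the global iso-entropic constraint $\Delta S_A + \Delta S_B = \Delta I(A:B) \geqslant 0$ (initial states uncorrelated), which together with $\textrm{sign}(\Delta B_X) = \textrm{sign}(\Delta S_X)$ from {\it (P2)} forces heat to leave the hot bath ($\Delta B_B < 0$, since it is the one giving up energy) and enter the cold bath ($\Delta B_A > 0$). Dividing $W \leqslant -\Delta B_A - \Delta B_B$ by $-\Delta E_B > 0$ and using $-\Delta E_B \leqslant -\Delta B_B$ (again $\Delta F_B \geqslant 0$) yields $\eta \leqslant 1 - \frac{\Delta B_A}{-\Delta B_B}$.

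For the Carnot limit, the idea is that when both baths are infinitely large, their intrinsic temperatures barely move during the cycle, so the heat picked up by $A$ and released by $B$ is governed by the \emph{fixed} temperatures $T_A$ and $T_B$. Concretely I would invoke Lemma \ref{lem:connections-heat}: in the large-bath limit $\Delta Q_X = \Delta B_X = T_X \Delta S_X$ exactly (the three notions of heat coincide, Eq.~\eqref{eq:heat-dif-def}), and the global entropy conservation gives $\Delta S_A = -\Delta S_B$ (correlations with the working body vanish at cycle's end, and the $A$--$B$ mutual information change is negligible / nonnegative). Hence $\frac{\Delta B_A}{-\Delta B_B} = \frac{T_A \Delta S_A}{-T_B \Delta S_B} = \frac{T_A}{T_B}$, and the general bound collapses to $\eta \leqslant 1 - \frac{T_A}{T_B}$. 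I would justify the controlled approximation exactly as in the proof of Lemma \ref{lem:work-extraction}: adding $n$ copies of each bath, the per-copy entropy change is $O(1/n)$ by the bound $|\Delta S| \leqslant |S|$ on the working body, and continuity of $S(\gamma(T))$ pins the post-cycle temperatures to $T_A$ and $T_B$ as $n \to \infty$.

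The main obstacle I anticipate is bookkeeping the working body $S$ carefully. The cleanest route is to fold $S$ into the global entropy-preserving map and use that it is returned exactly to its initial state and uncorrelated with $AB$, so $\Delta S_S = 0$, $\Delta E_S = 0$, and the energy/entropy balance really is between $A$, $B$ and the battery only; but one should check that the battery (work reservoir) can be treated as an external classical system not carrying entropy, consistent with the framework's standing assumption in Section \ref{sec:first}. A secondary subtlety is the sign analysis: one must argue that in a genuine work-extracting cycle it is indeed $B$ that loses energy and $A$ that gains it — this is where $T_B > T_A$ enters, via {\it (P2)} and the iso-entropic constraint, ruling out the perverse case $\Delta S_B > 0$. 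Once these are pinned down, the inequalities are all one-line consequences of earlier lemmas, so I would keep the write-up short and defer the refrigerator/other-regime discussion, if any, to a remark.
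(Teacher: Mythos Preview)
Your overall strategy and the Carnot-limit part match the paper's, but there is a genuine gap in the derivation of the first bound. After establishing $W \leqslant -\Delta B_A - \Delta B_B$ and dividing by $-\Delta E_B > 0$, you obtain
\[
  \eta \;\leqslant\; \frac{-\Delta B_A - \Delta B_B}{-\Delta E_B}\,.
\]
You then invoke $-\Delta E_B \leqslant -\Delta B_B$ to replace the denominator. But since the numerator $-\Delta B_A - \Delta B_B$ is positive, passing to the \emph{larger} denominator $-\Delta B_B$ makes the right-hand side \emph{smaller}, not larger. Your chain therefore reads $\eta \leqslant X$ together with $X \geqslant 1 - \tfrac{\Delta B_A}{-\Delta B_B}$, which does not yield $\eta \leqslant 1 - \tfrac{\Delta B_A}{-\Delta B_B}$. (Concretely: $\Delta B_A=1$, $\Delta B_B=-3$, $\Delta F_A=0$, $\Delta F_B=1$ gives $\eta=1/2$, your intermediate bound $=1$, target $=2/3$.)

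The fix, and this is exactly what the paper does, is not to bound $W$ first but to keep the \emph{exact} identity
\[
  \eta \;=\; \frac{W}{-\Delta E_B} \;=\; 1 - \frac{\Delta E_A}{-\Delta E_B}
\]
and then bound the ratio: $\Delta E_A \geqslant \Delta B_A \geqslant 0$ decreases the numerator, $0 < -\Delta E_B \leqslant -\Delta B_B$ increases the denominator, and both push the fraction down to $\tfrac{\Delta B_A}{-\Delta B_B}$. Your sign analysis via $\Delta S_A + \Delta S_B \geqslant 0$ and \textit{(P2)} is exactly what is needed to justify $\Delta B_A \geqslant 0$ here, and your large-bath argument (use $\Delta B_X \to T_X \Delta S_X$ and then $\Delta S_A \geqslant -\Delta S_B$) coincides with the paper's. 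So only that one inequality step needs to be reorganised.
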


\begin{proof}
The maximum extractable work due to the transformation $\rho_{AB}\to \rho_{AB}'$ is given by
\begin{equation}
  W=F(\rho_{AB})-F(\rho_{AB}')=(-\Delta E_B) - \Delta E_A>0 .
\end{equation}
The efficiency then reads
\begin{equation}
  \eta = 1 - \frac{\Delta E_A}{-\Delta E_B} .
\end{equation}
The condition of $A$ being initially at equilibrium implies that
$\Delta F_A\geqslant 0$, from \aw{which it follows that} $\Delta E_A > \Delta B_A$, 
and analogously for $B$.
Thus, 
\begin{equation}
  \eta\leqslant 1 - \frac{\Delta B_A}{-\Delta B_B} .
\end{equation}
In the limit of large baths, in which $\Delta B_A \ll B_A$, 
the change in bound energy becomes $\Delta B_A=T_A \Delta S_A$.
Hence, 
\begin{equation}\label{eq:Carnot-step}
  \eta\leqslant 1 - \frac{T_A \Delta S_A}{-T_B \Delta S_B} .
\end{equation}
If the process is globally entropy preserving, i.e.~$S_{AB}'=S_A+S_B$,
then $\Delta S_A+\Delta S_B \geqslant 0$, or alternatively $\Delta S_A \geqslant -\Delta S_B$,
which together with \eqref{eq:Carnot-step} implies \eqref{eq:Carnot}.
\end{proof}

\section{\aw{Third law?}}
\label{sec:third-law}
The third law of thermodynamics establishes the impossibility of attaining the absolute zero temperature.
According to Nernst, ``it is impossible to reduce the entropy of a system to its absolute-zero value in 
a finite number of operations''.
The third law of thermodynamics has been very recently proven in a resource theoretic setting \cite{Masanes2017}.
The unattainability of absolute zero entropy is a consequence of the \aw{unitary} character of 
the transformations considered. For instance, consider the state transformation
\begin{equation}\label{eq:erasure}
    \rho_S\otimes \rho_B \longrightarrow \proj{0}\otimes\rho_B' ,
\end{equation}
where $\rho_B$ is \aw{the} thermal state of the bath $B$, \aw{which models the cooling down (erasure) of the system $S$}, 
initially in a state $\rho_S$, \aw{which we assume to have} $\textrm{rank}(\rho_S)>1$.
Here the dimension of the Hilbert space of the bath, $d_B$, is considered to be arbitrarily large but finite.
As $\rho_B$ is a full-rank state, the left hand side and the right hand side of Eq.~\eqref{eq:erasure} have 
different ranks, and they cannot be transformed via a unitary transformation.
Thus, irrespective of work supply, one cannot attain the absolute zero entropy state.

The zero entropy state can only be achieved for infinitely large baths and a sufficient work supply. 
Assuming a locality structure for the bath's Hamiltonian, this would take an infinitely long time.
However, in the finite dimensional case, a quantitative bound on the achievable temperature given a 
finite amount of resources (e.~g.\ work, time) is derived in \cite{Masanes2017}.

Note that our set of entropy preserving operations is more powerful than unitaries. 
\ar{In particular, transformation \eqref{eq:erasure} is always possible as long as 
the entropy of the system can be allocated in the bath, i.e.
\begin{equation}
  S(\rho_S)\leqslant \log d_B - S(\rho_B)\, .
\end{equation}
The required work to do so will be $W=F(\proj{0}\otimes\rho_B')-F(\rho_S\otimes \rho_B)$.}
As it has been discussed in Sec.~\ref{sec:EPoperations},
entropy preserving operations can be implemented
as global unitaries acting on \aw{asymptotically} many copies.
Thus, the attainability of the absolute zero entropy state by means of entropy preserving operations 
is in agreement with the cases of infinitely large baths and unitary operations \cite{Masanes2017}.

In conclusion, the third law of thermodynamics is a consequence of the microscopy reversibility 
(unitarity) of the transformation and is not respected by entropy preserving operations.

\section{A temperature-independent resource theory of thermodynamics}
\label{sec:resource}
In this section we connect the framework developed above with the \aw{recently developed ``standard''} 
approach to thermodynamics as a resource theory. The main ingredients of a resource theory are the 
state space, which is usually compatible with a composition operation ---for instance, quantum states 
together with the tensor product---, and a set of of allowed \aw{operations, which should
facilitate state transformations that are in general irreversible}. 
In a first attempt to have a resource theory of thermodynamics from our framework,  
\aw{it would seem that we are bound to transformations within an equivalence class, 
that is, between states on the same Hilbert space and with equal entropies, which is of course 
a very poor perspective}. 
To compare and explore \aw{the interconvertibility of} states with different entropies, 
it is necessary to extend the set of operations, and in particular to add a composition rule.
%
In order to be able to connect states with different entropies and/or number of particles it is natural to 
consider different number of copies, \aw{and ask about \emph{rates} of transformations.}

Let us restrict first to the case in which the set of operations is entropy preserving regardless \aw{of} the energy. 
We \aw{ask} what is the rate of transforming $\rho$ into $\sigma$ and consider \aw{first the case} that $S(\rho)\le S(\sigma)$. 
Then, \aw{for} large enough $n$ and $m$ we have
\begin{equation}
  \label{eq:equal-entropies}
  S(\rho^{\otimes n}) \sim S(\sigma^{\otimes m}).
\end{equation}
With such a trick of considering a different number of copies, we have brought $\rho$ and $\sigma$, 
which had different entropies, into the same manifold of equal entropy.
There is a subtlety here which is that $\rho^{\otimes n}$ and $\sigma^{\otimes m}$ live in general \aw{on different Hilbert spaces}. 
This can be easily circumvented by adding some product state, \aw{i.e.
\begin{equation}\label{eq:rho-to-sigma-times-0}
  \rho^{\otimes n} \longrightarrow \sigma^{\otimes m}\otimes \proj{0}^{\otimes (n-m)},
\end{equation}}
where $\rho^{\otimes n}$ and $\sigma^{\otimes m}\otimes \proj{0}^{\otimes (n-m)}$ \aw{now live on
the same Hilbert space}.
Eq.~\eqref{eq:rho-to-sigma-times-0} can also be understood as a process of \aw{information} compression,
in which the information in $n$ copies of $\rho$ is compressed to $m<n$ copies of $\sigma$,
and $n-m$ systems have been erased.
Thus, in the case of having only an entropy preserving constraint, the 
transformation rate can be determined from \eqref{eq:equal-entropies}
\begin{equation}
  \label{eq:rate-entropies}
  r \coloneqq \lim_{n\rightarrow \infty} \frac{m}{n} = \frac{S(\rho)}{S(\sigma)} .
\end{equation}

In thermodynamics, however, energy must also be taken into account. 
When one does that, it could well be that the process \eqref{eq:rho-to-sigma-times-0}
is not energetically \aw{allowed, since it might happen that $E(\rho^{\otimes n}) \neq E(\sigma^{\otimes m})$}.
In such a case, we would need more copies of $\rho$, in order for the transformation
to be energetically favorable. 
This would force us to add an entropic ancilla, since otherwise, 
the initial and final state of such a process would not have the same entropy.

\aw{Let us thus consider a transformation
\begin{equation}
  \rho^{\otimes n} \longrightarrow \sigma^{\otimes m}\otimes \phi^{\otimes (n-m)},
\end{equation}}
where the number of copies $n$ and $m$ have to \aw{fulfil} the energy and entropy conservation
\aw{constraints 
\begin{equation}
\begin{split}
  E(\rho^{\otimes n}) &\sim E(\sigma^{\otimes m}\otimes \phi^{\otimes (n-m)}) , \\
  S(\rho^{\otimes n}) &\sim S(\sigma^{\otimes m}\otimes \phi^{\otimes (n-m)}) .
\end{split}
\end{equation}}
The above conditions can be easily written as a geometric equation
of the points $x_\psi=(E(\psi),S(\psi))$ with $\psi \in \{\rho,\sigma,\phi \}$
in the energy-entropy diagram:
\begin{equation}\label{eq:convex-combination}
  x_\rho = r \ x_\sigma + (1-r) \ x_\phi ,
\end{equation}
where $r\coloneqq m/n$ is the conversion rate, and we have only used
the extensivity of both entropy and energy in the number of copies, 
e.g.~$E(\rho^{\otimes n})= n E(\rho)$.

Eq.~\eqref{eq:convex-combination} implies that the three points $x_{\rho,\sigma,\phi}$ need to be aligned. 
In addition, the fact that $0\le r \le 1$ implies
that $x_\rho$ lies in the segment \aw{connecting} $x_\sigma$ and $x_\phi$ (see Fig.~\ref{fig:convex-combination}).
The conversion rate $r$ has then a geometric interpretation.
It is the relative Euclidean distance between $x_\phi$ and $x_\rho$ over the total distance $\|x_\sigma-x_\phi\|$, 
or in other words, \aw{it} is the fraction of the path that one has run when going \aw{from} 
$x_\phi$ to $x_\sigma$ and reaches $x_\rho$ (see Fig.~\ref{fig:convex-combination}).

\begin{figure}
\includegraphics[width=0.8\columnwidth]{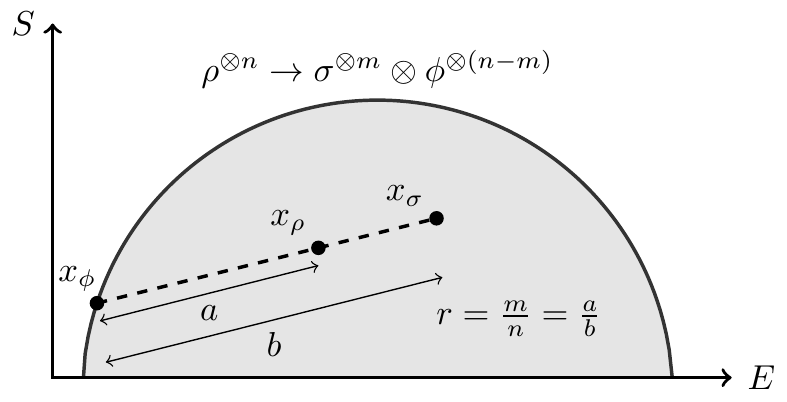}
\caption{Representation in the energy-entropy diagram of how the energy and entropy conservation
constraints in the transformation $\rho^{\otimes n}\to\sigma^{\otimes m}\otimes \phi^{\otimes n-m}$ 
imply $x_\rho$, $x_\sigma$ and $x_\phi$ to be aligned and $x_\rho$ to lie in between $x_\sigma$ and $x_\phi$.}
\label{fig:convex-combination}
\end{figure}

In order for the conversion rate $r$ from $\rho$ and $\sigma$ to be maximum, 
the state $\phi$ needs to lie in the boundary of the energy-entropy diagram, that is, 
it needs to be either a thermal or a pure state.
This allows us to quantitatively determine the conversion rate $r$ by
first finding the temperature of the thermal state $\phi$ which satisfies
\begin{equation}
  \frac{S(\sigma)-S(\phi)}{S(\sigma)-S(\rho)}=\frac{E(\sigma)-E(\phi)}{E(\sigma)-E(\rho)} ,
\end{equation}
and then
\begin{equation}\label{eq:interconvertibility-rate}
  \aw{r=\frac{S(\rho)-S(\phi)}{S(\sigma)-S(\phi)} .}
\end{equation}
See the geometrical interpretation in Fig.~\ref{fig:convex-combination}.

Note how Eq.~\eqref{eq:interconvertibility-rate} becomes \eqref{eq:rate-entropies} in the case
of $\phi$ being pure.
Let us also mention that, although at the first sight the inter-convertibility 
rate \eqref{eq:interconvertibility-rate} looks different from the one obtained in \cite{Sparaciari16}, 
it can be proven that they are the same. Eq.~\eqref{eq:interconvertibility-rate} is more compact 
than the one given \aw{there,} and its derivation much less technical.

\section{Thermodynamics with multiple conserved quantities}
\label{sec:charges}
The formalism for thermodynamics developed above can \aw{easily be} extended to situations
with multiple conserved quantities \aw{besides the energy, which we call ``charges''. These
could be particle numbers of certain chemical species, or physically conserved quantities
such as angular momentum. Such extensions to quantum thermodynamics, especially
in the resource setting, have been considered previously by several authors
\cite{Thermo-resource-charges-1,Thermo-resource-charges-2,Lostaglio2017,Guryanova2016,Halpern2016}.}

\aw{The setting is that each system comes with $q$ conserved quantities
$\hat L_k$ for $k=0,\ldots, q-1$, with $\hat Q_0=H$ the Hamiltonian. Here
we will assume that these quantities are all pairwise commuting. As before, when
we compose systems, the quantities are simply added: for another Hilbert space
$\mathcal{H}'$ with charges $\hat{L}_k'$ ($k=0,\ldots, q-1$), the joint Hilbert
space $\mathcal{H}\otimes\mathcal{H}'$ carries the global (or total) charges
$\hat{L}_k\otimes\1 + \1\otimes\hat{L}_k'$.}

\subsection{The charges-entropy diagram}
The energy-entropy diagram \aw{from the previous discussion is} extended to include the 
additional conserved quantities. Let us introduce the \emph{charges-entropy} diagram as:

\begin{defi}[Charges-entropy diagram]
The \emph{charges-entropy diagram} is the subset of points in $\mathbb{R}^{q+1}$ produced by 
all the states $\rho\in \mathcal{B}(\mathcal{H})$ with coordinates
\begin{equation}
  x_\rho = (L_0(\rho),L_1(\rho), \ldots,L_{q-1}, S(\rho)) ,
\end{equation}
where $L_k(\rho) \coloneqq \tr \hat L_k \rho$ for $k=0,\ldots,q-1$.
\aw{Note that in this section, to avoid confusion between the observable
(the operator $\hat{L}_k$) and the coordinate in the charges-entropy diagram
(the number $L_k$), we mark the former always by a hat.}
\end{defi}

\begin{figure}
\begin{center}
\vspace{.2cm}
\includegraphics[width=0.6\columnwidth]{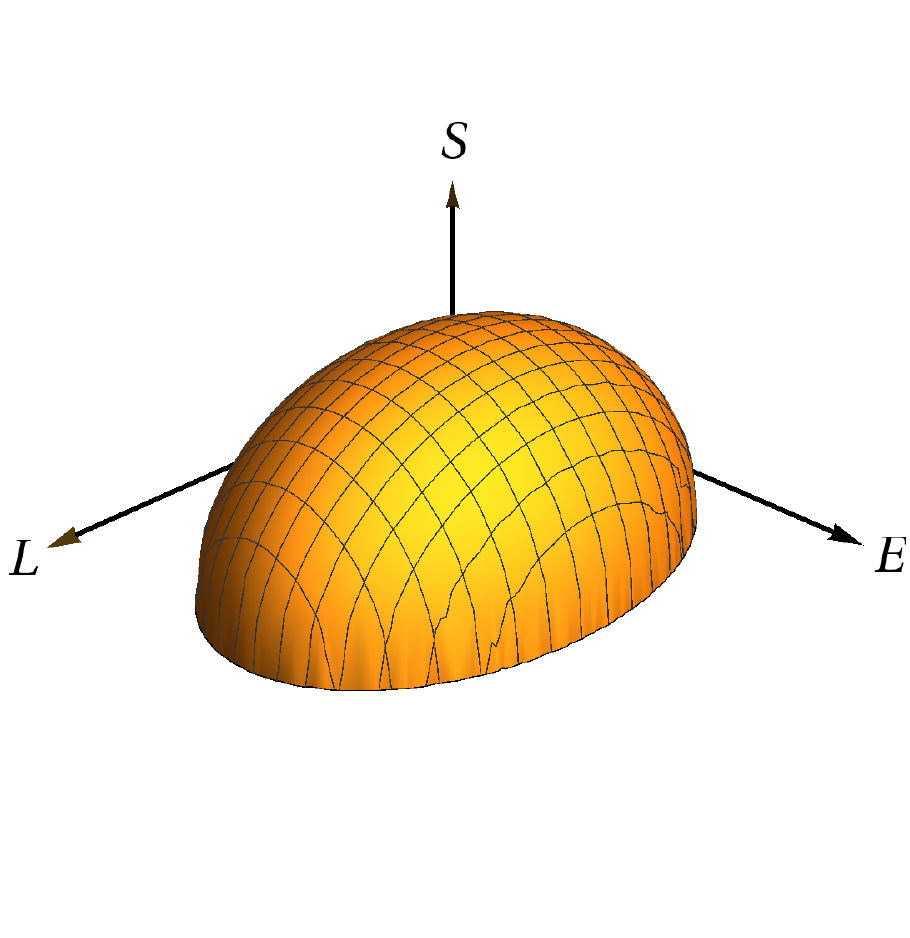}
\end{center}
\caption{\aw{Schematic of the charges-entropy} diagram of a system with two conserved quantities $E$ and $L$.}
\label{fig:thermo-diagram}
\end{figure}

\aw{Using the mutual commutation of the conserved quantities,}
we can easily show that the charges-entropy diagram forms a convex set in $\mathbb{R}^{q+1}$. 
To do so, let us first prove convexity for the zero entropy \aw{hyperplane},
which corresponds to the base of the diagram.
The zero entropy \aw{hypersurface} of the charges-entropy diagram is the set of points
in $\mathbb{R}^q$ with coordinates $x_\psi=(L_0(\psi),L_1(\psi),\ldots,L_{q-1}(\psi))$ for all 
\aw{normalized} pure states $\psi=\proj{\psi}$.
Let us consider in $\mathbb{R}^q$ the family of hyperplanes perpendicular to the unit vector 
$\vec \mu=(\mu_0,\ldots,\mu_{q-1})$ 
\begin{equation}\label{eq:hyper-plane}
 \vec \mu \cdot \vec L= \sum_{k=0}^{q-1} \mu_k L_k= C ,
\end{equation}
where $\vec L=(L_0,\ldots,L_{q-1})$ is a point in $\mathbb{R}^q$ and $C$ determines the \aw{hyperplane}. 
All points with coordinates $\vec L$ that fulfil Eq.~\eqref{eq:hyper-plane} 
belong to the hyperplane defined by $\vec \mu$ and $C$.
Given a direction $\vec \mu$,  there is an hyperplane that is particularly relevant, 
since it corresponds to the minimum possible value of $C$
\begin{equation}
C_{\min}(\vec \mu)=\min_{\psi\in {\cal H}}\,  
\frac{\bra{\psi}\vec \mu \cdot \vec{ \hat{L}}\ket{\psi}}{\langle\psi\ket \psi}
=\min_{\psi\in {\cal H}} \frac{\vec \mu \cdot \vec L(\psi)}{\langle\psi\ket \psi} .
\end{equation}
The minimum is reached by the eigenstate with smallest eigenvalue of the Hermitian operator 
$\vec \mu\cdot \hat L$ that we denote by $\ket{\psi (\vec \mu)}$, 
\begin{equation}
  \left( \sum_k \mu_k \hat L_k\right)\ket{\psi (\vec \mu)}= C_{\min}(\vec \mu) \ket{\psi (\vec \mu)} .
\end{equation}
\aw{Assume for the moment} that there is a unique ground state of $\vec \mu \cdot \vec{ \hat{L}}$. 
This means that the hyperplane defined by $C_{\min}(\vec \mu)$ is tangent to the charges-entropy diagram 
since it passes through one point of the diagram (corresponding to the single ground state 
$\ket{\psi(\vec \mu})$) leaving the rest of the diagram on the same side, 
i.e. $\vec \mu \cdot \vec L(\psi)\geqslant C_{\min}(\vec \mu)$ for all $\ket{\psi}\in {\cal H}$.

In case that there is a degenerate ground space described by a basis whose elements have different 
coordinates in the space of charges, the contact between
the hyperplane $C=C_{\min}$ and the zero entropy hypersurface is not a
point but a simplex of \aw{some higher} dimension, \aw{up to the dimension of the ground space
(i.e.~a segment in dimension $2$, triangle in dimension $3$, etc)}. 
As the conserved quantities are mutually commuting, there is a common eigenbasis
for all the $q$ charges.
\aw{Thus, there exists} a basis of the ground space of $\vec \mu\cdot \vec L$ which is also
eigenbasis of all the charges $L_k$ individually. If 
$\ket{\psi}$ and $\ket{\psi'}$ are two elements of that basis with different coordinates $\vec{L}$ and $\vec{L}'$, 
the state $\ket{\psi(\theta)}=\cos \theta \ket{\psi} + \sin\theta \ket{\psi'}$ with $\theta\in [0,\pi/2]$ has coordinates
\begin{equation}
  \label{eq:superposition_convex}
  \vec{L}(\psi(\theta))= (\cos^2\theta)\vec{L}  + (\sin^2\theta)\vec{L}',
\end{equation}
that is, the \aw{coordinates of $\psi(\theta)$} in the diagram are simply the corresponding 
convex combinations of the coordinates of $\vec L$ and $\vec L'$.

Property \eqref{eq:superposition_convex} holds for any two eigenstates of $\vec{\hat L}$,
in particular, for any two states $\ket{\psi}$ and $\ket{\psi'}$ that lie in the boundary 
of the zero-entropy hypersurface. 
Hence, as any point in the bulk of the zero-entropy hypersurface can be \aw{obtained} as a convex 
combination of two points of the boundary, all the points in the bulk belong to \aw{the} zero entropy surface.
Altogether, this proves that the zero-entropy surface is a convex set.

Once we understand the zero entropy hypersurface of the charges-entropy diagram, 
let us study its upper boundary.
The upper boundary of the charges-entropy diagram is described by 
the Generalized Gibbs Ensemble (GGE) instead of the canonical ensemble, i.e.
\begin{equation}
  \gamma(\vec \beta)\coloneqq \frac{1}{Z_{\vec \beta}}\e^{-\sum_k \beta_k \hat{L}_k}\, ,
\end{equation}
where $Z_{\vec \beta}\coloneqq \tr\bigl(\e^{-\sum_k \beta_k \hat{L}_k}\bigr)$ is the generalized partition function.
These states are precisely the states of maximum entropy among all \aw{those}
with prescribed expectation values $\langle \hat{L}_k\rangle$, $k=0,\ldots,q-1$.
We call this boundary the \emph{equilibrium boundary}.

The equilibrium boundary is mathematically described by all the points $\vec L(\rho)$
for which there exists a $\vec \beta \in \mathbb{R}^{q}$ such that
\begin{equation}
  \sum_{k=0}^{q-1} \beta_k L_k(\rho)-S(\rho) = -\log Z_\beta .
\end{equation}
Note that, given some $\vec \beta$, the points $\vec L(\rho)$ that fulfil
\begin{equation}\label{eq:family-hyperplanes}
  \sum_{k=0}^{q-1} \beta_k L_k(\rho)-S(\rho)=C ,
\end{equation}
belong to a hyperplane of dimension $q$. 
The hyperplane with $C=-\log Z_\beta$ is 
the one tangent to the equilibrium boundary.
The normal vector to this family of hyperplanes \eqref{eq:family-hyperplanes} is precisely $(\vec \beta, -1)$. 
This agrees with the fact that the tangent plane has a slope $\beta_k$ in the $k$-th direction, i.~e.
\begin{equation}\label{eq:partial-derivative}
  \frac{\partial S}{\partial L_k}=\beta_k \quad \forall\ k=0,\ldots,q-1 .
\end{equation}
Eq.~\eqref{eq:partial-derivative} can also be written in a vectorial form as
\begin{equation}
  \vec \beta = \vec \nabla S ,
\end{equation}
and $\vec \beta$ corresponds to the direction of maximal variation of the entropy.
This is shown in Fig.~\ref{fig:section-equilibrium-boundary}.

\begin{figure}
\begin{center}
\vspace{.2cm}
\includegraphics[width=0.8\columnwidth]{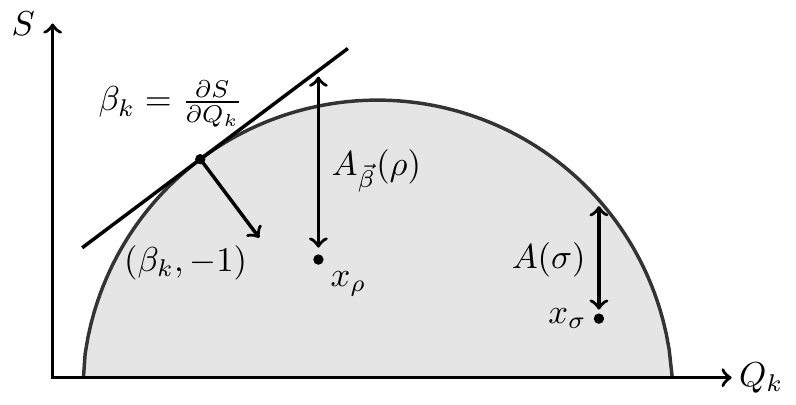}
\end{center}
\caption{Transverse section of the charges-entropy diagram in the \aw{$L_k$-$S$ plane}.
The normal vector to the tangent plane has coordinates $(\vec \beta, -1)$, 
which in the section appears as $(\beta_k, -1)$.
The $\vec \beta$-athermality, $A_{\vec \beta}(\rho)$, and the absolute athermality, 
$A(\sigma)$, are respectively represented for two states $\rho$ and $\sigma$.}
\label{fig:section-equilibrium-boundary}
\end{figure}

Note that in order to microscopically justify the charges-entropy diagram,
it must be shown that states with the same
expectation values for the conserved quantities
can be connected to each other using unitaries that commute with all charges,
and using sublinear ancillas, in the limit of many copies.
\aw{To state the theorem, we fix some notation. Consider a quantum system with mutually commuting 
charges $\hat{L}_0=H$, $\hat{L}_1,\ldots, \hat{L}_{q-1}$ on a finite dimensional Hilbert space $\mathcal{H}$.
The corresponding description of $n$ copies of this system has Hilbert space
$\mathcal{H}^{\otimes n}$ and charges
\begin{equation}
  \hat{L}_k^{(n)} = \sum_{i=1}^n \1_{\mathcal{H}^{\otimes i-1}} \otimes (\hat{L}_k)_i \otimes \1_{\mathcal{H}^{\otimes n-i}}.
\end{equation}}

\begin{theorem} 
\label{main theorem}
Two states $\rho$ and $\sigma$ of the above quantum system have the same 
entropy $S(\rho)=S(\sigma)$ and the average charge values $\Tr(\hat{Q}_k \rho)=\Tr(\hat{Q}_k \sigma)$ 
for all $k=0,\ldots,q-1$ if and only if for all $n$ there exists an 
ancillary quantum system with the Hilbert space $\mathcal{K}$ of dimension $2^{o(n)}$,
with charges $\hat{L}_0'=H$', $\hat{L}_1',\ldots, \hat{L}_{q-1}'$,
and a charge conserving unitary $U$ acting on $\mathcal{H}^{\otimes n} \otimes \mathcal{K}$,
such that 
\begin{align}
  \lim_{n\to \infty} \norm{U (\rho^{\otimes n} \otimes \omega') U^{\dagger} - \sigma^{\otimes n} \otimes \omega}_1 &= 0, \\
  \norm{\hat{L}_k'}_{\infty} \leq o(n) \ \text{ and }\ [U,\hat{L}_k^{(n)}+\hat{L}_k'] &=0 \quad \forall k,
\end{align}
where $\omega$ and $\omega'$ are suitable states of the ancillary system. 
\end{theorem}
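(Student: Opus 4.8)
The plan is to adapt the proof of the single-charge case, \cite[Thm.~1]{Sparaciari16}, exploiting that the $q$ charges commute: they possess a common eigenbasis $\set{\ket i}$ with charge vectors $\vec\ell_i=(\ell_{i,0},\dots,\ell_{i,q-1})$, and $\mathcal H^{\otimes n}$ splits into joint total-charge sectors $\mathcal H^{\otimes n}=\bigoplus_{\vec c}\mathcal K_{\vec c}$ on which every charge-conserving unitary acts block-diagonally. For the ``only if'' direction I would argue as follows. Write $X\coloneqq\rho^{\otimes n}\ox\omega'$ and $\epsilon_n\coloneqq\norm{UXU^\dagger-\sigma^{\otimes n}\ox\omega}_1$. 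Unitary invariance of the entropy gives $S(\rho^{\otimes n})+S(\omega')=S(UXU^\dagger)$, and the Fannes--Audenaert bound reads $\abs{S(UXU^\dagger)-S(\sigma^{\otimes n}\ox\omega)}\le\tfrac12\epsilon_n\log\dim(\mathcal H^{\otimes n}\ox\mathcal K)+1$; dividing by $n$ and using $S(\omega),S(\omega')\le\log\dim\mathcal K=o(n)$ together with $\log\dim(\mathcal H^{\otimes n}\ox\mathcal K)=O(n)$ forces $S(\rho)=S(\sigma)$ in the limit $\epsilon_n\to0$. Likewise $[U,\hat L_k^{(n)}+\hat L_k']=0$ equates the expectation of $\hat L_k^{(n)}+\hat L_k'$ in $X$ and in $\sigma^{\otimes n}\ox\omega$ up to $\epsilon_n\norm{\hat L_k^{(n)}+\hat L_k'}_\infty=\epsilon_n\,O(n)$; expanding, using $\norm{\hat L_k'}_\infty=o(n)$, and dividing by $n$ leaves a bound of order $\epsilon_n$ and forces $\Tr(\hat L_k\rho)=\Tr(\hat L_k\sigma)$ for every $k$.

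For the ``if'' direction the enabling fact is combinatorial: the total charge of a product basis vector depends only on its type, so among $n$ copies there are at most $(n+1)^{\dim\mathcal H-1}=\mathrm{poly}(n)$ distinct sectors $\vec c$, whence $\log(\#\text{sectors})=O(\dim\mathcal H\cdot\log n)=o(n)$. Moreover, since each $\hat L_k$ is bounded, the outcome of measuring the total charge on $\rho^{\otimes n}$ concentrates within $O(\sqrt{n\log n})$ of $n\vec L(\rho)$ by a central-limit/Chernoff estimate (and similarly for $\sigma$), and because $\vec L(\rho)=\vec L(\sigma)$ these two windows agree to leading order; the same is true, by the asymptotic equipartition property, of the $\log$-eigenvalue distributions of $\rho^{\otimes n}$ and $\sigma^{\otimes n}$, which have spread $O(\sqrt n)$ and the common mean $-nS(\rho)=-nS(\sigma)$.

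The construction I have in mind is to choose the ancilla $\omega'$ for $\rho$ (and symmetrically $\omega$ for $\sigma$) as a tensor product of a \emph{flattening} factor and a \emph{charge-buffer} factor. The flattening factor is a charge-free state with a geometric ``staircase'' spectrum (eigenvalue $\propto 2^{-j}$ with multiplicity $\propto 2^{j}$, $j=0,\dots,M$, $M=O(\sqrt{n\log n})$), so its $\log$-eigenvalue law is a discretized uniform law of width $O(\sqrt{n\log n})$; it has dimension $2^{O(\sqrt{n\log n})}$ and entropy $O(\sqrt{n\log n})=o(n)$. The charge-buffer factor is supported on a box of charge values of side $O(\sqrt{n\log n})$ (hence polynomial dimension and $o(n)$ charge norm), with a broad charge distribution of prescribed mean. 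Convolving the $\log$-eigenvalue law of $\rho^{\otimes n}$ with the flattening law yields, up to total variation $O(\sqrt n/\sqrt{n\log n})\to0$, a uniform law depending only on $S(\rho)=S(\sigma)$; convolving the measured-charge law of $\rho^{\otimes n}$ with the buffer law yields, up to the same vanishing error, a broad law depending only on $\vec L(\rho)=\vec L(\sigma)$. Thus $\rho^{\otimes n}\ox\omega'$ and $\sigma^{\otimes n}\ox\omega$ are trace-norm close to states with the same sorted spectrum \emph{and} the same per-sector spectrum over the global total-charge eigenspaces --- i.e.\ in each global sector both are close to maximally mixed on a subspace of the same dimension. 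Two such states are related by a unitary block-diagonal over the global sectors, i.e.\ one commuting with every $\hat L_k^{(n)}+\hat L_k'$; this is the desired $U$, the assembled ancilla $\mathcal K$ has dimension $2^{o(n)}$ and charge norm $o(n)$, and the accumulated trace-norm error is a finite sum of vanishing terms. (Equivalently, one may first dephase onto the $\mathrm{poly}(n)$ global sectors at entropy cost $o(n)$, then match sector by sector.)

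The main obstacle will be the compatibility claim just used: it is not enough that $\rho^{\otimes n}\ox\omega'$ and $\sigma^{\otimes n}\ox\omega$ have approximately the same overall spectrum and the same overall distribution of global charge; one must match the spectrum \emph{within each global charge sector}, with the matching error summable over the $\mathrm{poly}(n)$ sectors, and all of this with a single ancilla of dimension $2^{o(n)}$ and charge norm $o(n)$ --- in particular the staircase multiplicities must be integers and the boundary effects of the various convolutions must be bounded uniformly. This is precisely the technical core of \cite[Thm.~1]{Sparaciari16}, now carried out in $q$ dimensions; the commutativity of the charges is what makes it tractable, since it supplies the common eigenbasis and, via the method-of-types count, guarantees that both the dephasing cost and the required size of the charge buffer stay sublinear in $n$.
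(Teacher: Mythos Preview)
The paper does not actually prove this theorem: immediately after stating it, the authors write that ``the proof of this theorem (which generalises the asymptotic equivalence theorem of \cite{Sparaciari16}) is presented in our forthcoming work \cite{MultipleChargesThms}'', and they do not supply even a sketch beyond this remark. There is therefore no proof in the paper to compare your proposal against.

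That said, your outline is a sensible blueprint for how such a generalisation should go, and it matches the paper's one-line description (``generalises the asymptotic equivalence theorem of \cite{Sparaciari16}''). Your ``only if'' direction is essentially complete: Fannes--Audenaert for the entropy and the operator-norm bound for the charges, each divided by $n$, are exactly the right tools and the arithmetic is correct. For the ``if'' direction you have correctly identified the structural ingredients that make the commuting case tractable --- the common eigenbasis, the polynomial number of joint charge sectors via type counting, and the concentration of both the empirical charge vector and the log-spectrum --- and you have also correctly isolated the genuine technical core: it is not enough to match the global spectrum and the global charge distribution separately; one must match the spectrum \emph{conditioned on each charge sector}, and do so with error summable over $\mathrm{poly}(n)$ sectors using a single $2^{o(n)}$-dimensional ancilla. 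Your honest flagging of this as ``the main obstacle'' is appropriate: the convolution argument you sketch (flattening plus charge buffer) is the right shape, but turning it into a proof requires showing that the joint distribution of $(\text{charge sector},\text{log-eigenvalue})$ on $\rho^{\otimes n}\otimes\omega'$ and on $\sigma^{\otimes n}\otimes\omega$ can be made to coincide in total variation, not merely the two marginals; this typically needs a more careful joint-typicality argument rather than two independent convolutions. Absent the companion paper \cite{MultipleChargesThms}, one cannot say whether your construction coincides with theirs, but the strategy is the natural one.
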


\aw{The proof of this theorem (which generalises the asymptotic equivalence theorem of
\cite{Sparaciari16}) is presented in our forthcoming work \cite{MultipleChargesThms}.
There, we extensively study this issue, and show how to generalize it even further
from commuting to non-commuting conserved charges. In the following, we limit ourselves 
to the main points.}

\subsection{Athermality and free entropy}
Proceeding as \aw{before}, let us introduce some quantities that will be relevant
in the following.

\begin{defi}[$\vec \beta$-athermality]
The $\vec \beta$-athermality of a state $\rho$ is defined as
\begin{equation}
  A_{\vec \beta}(\rho) \coloneqq \sum_{k=0}^{q-1} \beta_k L_k(\rho)-S(\rho)+\log Z_{\vec \beta} .
\end{equation}
\end{defi}

The $\vec \beta$-athermality of a point with coordinates $(\vec L(\rho),S(\rho))$ 
can be interpreted geometrically in the charges-entropy diagram 
as the vertical distance from the hyperplane tangent to the equilibrium boundary 
with normal vector $(\vec \beta, -1)$. This is represented in Fig.~\ref{fig:section-equilibrium-boundary}.
Note that the $\vec \beta$-athermality is also introduced in \cite{Guryanova2016}
\aw{under the name} \emph{free entropy}.

The $\vec \beta$-athermality can also be written as
\begin{equation}
  A_{\vec \beta}(\rho)=D(\rho\|\gamma(\vec \beta)) ,
\end{equation}
with $D(\rho\|\sigma)\coloneqq \tr(\rho \log \rho-\rho \log \sigma)$ being the relative entropy.
Hence, because of the \aw{non-negativity} of the relative entropy $D(\rho\|\sigma)\geqslant 0$, 
we have that
\begin{equation}
  S(\rho) \leqslant \sum_{k=0}^{q-1} \beta_k L_k(\rho)+\log \aw{Z_{\vec\beta}}.
\end{equation}
Note now that the right hand side corresponds to the entropy coordinate of the hyperplane
with normal vector $\vec \beta$ tangent to the equilibrium boundary.
Thus, any tangent hyperplane with normal vector $\vec \beta$ leaves all the points of the 
charges-entropy diagram below it. This \aw{is consistent with what we saw before,} 
that the charges entropy diagram is a convex set.

In a similar way we define the \emph{absolute athermality} or simply athermality 
in the following.

\begin{defi}[Absolute athermality]
The absolute athermality of a state $\rho$ is defined as
\begin{equation}
  A(\rho)\coloneqq \min_{\vec \beta \in \mathbb{R}^q} A_{\vec \beta}(\rho) .
\end{equation}
\end{defi}

The athermality of a state $\rho$ can be understood geometrically as the vertical distance 
from the thermal boundary (see Fig.~\ref{fig:section-equilibrium-boundary}).

\subsection{Charge extraction}
A first relevant scenario of thermodynamics with multiple conserved quantities is the extraction 
of a charge of the system while keeping the \aw{other charges constant}.
For \aw{this subset} of operations, the system is restricted to move along a straight line 
in the direction of the extracted charge within the \aw{iso-entropic} hyperplane (see Fig.~\ref{fig:constant-entropy-section}).

The bound charge for such \aw{a} scenario is given by
\begin{equation}\label{eq:bound-charge-single}
 B_k(\rho) \coloneqq  \min_{\substack{\sigma \, : \, S(\sigma)=S(\rho), \\ L_i(\sigma)=L_i(\rho)\ \forall i\ne k}} L_k(\sigma),
\end{equation}
where $\gamma_k(\rho)$ is the GGE state that attains the minimum and corresponds to the
point of the equilibrium boundary which is intersected by the straight-line with direction $k$
which passes through $\rho$. \aw{As in the case of only energy, it is easy to see that
$B_k(\rho)=L_k(\gamma_k(\rho))$.}
This is represented in Fig.~\ref{fig:constant-entropy-section} (left).
The direction $\vec \beta$ corresponding to $\gamma_k$ can be geometrically determined 
as the normal vector of the tangent plane to the equilibrium surface in that point. 

The free charge $F_k(\rho)$ is \aw{now analogously defined as}
\begin{equation}
  F_k(\rho)= L_k(\rho)- B_k(\rho),
\end{equation}
and corresponds to the maximum amount of charge that can be extracted given the
restrictions of constant entropy and charges.
This is represented in Fig.~\ref{fig:constant-entropy-section} (left) for a case of 2 
conserved quantities.

A detailed and full rigorous analysis of 
the charge extraction by means of unitary conserving operations in the 
many copy limit (for both commuting and non-commuting charges) will be made in our 
forthcoming work \aw{\cite{MultipleChargesThms}}.

\begin{figure}
\begin{center}
\vspace{.2cm}
\includegraphics[width=0.45\columnwidth]{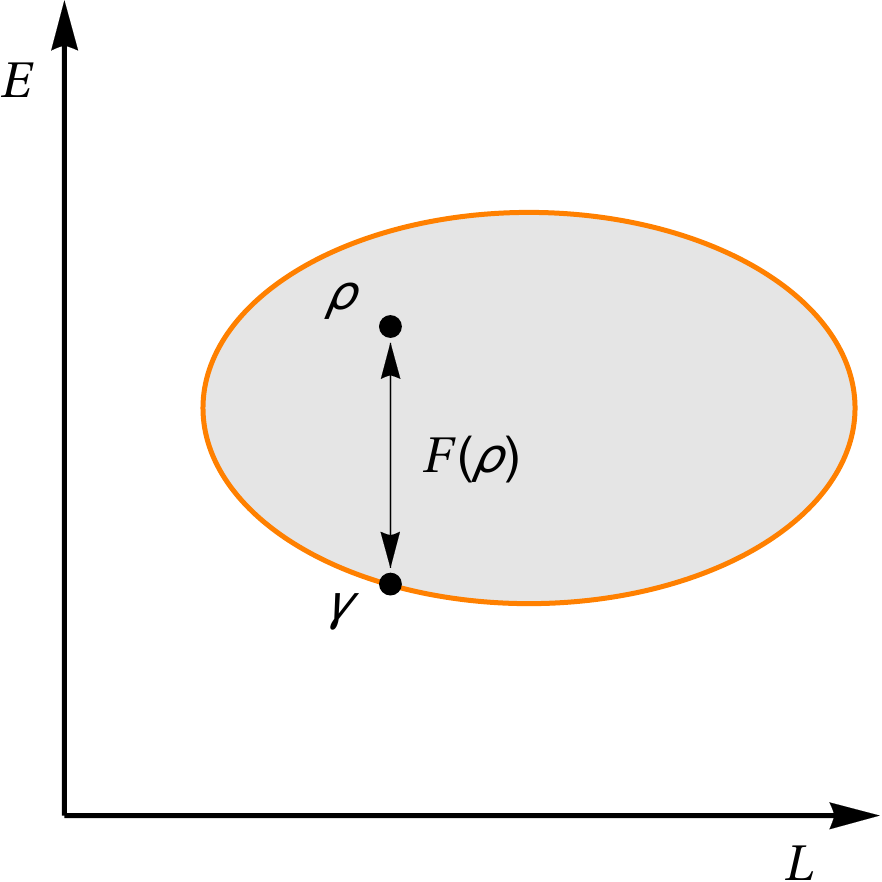}
\includegraphics[width=0.45\columnwidth]{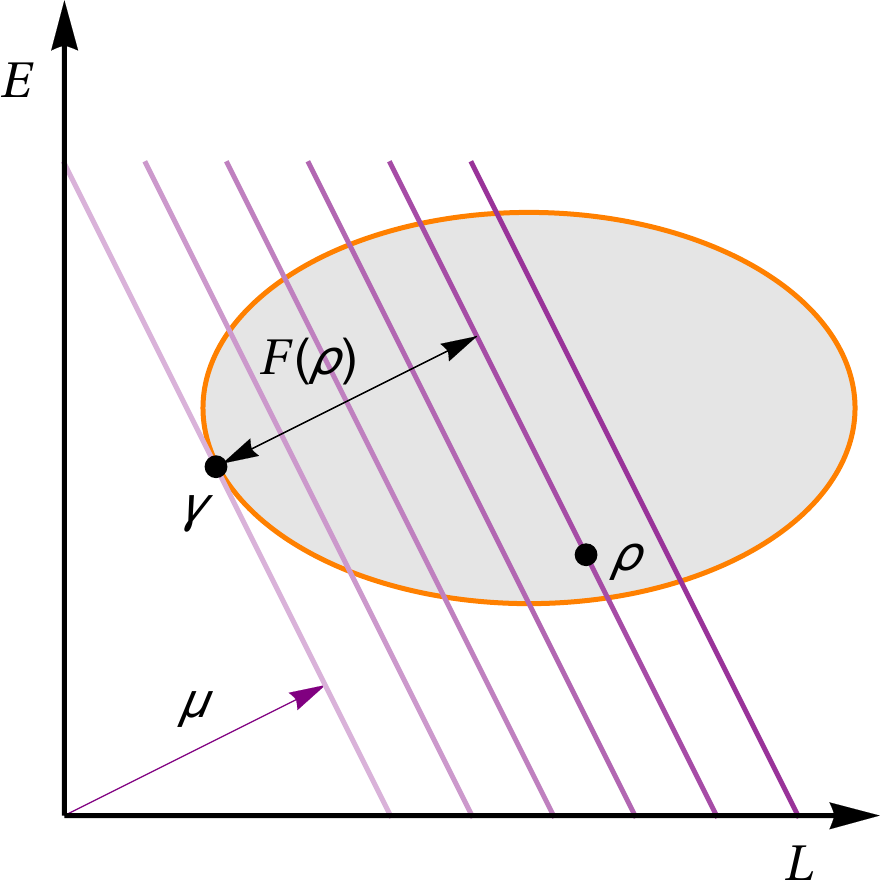}
\end{center}
\caption{Constant entropy section of a charges-entropy diagram with 2 charges $E$ and $L$.
(Left) Scenario of extraction of a single charge while keeping constant the
other charges. The system is constrained to move in one dimensional line and free-charge $F(\rho)$ of a
state $\rho$ is the distance from the boundary in such direction.
(Right) Scenario in which the charges are allowed to change and the aim is the extraction 
of a potential $V_{\hat \mu}(\rho)=\sum_{k=0}^{q-1} \mu_k L_k(\rho)$. 
The parallel purple lines represent equipotential surfaces of $V_{\hat \mu}$. 
The GGE state $\gamma$ is the state which minimizes $V$ and belongs to the hyperplane which is
tan tangent to the equilibrium boundary.}
\label{fig:constant-entropy-section}
\end{figure}

\subsection{Extraction of a \aw{generalized potential}}
An different, but also relevant scenario is when the set of operations
\aw{allowed for the variation of the charges is unrestricted, and the aim
is the extraction of a \emph{generalized potential}; cf.~\cite{Guryanova2016,Halpern2016}}
\begin{equation}
  V_{\vec\mu}(\rho)\coloneqq\sum_{k=0}^{q-1} \mu_k L_k(\rho),
\end{equation}
where $\vec\mu\coloneqq (\mu_0,\ldots,\mu_{q-1})$ specifies the weight $\mu_k$ of every charge 
$L_k$ in the linear combination $V_{\vec\mu}$.
For convenience, $\vec\mu$ is assumed to be normalized $\norm{\vec\mu}=1$ in the Euclidean norm
and its components to be positive $\mu_k\geqslant 0$.
(An alternative normalization could be to take the coefficient of the Hamiltonian $\mu_0=1$ as
it is usually \aw{done in the grand canonical} ensemble.)

In this setting, the \emph{bound potential} is defined as
\begin{equation}\label{eq:bound-charge}
 B_{\vec\mu}(\rho) \coloneqq  \min_{\sigma \, : \, S(\sigma)=S(\rho)} V_{\vec\mu}
 (\sigma) ,
\end{equation}
where $\gamma_{\vec\mu}(\rho)$ is the state that attains the minimum which is again a GGE.
\aw{As before, $B_{\vec\mu}(\rho) = V_{\vec\mu}(\gamma_{\vec\mu}(\rho))$.}
Note that the $\vec \beta$ values of the state $\gamma_{\vec\mu}(\rho)$ need to be proportional
to the \aw{unit} vector $\vec\mu$
\begin{equation}
  \vec \beta = \beta \vec\mu ,
\end{equation}
with $\beta$ a scalar that is determined by the \aw{condition of equal entropies, $S(\rho)=S(\gamma_{\vec\mu}(\rho))$}.
This can be seen geometrically in Fig.~\ref{fig:constant-entropy-section} (right)
or analytically by making the simple observation that, 
\aw{regarding $\tilde H \coloneqq V_{\vec\mu}$ as a new Hamiltonian},
the \aw{min-energy} principle singles out $\e^{-\beta \tilde H}$ as the
state that attains the minimum.

Analogously, the free potential is given by
\begin{equation}
F_{\vec\mu}(\rho)= V_{\vec\mu}(\rho)- B_{\vec\mu}(\rho),
\end{equation}
and corresponds to the maximum amount of generalized potential $V_{\vec\mu}$ 
that can be extracted under entropy preserving operations.
This situation has been diagrammatically represented in Fig.~\ref{fig:constant-entropy-section}.

The scenario with a generalized potential $V_{\vec\mu}$ is analogue to the single charge situation. 
Both first and second laws can be stated as in Secs.~\ref{sec:first} and \ref{sec:second},
replacing the free energy $F(\rho)$ by $F_{\vec\mu}(\rho)$.

\subsection{Second law}
Any process (not necessarily entropy preserving) 
that brings an initial generalized Gibbs state $\gamma_B(\vec \beta)$ out of equilibrium fulfils
\begin{equation}\label{eq:2nd-law-multiple-charges}
  \sum_k\beta_k \Delta L_k^B \geqslant \Delta S_B ,
\end{equation}
where the inequality is only saturated in the limit of large baths (small variations of entropy) 
and when final state is also at equilibrium (GGE).
Geometrically, Eq.~\eqref{eq:2nd-law-multiple-charges} is a trivial consequence that the charges entropy
diagram is upper bounded by any of its tangent planes.

In the particular case of an entropy preserving process 
on an initial bipartite state $\rho_A\otimes \gamma_B(\vec \beta)$, 
the change in mutual information between the subsystems $A$ and $B$ \aw{is}
\begin{equation}
\Delta I = \Delta S_A+\Delta S_B \geqslant 0 ,
\end{equation}
which together with Eq.~\eqref{eq:2nd-law-multiple-charges} implies
\begin{equation}\label{eq:2nd-law-system}
  \sum_k\beta_k \Delta L_k^B \geqslant -\Delta S_A .
\end{equation}
Note that the above equation \eqref{eq:2nd-law-system} is equivalent to
the second law formulated in \cite{Guryanova2016}.
We see here that such a law is also valid for baths of arbitrary small sizes.
Eq.~\eqref{eq:2nd-law-system} is \aw{asymptotically saturated with equality} in the limit of large system sizes and
in \aw{the} absence of correlations in the final state $\rho_{AB}'=\rho_A'\otimes \gamma_B'$.

\subsection{\aw{Interconvertibility} rates}
The \aw{rate of interconversion} from a quantum state $\rho$ to a state $\sigma$, \aw{in the
same sense as in Section \ref{sec:resource}, i.e.
\begin{equation}
  \rho^{\otimes n} \longrightarrow \sigma^{\otimes m}\otimes \phi^{\otimes (n-m)},
\end{equation}
is given by the conservation of the entropy and the charges $L_k$:
\begin{equation}
\begin{split}
  L_k(\rho^{\otimes n}) &\sim L_k(\sigma^{\otimes m}\otimes \phi^{\otimes (n-m)}) \ \forall \, k=0,\ldots,q-1 , \\
  S(\rho^{\otimes n})   &\sim S(\sigma^{\otimes m}\otimes \phi^{\otimes (n-m)}) .
\end{split}
\end{equation}}
The above conditions can again be written as a geometric equation
\aw{for the points $x_\xi=(L_0(\xi),\ldots,L_{q-1}(\xi),S(\xi)) \in \mathbb{R}^{q+1}$ in the energy-entropy diagram, 
with $\xi \in \{\rho,\sigma,\phi \}$:}
\begin{equation}\label{eq:convex-combination-multiple}
  x_\rho = r \ x_\sigma + (1-r) \ x_\phi ,
\end{equation}
where we have merely used the extensivity of both entropy and the conserved quantities in the 
number of copies, e.g.~$S(\rho^{\otimes n})= n S(\rho)$.

Thus, by means of the same argument used in the previous section, 
the convertibility rate from $\rho$ to $\sigma$ reads
\begin{equation}\label{eq:interconvertibility-rate-multiple}
  r = \lim_{n\rightarrow\infty} \frac{m}{n} = \frac{S(\rho)-S(\phi)}{S(\sigma)-S(\phi)},
\end{equation}
where the state $\phi$ is determined from the following set of $q$ equations
\begin{equation}
  \frac{S(\sigma)-S(\phi)}{S(\sigma)-S(\rho)} = \frac{L_k(\sigma)-L_k(\phi)}{L_k(\sigma)-L_k(\rho)} 
  \quad \forall\, k=0,\ldots,q-1 .
\end{equation}
The state $\phi$ can be found geometrically in the charges-entropy diagram as the point in the 
boundary \aw{of the charges-entropy diagram} that 
\aw{intersects the straight half-line that goes through $\rho$ from $\sigma$
(see Fig.~\ref{fig:convex-combination} for the single charge example).}

\section{Discussion}
\label{sec:discuss}
In the recent years, thermodynamics, and in particular work extraction from non-equilibrium states, 
has been studied in the quantum domain, giving rise to radically new insights into quantum thermal 
processes. However, in \aw{the standard treatment} of thermodynamics, be it classical or quantum, 
thermal baths are considered to be \aw{asymptotically} large in size. That is why if a system is 
attached \aw{to} a bath and allowed to exchange energy and entropy, the bath stays intact and its 
temperature remains unchanged. 
That is also why the equilibrated system finally acquires the same temperature \aw{as} the bath. 
However, if one goes beyond this assumption and considers \aw{the} bath to be a finite system, \aw{too,} 
then traditional thermodynamics breaks down. This situation is very much relevant for thermodynamics 
in the quantum regime, where both system and bath may be small. 
The first problem that appears in such a situation is the notion of temperature itself, since the 
finite bath may go out of thermal equilibrium due to the exchange of energy with the system. 
Therefore, it is absolutely necessary to develop a temperature independent universal thermodynamics, 
in which the bath could be small or large,  and would not  get any special status.   

In this work, we have formulated temperature independent thermodynamics as an exclusive consequence 
of information conservation. We have relied on the fact that, for a given amount of information, 
measured by the von Neumann entropy, any system can only be transformed to states with the same entropy. 
Given this constraint of information conservation, there is a \aw{distinguished} state within the constant 
entropy manifold, which is the state that possesses minimal energy. 
This state is known as a {\it completely passive state} and acquires \aw{the} \aw{Boltzmann-Gibbs} canonical 
form with an intrinsic temperature. 
We call the energy of the completely passive state the bound energy, since no further energy can be 
extracted by means of entropy preserving operations.
Thus, for a given state, the difference between its energy and bound energy corresponds to the maximum 
amount of energy that can be extracted in form of work and we \aw{call it} free energy.
In fact, in this framework, two states that have the same entropy and energy are thermodynamically 
equivalent \cite{Sparaciari16}.
The thermodynamic equivalence between equal entropy and energy states has allowed us to use of the 
energy-entropy diagram to illustrate the notions of bound and free energies in a geometric way.
We have introduced a new definition of heat for arbitrary systems in terms of \aw{the bound energy
of the bath}.

We have seen that the laws of thermodynamics are a consequence of the reversible dynamics of the 
underlying physical theory. In particular:
\begin{itemize}
\item \emph{Zeroth law} emerges as the consequence of information conservation.
\item \emph{First and second \aw{law}} emerge as the consequence of energy conservation, together with information conservation.
\item \emph{Third law} emerges as the consequence of \aw{fine-grained} information conservation (microscopic reversibility or unitarity). 
      Therefore, there is no third law if one considers \aw{coarse-grained} information conservation, \aw{as we did here}. 
\end{itemize} 

We have demonstrated that the maximum efficiency of a quantum engine with a finite bath is in general
lower than that of an ideal Carnot engine. We have introduced a resource theoretic framework for our 
intrinsic thermodynamics, within which we address the problem of work extraction and inter-state transformations. 
All these results have been illustrated by means of the energy-entropy diagram. 
Furthermore, we \aw{gave} a geometric interpretation in the diagram \aw{of} 
the relevant thermodynamic quantities,
as well as the inter-convertibility rate between quantum states under entropy and energy preserving operations.

The information conservation based framework for thermodynamics, as well as the resource theory and the 
energy-entropy diagram, is also extended to multiple conserved quantities, \aw{as long as they commute with each other}. 
In this case, the energy-entropy diagram becomes the charges-entropy diagram and allows us to understand 
thermodynamics in a geometrical way.
In particular, we have studied the extraction of a single charge while keeping the other charges 
conserved as well as the extraction of a generalized potential. 
In the first scenario, we have seen that the maximum work
extractable from any state by operations that asymptotically
conserve the given charges is the difference between the
free energy of the state and that of the \aw{iso-entropic} generalized
grand canonical Gibbs state.
Concerning the extraction of a generalized potential (a linear combination of charges),
we have shown that it is analogous to the work extraction (the single charge case).
Finally, we have determined the \aw{interconvertibility} rates between states
with different entropy and charges. \aw{A deeper investigation of the setting
with multiple charges, including the generalisation to non-commuting conserved
quantities, is left to our forthcoming paper \cite{MultipleChargesThms}.}

In general, thermodynamics can be studied in three different scenarios:
\begin{itemize}
\item \emph{One-shot or single-copy \aw{setting}}, where only one copy of \aw{the} joint system-environment is available. 
  In this case, even the notion of expectation value is not meaningful, as well as von Neumann entropy.

\item \emph{Limit of \aw{many runs}}, where there are many copies, but operation are restricted to \aw{single copies of the system}. 
  In this scenario, the notions of expectation value and von Neumann entropy are well defined.  

\item \emph{Limit of \aw{many copies}}, where one has access to arbitrarily many copies of the system and an
  ancilla sub-linear in the number of copies that can globally be processed. 
\end{itemize}

A first observation is that our formalism cannot be applied in the single-shot \aw{setting}, 
since the notion of expectation value (say energy) cannot in general be used.

A relevant point to discuss is what happens when operations are not \EP but unitaries.
In the limit of many copies, \aw{state transformations under unitaries} converge to \EP operations  
and our formalism is recovered.
The limit of many runs is a bit more subtle. 
On \aw{the} one hand, all the thermodynamic inequalities of our formalism are respected, 
since unitaries form a subset of \EP operations.
On the other hand, these thermodynamic inequalities will not be in general saturated.
For instance, our formalism states that the work that can be extracted per system in a 
state $\rho$ is upper bounded by its free energy $W \leqslant F(\rho)$. 
In the \aw{single-copy or many-run settings} with fine-grained information conservation, the law will be respected, 
but there will not be in general a unitary for which $W=F(\rho)$.

A natural open question is to what extent our formalism can be extended 
from considering coarse-grained information conservation operations as the set of allowed operations to unitaries.
In that case, the notion of bound energy would be different and
many more equivalence classes of states would appear.
Something similar already happens in the resource theory of thermodynamics, where instead of having a single monotone as an
``if and only if'' condition for state transformation, infinitely many are required \cite{Brandao15}.
It is far from clear whether under fine-grained information conservation restriction
the energy-entropy diagrams (or a generalization of them) would still be useful.

Let us finally point out that, in the extension of our work to unitary operations in the 
settings of single-shot and many-runs, a consistent formulation of \aw{the} zeroth law would not be possible.
Recall that \aw{the} zeroth law states that a collection of systems are in mutual thermal equilibrium 
if and only if their arbitrary combinations are also in equilibrium. 
It is well known that passive states that are not thermal do not remain passive when sufficiently 
many copies are considered. Hence, for establishing a consistent zeroth law, one has to consider
operations beyond unitaries on a single copy.

\section*{Acknowledgements}
We thank  Ph. Faist, K. \aw{Gaw\c{e}dzki}, R. B. T. Harvey, J. Kimble, S. Maniscalco, 
Ll. Masanes, J. Oppenheim, V. Pellegrini, M. Polini, C. Sparaciari, A. Vulpiani, \aw{N. Yunger Halpern} and R. Zambrini
for useful discussions and comments in both theoretical and experimental aspects of our work. 
We also thank the referees for constructive comments.

We acknowledge financial support from the European Commission 
(FETPRO QUIC H2020-FETPROACT-2014 No. 641122), 
the European Research Council (AdG OSYRIS and AdG IRQUAT), 
the Spanish MINECO (grants no. FIS2008-01236, FISICATEAMO FIS2016-79508-P,
FIS2013-40627-P, FIS2016-86681-P,
and Severo Ochoa Excellence Grant SEV-2015-0522) with the support of FEDER funds, 
the Generalitat de Catalunya (grants no.~2017 SGR 1341, and SGR 875 and 966), CERCA Program/Generalitat  de  Catalunya
and Fundaci{\'o} Privada Cellex.
AR thanks for support from the CELLEX-ICFO-MPQ fellowship.

\bibliographystyle{unsrtnat}

\begin{thebibliography}{56}
\providecommand{\natexlab}[1]{#1}
\providecommand{\url}[1]{\texttt{#1}}
\expandafter\ifx\csname urlstyle\endcsname\relax
  \providecommand{\doi}[1]{doi: #1}\else
  \providecommand{\doi}{doi: \begingroup \urlstyle{rm}\Url}\fi

\bibitem[Gemmer et~al.(2009)Gemmer, Michel, and Mahler]{Gemmer09}
Jochen Gemmer, M.~Michel, and G\"unther Mahler.
\newblock \emph{Quantum Thermodynamics}, volume 748 of \emph{Lecture Notes in
  Physics}.
\newblock Berlin, Heidelberg: Springer, 2009.
\newblock ISBN 9783540705093.
\newblock \doi{10.1007/978-3-540-70510-9}.
\newblock URL \url{http://link.springer.com/book/10.1007%2F978-3-540-70510-9}.

\bibitem[Flanders and Swann(1964)]{FS:heat-and-work}
Michael Flanders and Donald Swann.
\newblock {The First and Second Law of Thermodynamics}.
\newblock In \emph{At the Drop of Another Hat}. Parlophone Ltd., 1964.
\newblock URL \url{https://youtu.be/VnbiVw_1FNs}.

\bibitem[Parrondo et~al.(2015)Parrondo, Horowitz, and Sagawa]{Parrondo15}
Juan M.~R. Parrondo, Jordan~M. Horowitz, and Takahiro Sagawa.
\newblock Thermodynamics of information.
\newblock \emph{Nature Physics}, 11:\penalty0 131--139, 2015.
\newblock \doi{doi:10.1038/nphys3230}.
\newblock URL \url{http://dx.doi.org/10.1038/nphys3230}.

\bibitem[Maxwell(1908)]{Maxwell08}
James~Clerk Maxwell.
\newblock \emph{Theory of Heat}.
\newblock Longmans, Green, and Co.: London, New York, Bombay, 1908.

\bibitem[Leff and Rex(1990)]{Leff90}
Harvey~S. Leff and Andrew~F. Rex.
\newblock \emph{Maxwell's Demon: Entropy, Information, Computing}.
\newblock Princeton University Press, 1990.
\newblock ISBN 9780691605463.
\newblock URL \url{http://press.princeton.edu/titles/4731.html}.

\bibitem[Leff and Rex(2002)]{Leff02}
Harvey~S. Leff and Andrew~F. Rex.
\newblock \emph{Maxwell's Demon 2: Entropy, Classical and Quantum Information,
  Computing}.
\newblock Taylor and Francis, London, 2002.
\newblock ISBN 9780750307598.
\newblock URL \url{https://www.taylorfrancis.com/books/9780750307598}.

\bibitem[Maruyama et~al.(2009)Maruyama, Nori, and Vedral]{Maruyama09}
Koji Maruyama, Franco Nori, and Vlatko Vedral.
\newblock \textit{Colloquium}: {The physics of Maxwell's demon and
  information}.
\newblock \emph{Reviews in Modern Physics}, 81:\penalty0 1--23, Jan 2009.
\newblock \doi{10.1103/RevModPhys.81.1}.
\newblock URL \url{http://link.aps.org/doi/10.1103/RevModPhys.81.1}.

\bibitem[Szilard(1929)]{Szilard29}
Leonard Szilard.
\newblock {\"U}ber die {E}ntropieverminderung in einem thermodynamischen
  {S}ystem bei {E}ingriffen intelligenter {W}esen.
\newblock \emph{Zeitschrift f\"ur Physik}, 53\penalty0 (11):\penalty0 840--856,
  1929.
\newblock \doi{10.1007/BF01341281}.
\newblock URL \url{http://dx.doi.org/10.1007/BF01341281}.

\bibitem[Landauer(1961)]{Landauer61}
Rolf Landauer.
\newblock {Irreversibility and Heat Generation in the Computing Process}.
\newblock \emph{IBM Journal of Research and Development}, 5\penalty0
  (3):\penalty0 183--191, July 1961.
\newblock ISSN 0018-8646.
\newblock \doi{10.1147/rd.53.0183}.

\bibitem[Bennett(1982)]{Bennett82}
Charles~H. Bennett.
\newblock The thermodynamics of computation---a review.
\newblock \emph{International Journal of Theoretical Physics}, 21\penalty0
  (12):\penalty0 905--940, 1982.
\newblock ISSN 1572-9575.
\newblock \doi{10.1007/BF02084158}.
\newblock URL \url{http://dx.doi.org/10.1007/BF02084158}.

\bibitem[Plenio and Vitelli(2001)]{Plenio01}
Martin~B. Plenio and V.~Vitelli.
\newblock The physics of forgetting: Landauer's erasure principle and
  information theory.
\newblock \emph{Contemporary Physics}, 42\penalty0 (1):\penalty0 25--60, 2001.
\newblock \doi{10.1080/00107510010018916}.
\newblock URL \url{http://dx.doi.org/10.1080/00107510010018916}.

\bibitem[del Rio et~al.(2011)del Rio, {\AA}berg, Renner, Dahlsten, and
  Vedral]{Rio11}
Lidia del Rio, Johan {\AA}berg, Renato Renner, Oscar C.~O. Dahlsten, and Vlatko
  Vedral.
\newblock The thermodynamic meaning of negative entropy.
\newblock \emph{Nature}, 474:\penalty0 61--63, 2011.
\newblock \doi{10.1038/nature10123}.
\newblock URL \url{http://dx.doi.org/10.1038/nature10123}.

\bibitem[Reeb and Wolf(2014)]{Reeb14}
David Reeb and Michael~M. Wolf.
\newblock An improved {L}andauer principle with finite-size corrections.
\newblock \emph{New Journal of Physics}, 16\penalty0 (10):\penalty0 103011,
  2014.
\newblock \doi{10.1088/1367-2630/16/10/103011}.
\newblock URL
  \url{http://iopscience.iop.org/article/10.1088/1367-2630/16/10/103011}.

\bibitem[Shannon(1948)]{Shannon48}
Claude~E. Shannon.
\newblock A mathematical theory of communication.
\newblock \emph{Bell System Technical Journal}, 27\penalty0 (3):\penalty0
  379--423, July 1948.
\newblock ISSN 0005-8580.
\newblock \doi{10.1002/j.1538-7305.1948.tb01338.x}.

\bibitem[Nielsen and Chuang(2000)]{Nielsen00}
Michael~A. Nielsen and Isaac~L. Chuang.
\newblock \emph{Quantum Computation and Quantum Information}.
\newblock Cambridge: Cambridge University Press, 2000.
\newblock ISBN 978-0521635035.

\bibitem[Cover and Thomas(2005)]{Cover05}
Thomas~M. Cover and Joy~A. Thomas.
\newblock \emph{Elements of Information Theory}.
\newblock John Wiley and Sons, Inc., 2 edition, 2005.
\newblock ISBN 9780471748823.
\newblock \doi{10.1002/047174882X}.
\newblock URL \url{http://dx.doi.org/10.1002/047174882X}.

\bibitem[Alicki and Fannes(2013)]{Alicki13}
Robert Alicki and Mark Fannes.
\newblock Entanglement boost for extractable work from ensembles of quantum
  batteries.
\newblock \emph{Physical Review E}, 87:\penalty0 042123, Apr 2013.
\newblock \doi{10.1103/PhysRevE.87.042123}.
\newblock URL \url{http://link.aps.org/doi/10.1103/PhysRevE.87.042123}.

\bibitem[Perarnau-Llobet et~al.(2015)Perarnau-Llobet, Hovhannisyan, Huber,
  Skrzypczyk, Brunner, and Ac\'{\i}n]{Marti15}
Mart\'{\i} Perarnau-Llobet, Karen~V. Hovhannisyan, Marcus Huber, Paul
  Skrzypczyk, Nicolas Brunner, and Antonio Ac\'{\i}n.
\newblock {Extractable Work from Correlations}.
\newblock \emph{Physical Review X}, 5:\penalty0 041011, Oct 2015.
\newblock \doi{10.1103/PhysRevX.5.041011}.
\newblock URL \url{http://link.aps.org/doi/10.1103/PhysRevX.5.041011}.

\bibitem[Bera et~al.(2017{\natexlab{a}})Bera, Riera, Lewenstein, and
  Winter]{Bera16}
Manabendra~Nath Bera, Arnau Riera, Maciej Lewenstein, and Andreas Winter.
\newblock Generalized laws of thermodynamics in the presence of correlations.
\newblock \emph{Nature Communications}, 8:\penalty0 2180, 2017{\natexlab{a}}.
\newblock \doi{doi:10.1038/s41467-017-02370-x}.
\newblock URL \url{https://www.nature.com/articles/s41467-017-02370-x}.

\bibitem[Short(2011)]{Short11}
Anthony~J. Short.
\newblock Equilibration of quantum systems and subsystems.
\newblock \emph{New Journal of Physics}, 13\penalty0 (5):\penalty0 053009,
  2011.
\newblock \doi{10.1088/1367-2630/13/5/053009}.
\newblock URL
  \url{http://iopscience.iop.org/article/10.1088/1367-2630/13/5/053009}.

\bibitem[Goold et~al.(2016)Goold, Huber, Riera, del Rio, and
  Skrzypczyk]{Goold16}
John Goold, Marcus Huber, Arnau Riera, Lidia del Rio, and Paul Skrzypczyk.
\newblock The role of quantum information in thermodynamics: a topical review.
\newblock \emph{Journal of Physics A: Mathematical and Theoretical},
  49\penalty0 (14):\penalty0 143001, 2016.
\newblock \doi{10.1088/1751-8113/49/14/143001}.
\newblock URL
  \url{http://iopscience.iop.org/article/10.1088/1751-8113/49/14/143001}.

\bibitem[del Rio et~al.(2016)del Rio, Hutter, Renner, and Wehner]{Rio16}
L\'{\i}dia del Rio, Adrian Hutter, Renato Renner, and Stephanie Wehner.
\newblock Relative thermalization.
\newblock \emph{Physical Review E}, 94:\penalty0 022104, Aug 2016.
\newblock \doi{10.1103/PhysRevE.94.022104}.
\newblock URL \url{http://link.aps.org/doi/10.1103/PhysRevE.94.022104}.

\bibitem[Gogolin and Eisert(2016)]{Gogolin16}
Christian Gogolin and Jens Eisert.
\newblock Equilibration, thermalisation, and the emergence of statistical
  mechanics in closed quantum systems.
\newblock \emph{Reports on Progress in Physics}, 79\penalty0 (5):\penalty0
  056001, 2016.
\newblock \doi{10.1088/0034-4885/79/5/056001}.
\newblock URL
  \url{http://iopscience.iop.org/article/10.1088/0034-4885/79/5/056001}.

\bibitem[Popescu et~al.(2006)Popescu, Short, and Winter]{Popescu06}
Sandu Popescu, Anthony~J. Short, and Andreas Winter.
\newblock Entanglement and the foundations of statistical mechanics.
\newblock \emph{Nature Physics}, 2:\penalty0 745--758, 2006.
\newblock \doi{10.1038/nphys444}.
\newblock URL \url{http://dx.doi.org/10.1038/nphys444}.

\bibitem[Brand\~ao et~al.(2013)Brand\~ao, Horodecki, Oppenheim, Renes, and
  Spekkens]{Brandao13}
Fernando G. S.~L. Brand\~ao, Micha\l{} Horodecki, Jonathan Oppenheim, Joseph~M.
  Renes, and Robert~W. Spekkens.
\newblock Resource theory of quantum states out of thermal equilibrium.
\newblock \emph{Physical Review Letters}, 111:\penalty0 250404, Dec 2013.
\newblock \doi{10.1103/PhysRevLett.111.250404}.
\newblock URL \url{http://link.aps.org/doi/10.1103/PhysRevLett.111.250404}.

\bibitem[Dahlsten et~al.(2011)Dahlsten, Renner, Rieper, and Vedral]{Dahlsten11}
Oscar C.~O. Dahlsten, Renato Renner, Elisabeth Rieper, and Vlatko Vedral.
\newblock Inadequacy of von {N}eumann entropy for characterizing extractable
  work.
\newblock \emph{New Journal of Physics}, 13\penalty0 (5):\penalty0 053015,
  2011.
\newblock \doi{10.1088/1367-2630/13/5/053015}.
\newblock URL
  \url{http://iopscience.iop.org/article/10.1088/1367-2630/13/5/053015}.

\bibitem[{\AA}berg(2013)]{Aberg13}
Johan {\AA}berg.
\newblock Truly work-like work extraction via a single-shot analysis.
\newblock \emph{Nature Communications}, 4:\penalty0 1925, 2013.
\newblock \doi{10.1038/ncomms2712}.
\newblock URL \url{http://dx.doi.org/10.1038/ncomms2712}.

\bibitem[Horodecki and Oppenheim(2013)]{Horodecki13}
Micha\l{} Horodecki and Jonathan Oppenheim.
\newblock Fundamental limitations for quantum and nanoscale thermodynamics.
\newblock \emph{Nature Communications}, 4:\penalty0 2059, 2013.
\newblock \doi{10.1038/ncomms3059}.
\newblock URL \url{http://dx.doi.org/10.1038/ncomms3059}.

\bibitem[Skrzypczyk et~al.(2014)Skrzypczyk, Short, and Popescu]{Skrzypczyk14}
Paul Skrzypczyk, Anthony~J. Short, and Sandu Popescu.
\newblock Work extraction and thermodynamics for individual quantum systems.
\newblock \emph{Nature Communications}, 5:\penalty0 4185, 2014.
\newblock \doi{10.1038/ncomms5185}.
\newblock URL \url{http://dx.doi.org/10.1038/ncomms5185}.

\bibitem[Brandao et~al.(2015)Brandao, Horodecki, Ng, Oppenheim, and
  Wehner]{Brandao15}
Fernando G. S.~L. Brandao, Micha\l{} Horodecki, Nelly Ng, Jonathan Oppenheim,
  and Stephanie Wehner.
\newblock The second laws of quantum thermodynamics.
\newblock \emph{Proceedings of the National Academy of Sciences}, 112:\penalty0
  3275--3279, 2015.
\newblock \doi{doi:10.1073/pnas.1411728112}.
\newblock URL \url{http://www.pnas.org/content/112/11/3275}.

\bibitem[\ifmmode \acute{C}\else \'{C}\fi{}wikli\ifmmode~\acute{n}\else
  \'{n}\fi{}ski et~al.(2015)\ifmmode \acute{C}\else
  \'{C}\fi{}wikli\ifmmode~\acute{n}\else \'{n}\fi{}ski,
  Studzi\ifmmode~\acute{n}\else \'{n}\fi{}ski, Horodecki, and
  Oppenheim]{Cwiklinski15}
Piotr \ifmmode \acute{C}\else \'{C}\fi{}wikli\ifmmode~\acute{n}\else
  \'{n}\fi{}ski, Micha\l{} Studzi\ifmmode~\acute{n}\else \'{n}\fi{}ski,
  Micha\l{} Horodecki, and Jonathan Oppenheim.
\newblock {Limitations on the Evolution of Quantum Coherences: Towards Fully
  Quantum Second Laws of Thermodynamics}.
\newblock \emph{Physical Review Letters}, 115:\penalty0 210403, Nov 2015.
\newblock \doi{10.1103/PhysRevLett.115.210403}.
\newblock URL \url{http://link.aps.org/doi/10.1103/PhysRevLett.115.210403}.

\bibitem[Lostaglio et~al.(2015{\natexlab{a}})Lostaglio, Korzekwa, Jennings, and
  Rudolph]{Lostaglio15}
Matteo Lostaglio, Kamil Korzekwa, David Jennings, and Terry Rudolph.
\newblock {Quantum Coherence, Time-Translation Symmetry, and Thermodynamics}.
\newblock \emph{Physical Review X}, 5:\penalty0 021001, Apr 2015{\natexlab{a}}.
\newblock \doi{10.1103/PhysRevX.5.021001}.
\newblock URL \url{http://link.aps.org/doi/10.1103/PhysRevX.5.021001}.

\bibitem[Egloff et~al.(2015)Egloff, Dahlsten, Renner, and Vedral]{Egloff15}
Dario Egloff, Oscar C.~O. Dahlsten, Renato Renner, and Vlatko Vedral.
\newblock A measure of majorization emerging from single-shot statistical
  mechanics.
\newblock \emph{New Journal of Physics}, 17\penalty0 (7):\penalty0 073001,
  2015.
\newblock \doi{10.1088/1367-2630/17/7/073001}.
\newblock URL
  \url{http://iopscience.iop.org/article/10.1088/1367-2630/17/7/073001}.

\bibitem[Lostaglio et~al.(2015{\natexlab{b}})Lostaglio, Jennings, and
  Rudolph]{Lostaglio15a}
Matteo Lostaglio, David Jennings, and Terry Rudolph.
\newblock Description of quantum coherence in thermodynamic processes requires
  constraints beyond free energy.
\newblock \emph{Nature Communications}, 6:\penalty0 6383, 2015{\natexlab{b}}.
\newblock \doi{10.1038/ncomms7383}.
\newblock URL \url{http://dx.doi.org/10.1038/ncomms7383}.

\bibitem[Bera et~al.(2017{\natexlab{b}})Bera, Ac\'{i}n, Ku\'{s}, Mitchell, and
  Lewenstein]{BeraPhilo16}
Manabendra~Nath Bera, Antonio Ac\'{i}n, Marek Ku\'{s}, Morgan Mitchell, and
  Maciej Lewenstein.
\newblock Randomness in quantum mechanics: Philosophy, physics and technology.
\newblock \emph{Reports on Progress in Physics}, 80\penalty0 (12):\penalty0
  124001, 2017{\natexlab{b}}.
\newblock \doi{10.1088/1361-6633/aa8731}.
\newblock URL \url{http://iopscience.iop.org/article/10.1088/1361-6633/aa8731}.

\bibitem[Hulpke et~al.(2006)Hulpke, Poulsen, Sanpera, Sen(De), Sen, and
  Lewenstein]{Hulpke2006}
Florian Hulpke, Uffe~V. Poulsen, Anna Sanpera, Aditi Sen(De), Ujjwal Sen, and
  Maciej Lewenstein.
\newblock Unitarity as preservation of entropy and entanglement in quantum
  systems.
\newblock \emph{Foundations of Physics}, 36\penalty0 (4):\penalty0 477--499,
  2006.
\newblock ISSN 1572-9516.
\newblock \doi{10.1007/s10701-005-9035-7}.
\newblock URL \url{http://dx.doi.org/10.1007/s10701-005-9035-7}.

\bibitem[Jarzynski(2000)]{Jarzynski2000}
Christopher Jarzynski.
\newblock Hamiltonian derivation of a detailed fluctuation theorem.
\newblock \emph{Journal of Statistical Physics}, 98\penalty0 (1):\penalty0
  77--102, Jan 2000.
\newblock ISSN 1572-9613.
\newblock \doi{10.1023/A:1018670721277}.
\newblock URL \url{https://doi.org/10.1023/A:1018670721277}.

\bibitem[Esposito et~al.(2009)Esposito, Harbola, and Mukamel]{Esposito09}
Massimiliano Esposito, Upendra Harbola, and Shaul Mukamel.
\newblock Nonequilibrium fluctuations, fluctuation theorems, and counting
  statistics in quantum systems.
\newblock \emph{Rev. Mod. Phys.}, 81:\penalty0 1665--1702, Dec 2009.
\newblock \doi{10.1103/RevModPhys.81.1665}.
\newblock URL \url{https://link.aps.org/doi/10.1103/RevModPhys.81.1665}.

\bibitem[Esposito et~al.(2010)Esposito, Lindenberg, and Van~den
  Broeck]{Esposito10}
Massimiliano Esposito, Katja Lindenberg, and Christian Van~den Broeck.
\newblock Entropy production as correlation between system and reservoir.
\newblock \emph{New Journal of Physics}, 12\penalty0 (1):\penalty0 013013,
  2010.
\newblock \doi{10.1088/1367-2630/12/1/013013}.
\newblock URL
  \url{http://iopscience.iop.org/article/10.1088/1367-2630/12/1/013013}.

\bibitem[Strasberg et~al.(2017)Strasberg, Schaller, Brandes, and
  Esposito]{Strasberg17}
Philipp Strasberg, Gernot Schaller, Tobias Brandes, and Massimiliano Esposito.
\newblock {Quantum and Information Thermodynamics: A Unifying Framework Based
  on Repeated Interactions}.
\newblock \emph{Physical Review X}, 7:\penalty0 021003, Apr 2017.
\newblock \doi{10.1103/PhysRevX.7.021003}.
\newblock URL \url{https://link.aps.org/doi/10.1103/PhysRevX.7.021003}.

\bibitem[Sagawa(2012)]{Sagawa2012}
Takahiro Sagawa.
\newblock \emph{Second law-like inequalities with quantum relative entropy: an
  introduction. {Lectures on Quantum Computing, Thermodynamics and Statistical
  Physics}}, volume~8.
\newblock World Scientific, Singapore, 2012.
\newblock \doi{10.1142/9789814425193_0003}.
\newblock URL \url{https://doi.org/10.1142/9789814425193_0003}.

\bibitem[Puglisi et~al.(2017)Puglisi, Sarracino, and Vulpiani]{Vulpiani2017}
A.~Puglisi, A.~Sarracino, and A.~Vulpiani.
\newblock Temperature in and out of equilibrium: A review of concepts, tools
  and attempts.
\newblock \emph{Physics Reports}, 709-710:\penalty0 1--60, 2017.
\newblock ISSN 0370-1573.
\newblock \doi{https://doi.org/10.1016/j.physrep.2017.09.001}.
\newblock URL
  \url{https://www.sciencedirect.com/journal/physics-reports/vol/709/suppl/C}.

\bibitem[Sparaciari et~al.(2017)Sparaciari, Oppenheim, and Fritz]{Sparaciari16}
Carlo Sparaciari, Jonathan Oppenheim, and Tobias Fritz.
\newblock Resource theory for work and heat.
\newblock \emph{Physical Review A}, 96:\penalty0 052112, Nov 2017.
\newblock \doi{10.1103/PhysRevA.96.052112}.
\newblock URL \url{https://link.aps.org/doi/10.1103/PhysRevA.96.052112}.

\bibitem[Wilming et~al.(2017)Wilming, Gallego, and Eisert]{Wilming2017}
Henrik Wilming, Rodrigo Gallego, and Jens Eisert.
\newblock Axiomatic characterization of the quantum relative entropy and free
  energy.
\newblock \emph{Entropy}, 19\penalty0 (6), 2017.
\newblock ISSN 1099-4300.
\newblock \doi{10.3390/e19060241}.
\newblock URL \url{http://www.mdpi.com/1099-4300/19/6/241}.

\bibitem[M{\"u}ller(2017)]{Mueller2017}
Markus~P. M{\"u}ller.
\newblock Correlating thermal machines and the second law at the nanoscale.
\newblock arXiv[quant-ph]:1707.03451, 2017.
\newblock URL \url{https://arxiv.org/abs/1707.03451}.

\bibitem[Jaynes(1957{\natexlab{a}})]{Jaynes57a}
Edwin~T. Jaynes.
\newblock {Information Theory and Statistical Mechanics}.
\newblock \emph{Physical Review}, 106:\penalty0 620--630, May
  1957{\natexlab{a}}.
\newblock \doi{10.1103/PhysRev.106.620}.
\newblock URL \url{https://link.aps.org/doi/10.1103/PhysRev.106.620}.

\bibitem[Jaynes(1957{\natexlab{b}})]{Jaynes57b}
Edwin~T. Jaynes.
\newblock {Information Theory and Statistical Mechanics. II}.
\newblock \emph{Physical Review}, 108:\penalty0 171--190, Oct
  1957{\natexlab{b}}.
\newblock \doi{10.1103/PhysRev.108.171}.
\newblock URL \url{https://link.aps.org/doi/10.1103/PhysRev.108.171}.

\bibitem[Pusz and Woronowicz(1978)]{Pusz78}
Wies{\l}aw Pusz and Stanis{\l}aw~L. Woronowicz.
\newblock Passive states and {KMS} states for general quantum systems.
\newblock \emph{Communications in Mathematical Physics}, 58\penalty0
  (3):\penalty0 273--290, 1978.
\newblock ISSN 1432-0916.
\newblock \doi{10.1007/BF01614224}.
\newblock URL \url{http://dx.doi.org/10.1007/BF01614224}.

\bibitem[Lenard(1978)]{Lenard78}
Andrew Lenard.
\newblock Thermodynamical proof of the {G}ibbs formula for elementary quantum
  systems.
\newblock \emph{Journal of Statistical Physics}, 19\penalty0 (6):\penalty0
  575--586, 1978.
\newblock ISSN 1572-9613.
\newblock \doi{10.1007/BF01011769}.
\newblock URL \url{http://dx.doi.org/10.1007/BF01011769}.

\bibitem[Masanes and Oppenheim(2017)]{Masanes2017}
Llu\'is Masanes and Jonathan Oppenheim.
\newblock A general derivation and quantification of the third law of
  thermodynamics.
\newblock \emph{Nature Communications}, 8:\penalty0 14538, 2017.
\newblock \doi{10.1038/ncomms14538}.
\newblock URL \url{http://dx.doi.org/10.1038/ncomms14538}.

\bibitem[{Yunger Halpern} and Renes(2016)]{Thermo-resource-charges-1}
Nicole {Yunger Halpern} and Joseph~M. Renes.
\newblock Beyond heat baths: Generalized resource theories for small-scale
  thermodynamics.
\newblock \emph{Physical Review E}, 93:\penalty0 022126, Feb 2016.
\newblock \doi{10.1103/PhysRevE.93.022126}.
\newblock URL \url{https://link.aps.org/doi/10.1103/PhysRevE.93.022126}.

\bibitem[{Yunger Halpern}(2018)]{Thermo-resource-charges-2}
Nicole {Yunger Halpern}.
\newblock {Beyond heat baths II: Framework for generalized thermodynamic
  resource theories}.
\newblock \emph{Journal of Physics A: Mathematical and Theoretical},
  51:\penalty0 094001, 2018.
\newblock \doi{10.1088/1751-8121/aaa62f}.
\newblock URL
  \url{http://iopscience.iop.org/article/10.1088/1751-8121/aaa62f/meta}.

\bibitem[Lostaglio et~al.(2017)Lostaglio, Jennings, and Rudolph]{Lostaglio2017}
Matteo Lostaglio, David Jennings, and Terry Rudolph.
\newblock Thermodynamic resource theories, non-commutativity and maximum
  entropy principles.
\newblock \emph{New Journal of Physics}, 19\penalty0 (4):\penalty0 043008,
  2017.
\newblock \doi{10.1088/1367-2630/aa617f}.
\newblock URL \url{http://iopscience.iop.org/article/10.1088/1367-2630/aa617f}.

\bibitem[Guryanova et~al.(2016)Guryanova, Popescu, Short, Silva, and
  Skrzypczyk]{Guryanova2016}
Yelena Guryanova, Sandu Popescu, Anthony~J. Short, Ralph Silva, and Paul
  Skrzypczyk.
\newblock Thermodynamics of quantum systems with multiple conserved quantities.
\newblock \emph{Nature Communications}, 7:\penalty0 12049, 2016.
\newblock \doi{10.1038/ncomms12049}.
\newblock URL \url{http://dx.doi.org/10.1038/ncomms12049}.

\bibitem[{Yunger Halpern} et~al.(2016){Yunger Halpern}, Faist, Oppenheim, and
  Winter]{Halpern2016}
Nicole {Yunger Halpern}, Philippe Faist, Jonathan Oppenheim, and Andreas
  Winter.
\newblock Microcanonical and resource-theoretic derivations of the thermal
  state of a quantum system with noncommuting charges.
\newblock \emph{Nature Communications}, 7:\penalty0 12051, 2016.
\newblock \doi{10.1038/ncomms12051}.
\newblock URL \url{http://dx.doi.org/10.1038/ncomms12051}.

\bibitem[Khanian et~al.(2018)Khanian, Nath~Bera, Riera, Lewenstein, and
  Winter]{MultipleChargesThms}
Zahra~B. Khanian, Manabendra Nath~Bera, Arnau Riera, Maciej Lewenstein, and
  Andreas Winter.
\newblock Resource theory of work and heat and everything else: basing
  thermodynamics of multiple non-commuting conserved quantities on an
  asymptotic equivalence principle.
\newblock In preparation, 2018.

\end{thebibliography}

\vfill\pagebreak

\appendix

\section{Alternative formulation of the \protect\\ Clausius statement for the second law}
\label{App:C}
\begin{lem}[Clausius statement]
No \aw{iso-entropic} equilibration process is possible whose \aw{\emph{sole}} result is the 
transfer of bound energy (i.e. heat) from an equilibrium state with inverse temperature 
$\beta_C$ to another equilibrium state with \aw{inverse temperature $\beta_H < \beta_C$}.
\end{lem}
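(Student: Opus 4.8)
The plan is to deduce the statement from the generalized Clausius inequality~\eqref{eq:generalized_Clausius}, once the phrase ``sole result'' has been unpacked. Label the cold body $C$ (intrinsic inverse temperature $\beta_C$) and the hot body $H$ (inverse temperature $\beta_H<\beta_C$, so that $T_H=\beta_H^{-1}>\beta_C^{-1}=T_C$). By hypothesis the process $\rho_{CH}\to\rho_{CH}'$ is iso-entropic; both $C$ and $H$ are in completely passive (equilibrium) states \emph{before and after}; and nothing else happens, so no external work is performed ($W=0$) and the final state carries no correlations between $C$ and $H$ (since creating correlations would itself be an additional ``result''), whence $\Delta I(C:H)=0$. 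Because the free energy vanishes on completely passive states, $\Delta F_C=\Delta F_H=0$ as well.

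Feeding these four vanishing quantities into~\eqref{eq:generalized_Clausius}, applied with $A=H$ and $B=C$, collapses that inequality to $(T_C-T_H)\,\Delta S_H\ge 0$. I would then argue by contradiction: suppose heat did flow from $C$ to $H$, i.e.\ $\Delta Q_H>0$. Since heat is the change of bound energy (Definition~\ref{def:heat}) and, for a fixed Hamiltonian, bound energy is strictly monotone in entropy ($\textrm{sign}(\Delta B)=\textrm{sign}(\Delta S)$, cf.~property~(P2)), this forces $\Delta S_H>0$; but then $(T_C-T_H)\,\Delta S_H<0$ because $T_C<T_H$, contradicting the displayed inequality. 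Hence $\Delta Q_H>0$ is impossible, which is exactly the claim.

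A self-contained variant avoids~\eqref{eq:generalized_Clausius} altogether: the energy balance~\eqref{eq:energy_balance} with $W=0$ and $\Delta F_C=\Delta F_H=0$ gives $\Delta Q_C=-\Delta Q_H$, and with the product form of the endpoints the iso-entropy condition gives $\Delta S_C=-\Delta S_H$; set $\delta:=\Delta S_H$. Using the integral representation~\eqref{eq:heat-integral} and the monotonicity of $T(s)$ one gets $\Delta Q_H\ge T(S_H)\,\delta=T_H\,\delta$ and $\Delta Q_C\ge -T(S_C)\,\delta=-T_C\,\delta$, so $0=\Delta Q_C+\Delta Q_H\ge (T_H-T_C)\,\delta$, forcing $\delta\le 0$, i.e.\ no heat passes from the cold to the hot body. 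I expect the only real subtlety to be the modelling step rather than any calculation: one must justify that ``sole result'' indeed means the simultaneous vanishing of $W$, of $\Delta F_C$ and $\Delta F_H$ (both endpoints genuinely at equilibrium), and of $\Delta I(C:H)$ --- these being precisely the three loopholes on the right-hand side of~\eqref{eq:generalized_Clausius} through which the classical Clausius statement can otherwise be circumvented. One should also record that the argument is run on the positive-temperature branch of the thermal boundary, where~(P2) makes $T(s)$ strictly increasing, which is implicit in comparing a ``hotter'' with a ``cooler'' equilibrium state.
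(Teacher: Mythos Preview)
Your proposal is correct. Both variants succeed, and your ``self-contained variant'' is in fact very close in spirit to the paper's own proof, just made more rigorous: the paper argues infinitesimally, writing $dE(\gamma)=\beta^{-1}\,dS(\gamma)$ for completely passive states, and observes that transferring an increment of entropy $dS$ from $C$ to $H$ costs $dS/\beta_H$ on the hot side but only releases $dS/\beta_C<dS/\beta_H$ on the cold side, so the total energy must increase, which is impossible without external work. Your use of the integral representation~\eqref{eq:heat-integral} together with the monotonicity of $T(s)$ is the finite-increment version of exactly this computation, and handles non-infinitesimal transfers cleanly where the paper's differential phrasing is somewhat heuristic.

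Your first route, by contrast, is genuinely different from the paper's: rather than redoing the energy-balance argument, you reduce the appendix statement to the already-established generalized Clausius inequality~\eqref{eq:generalized_Clausius} from the main text by reading ``sole result'' as the simultaneous vanishing of $W$, $\Delta F_C$, $\Delta F_H$, and $\Delta I(C{:}H)$. This is economical and makes explicit the logical dependence between the two formulations; it also foregrounds the three ``loopholes'' (work, stored free energy, correlations) in a way the paper's direct proof does not. The trade-off is that it imports the hypotheses under which~\eqref{eq:generalized_Clausius} was derived (in particular the bound $\Delta Q\geq T\Delta S$ from~\eqref{eq:heat-defs-bounds}, which uses that the initial marginals are thermal), whereas the paper's appendix proof is self-contained and works directly from the min-energy principle. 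Your remark about restricting to the positive-temperature branch is well taken and applies equally to the paper's argument.
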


\begin{proof}
In order to prove it, we need to show that the \aw{iso-entropic} equilibration process 
cannot lead to \aw{a negative} $\Delta \beta = \beta_H-\beta_C$. 
\aw{Further, that} any \aw{iso-entropic} process that leads to such increase must \aw{require} a supply of energy. 
First consider the case where $\beta_C = \beta_H$ ($\Delta \beta=0$), which means the corresponding states 
$\gamma_C$ and $\gamma_H$ are in equilibrium between each other and jointly \aw{in} the min-energy state. 
The min-energy principle says that any \aw{iso-entropic} transformation to make $\beta_C \neq \beta_H$ is 
bound to increase the sum of their individual min-energy. Therefore it requires \aw{an} inflow of energy, 
which is nothing but introduction of work.  

Note that for a completely passive state $\gamma$ with given Hamiltonian $H$ and $\beta$, the change in 
bound energy, which is nothing but heat in our definition, due to an infinitesimal change in entropy $d S(\gamma)$, is given by 
\begin{align}
 d E(\gamma)= \frac{1}{\beta} \ d S(\gamma).
\end{align}
Therefore \aw{the} larger $\beta$, \aw{the} smaller the change in internal energy for a fixed infinitesimal 
change in entropy \aw{of} the state. Now \aw{assume by contradiction that} $\beta_C > \beta_H$. 
Any flow of heat from $\gamma_C$ to $\gamma_H$ will lead to a reduction of entropy in the former. 
That will also lead to an equal increase of of the same in the latter. A very small amount of 
bound energy will result in a infinitesimal entropy flow, say $d S$ from $\gamma_C$ to $\gamma_H$. 
However the change in internal energy ($d E(\gamma_{C/H})=\frac{d S}{\beta_{C/H}}$) would not be 
equal and that is $-d E(\gamma_C) < d E(\gamma_H)$, for $\beta_C > \beta_H$. As a consequence, 
in this \aw{iso-entropic} process the overall energy is bound to increase, which is not possible 
without, again, an influx of energy or work. Therefore, without external work the process will never take place spontaneously.
\end{proof}

\section{Alternative formulation of the Kelvin-Planck statement for the second law}
\label{App:KP}
\begin{lem}[Kelvin-Planck statement]
No \aw{iso-entropic} equilibration process is possible whose \aw{\emph{sole}} result is the absorption 
of bound energy (heat) from an equilibrium state and its complete conversion into work.
\end{lem}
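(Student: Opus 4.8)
The plan is to obtain this statement as a direct corollary of the work-extraction bound, Lemma~\ref{lem:work-extraction}, specialised to an initial state that is already in equilibrium. The scenario to consider is one in which the only system not restored to its initial configuration is a single completely passive reservoir $B$ in the state $\gamma_B=\gamma(H_B,\beta_B)$, accompanied by a work storage (a weight, with sharply defined energy and hence no entropy) and, if one insists on a cyclic working medium, a system that starts and ends in the same state, uncorrelated from everything else. Since the global process is entropy preserving and neither the weight nor the working medium changes its entropy, the induced transformation on $B$ is iso-entropic, $\gamma_B\to\rho_B'$ with $S(\rho_B')=S(\gamma_B)$, and global energy conservation gives that the extracted work equals $W=-\Delta E_B=E(\gamma_B)-E(\rho_B')$.

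First I would apply Lemma~\ref{lem:work-extraction} to the joint state of $B$ and the working medium, i.e.\ all systems except the weight: since $B$ is in equilibrium, $F(\gamma_B)=0$, and if a working medium at the same intrinsic temperature $\beta_B$ is present the joint free energy is still $0$ by property {\it (P7)}, so $W\le 0$; no positive work can be extracted, with $W=0$ if and only if $\rho_B'=\gamma(\rho_B')=\gamma_B$, that is, the process does nothing. Second, I would note that for a fixed Hamiltonian the bound energy depends only on the entropy, and is monotone in it, so $B(\rho_B')=B(\gamma_B)$ and hence the heat dissipated by $B$ vanishes, $\Delta Q_B=B(\rho_B')-B(\gamma_B)=0$: there is neither net work nor net heat, and the only iso-entropic process acting solely on an equilibrium state is the trivial one. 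An alternative, infinitesimal argument mirroring Appendix~\ref{App:C} uses the relation $\dd E(\gamma)=\beta^{-1}\dd S(\gamma)$ together with $\dd S(\gamma_B)=0$ to conclude directly that $\dd E(\gamma_B)=0$, so no energy can leave the reservoir along an iso-entropic path.

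The only point requiring care is making the phrase ``sole result'' precise: one must insist that the working medium (if any) is genuinely returned to its initial state and left uncorrelated with $B$ and the weight, so that the marginal process on $B$ is exactly entropy preserving and no hidden source of free energy is smuggled in. Once this is fixed, the statement follows immediately from results already established, so I do not anticipate a substantive obstacle; the content is essentially that equilibrium states carry zero free energy and that bound energy is a function of the entropy alone.
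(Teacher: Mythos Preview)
Your argument is correct and gives a clean proof of the single-reservoir reading of the statement: once you have established that the marginal process on $B$ is iso-entropic, $\gamma_B\to\rho_B'$ with $S(\rho_B')=S(\gamma_B)$, the min-energy principle (equivalently Lemma~\ref{lem:work-extraction} applied to $B$ alone) gives $W=E(\gamma_B)-E(\rho_B')\le F(\gamma_B)=0$. The detour through the joint free energy and property {\it (P7)} is unnecessary and, as written, would force the working medium itself to be completely passive at $\beta_B$; it is cleaner to apply the work-extraction bound directly to $B$, which your first paragraph already sets up.

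The paper, however, proves a different and in a sense sharper reading. It takes \emph{two} completely passive reservoirs $\gamma_A,\gamma_B$ at inverse temperatures $\beta_A<\beta_B$ together with a weight, and lets them equilibrate iso-entropically to $\gamma_{AB}$. In this setting positive work \emph{can} be extracted, and the content of the lemma is that the heat $\Delta Q_A$ drawn from the hot body cannot be entirely converted into work: by Lemma~\ref{lm:betaAB} the common final inverse temperature lies strictly between $\beta_A$ and $\beta_B$, so entropy (and hence bound energy) must flow into the cold body, giving $W\le \Delta Q_A-\Delta Q_B$ with $\Delta Q_B>0$. Your route buys simplicity and reuses Lemma~\ref{lem:work-extraction} verbatim; the paper's two-reservoir argument buys the physically more interesting statement that even when work extraction is possible some heat must be rejected, which feeds directly into the Carnot discussion that follows.
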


\begin{proof}
To prove the Kelvin-Planck statement, 
we consider \aw{the} following two \CP
states $\gamma_A$ and $\gamma_B$ with inverse temperatures $\beta_A$
and $\beta_B$, respectively, 
which undergo an 
\aw{iso-entropic} equilibration transformation as
\begin{align}
  \gamma_A \otimes \gamma_B \otimes \proj{0}_W \xrightarrow{\Lambda^{EP}_{ABW}} \gamma_{AB} \otimes \proj{W}_W,
\end{align}
\aw{where} $\gamma_{AB}$ is the final joint equilibrium state with \aw{inverse temperature} 
$\beta_{AB}$. Since the initial states are \CP states, a non-zero $W$ can only result from heat transfer. 
Let us say $\beta_A < \beta_B$, thus $\beta_A < \beta_{AB} < \beta_B$. In this case the heat \aw{flows out} 
from $A$, say by an amount $\Delta Q_A$, with an associated decrease in its entropy. 
Again the information conservation of whole process guarantees an entropy increase in $B$. 
Therefore, there has to be an associated increase in bound energy (heat) content in $\gamma_B$. 
As a result, a part of $\Delta Q_A$ could be converted to work and that is 
\begin{align}
  W\leqslant \Delta Q_A - \Delta Q_B.
\end{align}
Note \aw{that} any transfer of heat, i.e.~$-\Delta Q_A >0$, also bounds $\Delta Q_B >0$. 
As a consequence, heat cannot be converted into work completely. 
\end{proof}

\end{document}